\title{Reducing Path TSP to TSP} 
\author{
Vera Traub\thanks{
Research Institute for Discrete Mathematics and Hausdorff Center for Mathematics, University of Bonn, Bonn, Germany.
Email: \href{mailto:traub@or.uni-bonn.de}%
{traub@or.uni-bonn.de}, \href{mailto:vygen@or.uni-bonn.de}%
{vygen@or.uni-bonn.de}.
This research was initiated while the first two authors visited FIM at ETH Zurich.
}
\and
Jens Vygen\footnotemark[1]
\and
Rico Zenklusen\thanks{
Department of Mathematics, ETH Zurich, Zurich, Switzerland.
Email: \href{mailto:ricoz@math.ethz.ch}%
{ricoz@math.ethz.ch}.
Research supported in part by the Swiss National Science Foundation grants 200021\_184622 and 200021\_165866.}
}
\date{July 24, 2019}
\newtheorem{theorem}{Theorem}
\newtheorem{lemma}[theorem]{Lemma}
\newtheorem{definition}[theorem]{Definition}
\newtheorem{corollary}[theorem]{Corollary}
\newtheorem{claim}[theorem]{Claim}
\newcommand{\labeltarget}[1]{\Hy@raisedlink{\hypertarget{#1}{}}}
\setlist[enumerate]{nosep,topsep=0em}
\setlist[enumerate,1]{label=(\roman*), leftmargin=2.2em}
\setlist[enumerate,2]{label=(\alph*)}
\setlist[itemize]{nosep,topsep=0.1em}
\newcommand\appendtographicspath[1]{%
  \g@addto@macro\Ginput@path{#1}%
}
\let\@fnsymbol\@arabic
\patchcmd{\@algocf@start}{%
  \begin{lrbox}{\algocf@algobox}%
}{%
  \rule{0.025\textwidth}{\z@}%
  \begin{lrbox}{\algocf@algobox}%
  \begin{minipage}{0.95\textwidth}%
}{}{}
\patchcmd{\@algocf@finish}{%
  \end{lrbox}%
}{%
  \end{minipage}%
  \end{lrbox}%
}{}{}
\definecolor{darkblue}{rgb}{0,0,0.38}
\definecolor{darkred}{rgb}{0.6,0,0}
\definecolor{darkgreen}{rgb}{0.1,0.35,0}
\DeclareMathOperator{\width}{width}
\newcommand{\symdiff}{\bigtriangleup}
\newcommand\OPT{\ensuremath{\mathrm{OPT}}}
\newcommand\TSP{\ensuremath{\mathrm{TSP}}}
\renewcommand{\epsilon}{\varepsilon}
\newcommand{\odd}{\ensuremath{\mathrm{odd}}}
\newcommand\APX{\ensuremath{\mathsf{APX}}}
\def\Ascr{\mathcal{A}}
\def\Bscr{\mathcal{B}}
\def\Cscr{\mathcal{C}}
\def\Lscr{\mathcal{L}}
\def\cupp{\stackrel{.}{\cup}}
\let\@@pmod\pmod
\DeclareRobustCommand{\pmod}{\@ifstar\@pmods\@@pmod}
\def\@pmods#1{\mkern8mu({\operator@font mod}\mkern 6mu#1)}
\let\@@mod\mod
\DeclareRobustCommand{\mod}{\@ifstar\@mods\@@mod}
\def\@mods#1{\mkern8mu{\operator@font mod}\mkern 6mu#1}
\begin{document}

\maketitle
\thispagestyle{empty}
\addtocounter{page}{-1}

\begin{abstract}
We present a black-box reduction from the path version of the Traveling Salesman Problem (Path TSP) to the classical tour version (TSP). More precisely, we show that given an $\alpha$-approximation algorithm for TSP, then, for any $\epsilon >0$, there is an 
$(\alpha+\epsilon)$-approximation algorithm for the more general Path TSP. 
This reduction implies that the approximability of Path TSP is the same as for TSP, up to an arbitrarily small error. This avoids future discrepancies between the best known approximation factors achievable for these two problems, as they have existed until very recently. 

A well-studied special case of TSP, Graph TSP, asks for tours in unit-weight graphs.
Our reduction shows that any $\alpha$-approximation algorithm for Graph TSP implies an $(\alpha+\epsilon)$-approximation algorithm for its 
path version.
By applying our reduction to the $1.4$-approximation algorithm for Graph TSP by Seb\H{o} and Vygen, we obtain a polynomial-time $(1.4+\epsilon)$-approximation algorithm for Graph Path TSP, improving on a recent $1.497$-approximation algorithm of Traub and Vygen.

We obtain our results through a variety of new techniques, including a novel way to set up a recursive dynamic program to guess significant parts of an optimal solution. 
At the core of our dynamic program we deal with instances of a new generalization of (Path) TSP which combines parity constraints 
with certain connectivity requirements.
This problem, which we call $\Phi$-TSP, has a constant-factor approximation algorithm
and can be reduced to TSP in certain cases when the dynamic program would not make sufficient progress.
\end{abstract}

\newpage

\section{Introduction}

The Traveling Salesman Problem (TSP) is one of the most fundamental and well-studied problems in Combinatorial Optimization with a multitude of applications.
The common denominator of the numerous variants of the problem is that a set of cities have to be visited on a shortest possible tour.
Its best-known variant, often just dubbed TSP, assumes that the distances between the cities are non-negative and symmetric, and the task is to find a tour beginning and ending in the same city.
For our purposes it will be useful to work with an undirected graph $G=(V,E)$ with edge lengths $\ell : E \rightarrow \mathbb{R}_{\ge 0}$. 
While it is often assumed that $G$ is complete and $\ell$ fulfills the triangle-inequality, we do not assume this, but allow the tour to 
visit cities more than once; this is easily seen (and well-known) to be equivalent.
So a tour is a closed walk in $G$ visiting all vertices.

One of the best-studied extensions of TSP is Path TSP, where in addition to $G$ and $\ell$,
a fixed start $s\in V$ and end $t\in V$ are given and the task is to find a shortest walk from $s$ to $t$ visiting all vertices.

TSP and its variants are well-known to be \APX-hard (see~\cite{papadimitriou_1993_traveling,karpinski_2015_new} and references therein) and they have been studied very extensively under the viewpoint of approximation algorithms. 
While TSP and Path TSP look quite similar, there are fundamental differences.
First, there is a classical $\sfrac{3}{2}$-approximation algorithm for TSP 
by Christofides~\cite{christofides_1976_worst-case} and Serdjukov~\cite{serdukov_1978_some}.
This algorithm can easily be adapted to Path TSP, but then has only approximation ratio $\sfrac{5}{3}$ 
as Hoogeveen~\cite{hoogeveen_1991_analysis} showed in the early 90's.
Second, the integrality gaps of the classical LP relaxations seem to be different.
For TSP it is widely believed to be $\sfrac{4}{3}$, while for Path TSP it is at least $\sfrac{3}{2}$.
In both cases there are well-known instances attaining these lower bounds on the integrality gaps, and these instances 
have unit lengths ($\ell(e) =1$ for all $e\in E$).
Therefore the unit-length special cases, Graph TSP and Graph Path TSP, have received considerable attention
\cite{an_2015_improving, oveisgharan_2011_randomized,momke_2016_removing,mucha_2014_approximation,sebo_2014_shorter,gao_2013_lp_based,traub_2018_beating}.
In these special cases, the integrality gaps are known to be different: it is at most $\sfrac{7}{5}$ for Graph TSP 
and exactly $\sfrac{3}{2}$ for Graph Path TSP as shown by Seb\H{o} and Vygen~\cite{sebo_2014_shorter}.

While Christofides' algorithm for TSP is still unbeaten after more than four decades, the approximation ratio for Path TSP has been improved.
The first improvement, about 20 years after Hoogeveen~\cite{hoogeveen_1991_analysis}, was obtained by An, Kleinberg, and Shmoys~\cite{an_2015_improving}, who devised an elegant $\sfrac{(1+\sqrt{5})}{2}\approx 1.618$-approximation algorithm.
A sequence of successive improvements~\cite{sebo_2013_eight-fifth,vygen_2016_reassembling,gottschalk_2016_better,sebo_2016_salesman,traub_2018_approaching}
culminated in Zenklusen's recent $\sfrac{3}{2}$-approximation algorithm \cite{zenklusen_2019_approximation}.
Hence, at the moment, the best known approximation ratios for TSP and Path TSP are the same.

For Graph TSP and Graph Path TSP the situation is different.
The best known approximation ratios are $\sfrac{7}{5}$ for Graph TSP \cite{sebo_2014_shorter} and $1.497$ for Graph Path TSP \cite{traub_2018_beating}.
Since the latter result achieves an approximation ratio better than the integrality gap $\sfrac{3}{2}$,
one might hope that Graph Path TSP is actually no harder than Graph TSP although the integrality gaps differ. 

These recent developments naturally lead to the following general question regarding the relation between the approximability of Path TSP and TSP, which we address in this paper:
\begin{center}
Is (Graph) Path TSP substantially harder to approximate than its well-known special case (Graph) TSP?  
\end{center}
The answer is no.
The main contribution of this paper is to show in a constructive way that Path TSP can be approximated equally well as TSP (up to an arbitrarily small error), by presenting a black-box reduction that transforms approximation algorithms for TSP into ones for Path TSP.

\subsection{Our results}

The main consequence of our reduction can be summarized as follows.
\begin{restatable}{theorem}{mainReduction}\label{thm:mainReduction}
Let $\mathcal{A}$ be an $\alpha$-approximation algorithm for TSP. Then, for any $\epsilon >0$, there is an $(\alpha+\epsilon)$-approximation algorithm for Path TSP that, for any instance $(G,\ell, s, t)$, calls $\mathcal{A}$ a strongly polynomial number of times on TSP instances defined on subgraphs of $(G,\ell)$, and performs further operations taking strongly polynomial time.
\end{restatable}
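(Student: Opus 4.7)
The plan is to enumerate, via a recursive dynamic program, enough structural information about an optimal $s$-$t$ walk $P^*$ so that the remaining task reduces to a polynomial number of calls to the TSP approximation $\mathcal{A}$. What makes Path TSP genuinely different from TSP is the parity requirement at the two endpoints combined with the presence of many narrow $s$-$t$ cuts in the Held--Karp relaxation, i.e.\ cuts whose LP value is close to $1$ and which are therefore crossed only a few times by any near-optimal integral solution. If one could, up to a $(1+\epsilon)$ factor, commit to the places where $P^*$ crosses these narrow cuts, the path would decompose into essentially closed pieces on which $\mathcal{A}$ could be applied.

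Concretely, I would first solve the Held--Karp LP and extract the chain of narrow $s$-$t$ cuts. The DP would then select a narrow cut $C$ and enumerate a constant-sized (depending on $1/\epsilon$) set of candidate endpoints through which $P^*$ crosses $C$, commit cheap connector edges across $C$, and recurse on each side. The subproblems that arise are not Path TSP instances but belong to a more general family, which I would call $\Phi$-TSP: find a connected spanning multigraph satisfying prescribed parity constraints on a specified family of cuts. Path TSP fits this template with a single cut separating $s$ from $t$ and parity one on that cut and on $\{s\}, \{t\}$, so the recursion stays inside this class, with each level either strictly simplifying the cut family or falling into a regime where further guessing would no longer make sufficient progress.

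In the latter regime, I would exhibit a direct reduction of the residual $\Phi$-TSP instance to an ordinary TSP instance with relative loss at most $\epsilon$; invoking $\mathcal{A}$ on that reduction then yields an $(\alpha+\epsilon)$-approximation on the subproblem. A separate constant-factor approximation for $\Phi$-TSP serves as a global fallback that guarantees feasibility and a bounded error on pathological subinstances whose contribution to $\OPT$ is already so small that the constant blow-up can be absorbed by $\epsilon \cdot \OPT$. The overall approximation ratio follows by charging every committed connector edge to a distinct portion of the LP value of a narrow cut, so that the total overhead summed over the recursion tree is at most $\epsilon \cdot \OPT$.

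The main technical obstacle is engineering a dichotomy between \emph{progress} and \emph{reducibility} that is simultaneously exhaustive and compatible with the same $\epsilon$: the enumeration at each DP node must be small enough to keep the total state space polynomial, each guess must be cheap enough that committed edges accumulate to at most $\epsilon \cdot \OPT$ over the entire recursion, and the residual $\Phi$-TSP instances that trigger the reduction to TSP must really admit a $(1+\epsilon)$-lossy reduction. Balancing these three budgets through the generalized problem $\Phi$-TSP, while keeping the reduction strongly polynomial, is in my view the heart of the argument.
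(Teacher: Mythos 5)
Your high-level plan (generalize to an auxiliary problem, alternate between a dynamic-programming ``progress'' case and a ``reduce to TSP'' case) is in the right spirit, but several load-bearing ideas are missing or different in ways that would break the argument. First, your $\Phi$-TSP formulation --- ``prescribed parity constraints on a specified family of cuts'' --- is not the right generalization. The paper's $\Phi$-TSP instance carries an \emph{interface} $\Phi = (I,T,\mathcal{C})$: $T$ encodes a $T$-join (i.e.\ degree-parity at vertices, equivalently parity on \emph{all} cuts), $I$ is a set of distinguished vertices such that the solution only needs to be connected after contracting $I$, and $\mathcal{C}$ is a partition of $I$ telling which interface vertices must end up in the same component. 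The $(I,\mathcal{C})$ part is essential: once the DP guesses edges crossing a cut, the induced subproblem on a region is \emph{not} required to be connected --- each component merely has to touch an interface vertex, and the connectivity pattern among interface vertices must match the guess. Without $I$ and $\mathcal{C}$, the subproblems do not decompose correctly, and the DP cannot be set up.

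Second, the dichotomy and the accounting are controlled by the length $\ell(J)$ of a shortest $T$-join, not by any direct reading of the Held--Karp LP or a $(1+\epsilon)$-lossy reduction. If $\ell(J)$ is small, a TSP tour union a $T$-join (after a careful pre-step deleting ``heavy'' edges outside $\OPT\cup J$) already gives a near-$\alpha$-approximation; if $\ell(J)$ is large, the dual of the $T$-join LP produces a laminar family of $T$-cuts with total weight $\ell(J)$, guaranteeing that the edges of $\OPT$ lying in cuts crossed at most $k$ times have length at least $\ell(J) - \tfrac{1}{k+1}\ell(\OPT)$ and can be guessed by DP. Crucially, the DP does not reduce loss to an \emph{additive} $\epsilon\cdot\OPT$ as you suggest; it charges the guessed part of $\OPT$ at cost $1$ (rather than $\beta$), giving a \emph{multiplicative} improvement from $\beta$ to roughly $\beta - \Theta(\epsilon)(\beta-1)$. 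Third, the constant-factor (in the paper, $4$-)approximation for $\Phi$-TSP is not a fallback on pathological instances; it is the \emph{base} of an iterative boosting scheme that applies the above dichotomy a constant ($\Theta(1/\epsilon)$) number of rounds, each round improving the ratio while increasing the interface bound by a factor $O(1/\epsilon)$, so that after $O(1/\epsilon)$ rounds one reaches $\alpha + \epsilon$ with still-constant interface sizes. Without this bounded-depth boosting, the interface sizes (and hence the DP state space) would grow unboundedly. These ingredients --- the interface $(I,T,\mathcal{C})$, the $T$-join-dual laminar family, and the constant-depth boosting --- are the heart of the paper's proof and are absent from your proposal.
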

The following two statements are immediate consequences of the above theorem.
\begin{corollary}\label{cor:reductionPathTSPToTSP}
Let $\epsilon>0$ and $\alpha >1$. If there is a (strongly) polynomial-time $\alpha$-approximation algorithm for TSP, then there is a (strongly) polynomial-time $(\alpha+\epsilon)$-approximation algorithm for Path TSP.
\end{corollary}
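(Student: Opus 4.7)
The plan is to derive Corollary~\ref{cor:reductionPathTSPToTSP} directly from Theorem~\ref{thm:mainReduction}. Given an $\alpha$-approximation algorithm $\mathcal{A}$ for TSP running in (strongly) polynomial time, I would simply invoke Theorem~\ref{thm:mainReduction} with $\mathcal{A}$ and the given $\epsilon > 0$. This yields an $(\alpha+\epsilon)$-approximation algorithm for Path TSP, and it remains only to verify the running-time claim.

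For the running time, I would argue as follows. By Theorem~\ref{thm:mainReduction}, the Path TSP algorithm consists of (a) a strongly polynomial number of invocations of $\mathcal{A}$, each on a TSP instance defined on a subgraph of the input $(G,\ell)$, and (b) strongly polynomial additional work. Since every TSP instance passed to $\mathcal{A}$ lives on a subgraph of $(G,\ell)$, its encoding length is bounded by that of the original Path TSP instance $(G,\ell,s,t)$; in particular the edge lengths are inherited from $\ell$, so nothing blows up. Consequently, if $\mathcal{A}$ runs in polynomial (respectively strongly polynomial) time, each call to $\mathcal{A}$ takes polynomial (respectively strongly polynomial) time in the size of the original input. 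Summing over a strongly polynomial number of such calls and adding the strongly polynomial extra work gives a total running time that is polynomial (respectively strongly polynomial) in the input size, which is exactly the conclusion of the corollary.

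I do not expect any real obstacle in this argument: the only subtlety worth flagging is ensuring that the subgraph instances do not have larger encoding size than the original input, which is guaranteed because Theorem~\ref{thm:mainReduction} explicitly produces TSP instances on subgraphs of $(G,\ell)$ rather than on arbitrary auxiliary graphs with possibly inflated edge weights. Thus the corollary follows as a routine bookkeeping consequence of Theorem~\ref{thm:mainReduction}, with the substantive content of the reduction already encapsulated in that theorem.
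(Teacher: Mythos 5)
Your argument is exactly what the paper intends: the corollary is stated as an immediate consequence of Theorem~\ref{thm:mainReduction}, and your proposal correctly spells out the routine bookkeeping (strongly polynomial number of oracle calls, each on a subgraph instance of no larger encoding size, plus strongly polynomial extra work) that makes the implication go through. No gap; same approach as the paper.
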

\begin{corollary}\label{cor:reductionGraphTSP}
Let $\epsilon>0$ and $\alpha >1$. If there is a polynomial-time 
$\alpha$-approximation algorithm for Graph TSP, then there is a polynomial-time $(\alpha+\epsilon)$-approximation algorithm for Graph Path TSP.
\end{corollary}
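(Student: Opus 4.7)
The plan is to derive this corollary as a direct specialization of Theorem \ref{thm:mainReduction} to unit-length instances, with the only extra work being to verify that the oracle calls stay within the Graph TSP regime.

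Given a Graph Path TSP instance $(G,\ell,s,t)$ with $\ell \equiv 1$, I would apply the reduction promised by Theorem \ref{thm:mainReduction}. That theorem produces an $(\alpha+\epsilon)$-approximation algorithm for Path TSP which invokes an $\alpha$-approximation oracle for TSP on instances defined on subgraphs of $(G,\ell)$. Since the all-ones length function restricted to any subgraph is still the all-ones length function, every oracle call is itself a Graph TSP instance. Hence it is legitimate to substitute a polynomial-time $\alpha$-approximation algorithm for Graph TSP in place of the TSP oracle. Because Theorem \ref{thm:mainReduction} guarantees a strongly polynomial number of oracle calls together with strongly polynomial additional work, the resulting algorithm for Graph Path TSP runs in polynomial time, and its approximation guarantee is inherited directly from Theorem \ref{thm:mainReduction}.

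The one point that deserves a second look is the precise meaning of ``subgraph of $(G,\ell)$'' in the statement of Theorem \ref{thm:mainReduction}. I would expect it to denote a subgraph $H$ of $G$ equipped with the restriction of $\ell$ to $E(H)$, in which case nothing further is required. The main potential obstacle, if any, would arise if the reduction internally aggregates or reweights edges (for example via contractions that collapse several unit-length edges into a single edge of larger length) in a way not captured by the literal notion of a subgraph. In that case one would need to inspect the construction of Theorem \ref{thm:mainReduction} and either confirm that no such aggregation occurs on unit-length inputs, or else show that each aggregated edge can be replaced by an equivalent unit-weight path so that the oracle is always called on a bona fide Graph TSP instance; either way, no new ideas beyond the main theorem are needed.
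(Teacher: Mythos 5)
Your proposal is correct and matches the paper's reasoning exactly: the paper treats this corollary as an immediate consequence of Theorem~\ref{thm:mainReduction}, observing that the oracle calls are on subgraphs with unchanged edge lengths (hence still unit-length, hence still Graph TSP instances), and that polynomial and strongly polynomial time coincide for unit-length inputs. Your concern about whether the reduction might aggregate or reweight edges is resolved by the theorem statement itself and the paper's explicit remark that the link holds for any class of TSP instances closed under taking subgraphs without changing edge lengths.
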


Notice that since Graph (Path) TSP does not involve any large numbers in its input, the notions of polynomial-time and strongly polynomial-time algorithm are identical in this context.

The above statements create a strong link between the approximability of Path TSP and TSP, as well as its graph versions. More precisely, Theorem~\ref{thm:mainReduction} implies that such a link exists for any class of TSP instances
that is closed under taking instances on subgraphs of the original instance (without changing the edge lengths). 
In particular, any potential future progress on the approximability of (Graph) TSP will immediately carry over to (Graph) Path TSP.

Moreover, Corollary~\ref{cor:reductionGraphTSP} allows us to make significant progress on the currently best approximation factor of $1.497$ for Graph Path TSP~\cite{traub_2018_beating}, through a black-box reduction to the $1.4$-approximation algorithm for Graph TSP by Seb\H{o} and Vygen~\cite{sebo_2014_shorter}.
\begin{corollary}
For any $\epsilon>0$, there is a polynomial-time $(1.4+\epsilon)$-approximation algorithm for Graph Path TSP.
\end{corollary}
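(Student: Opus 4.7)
The plan is to derive this result as an immediate application of Corollary~\ref{cor:reductionGraphTSP} to the known $1.4$-approximation algorithm for Graph TSP by Seb\H{o} and Vygen~\cite{sebo_2014_shorter}. Concretely, given an input $\epsilon > 0$ for Graph Path TSP, I would instantiate Corollary~\ref{cor:reductionGraphTSP} with $\alpha = 1.4$ and that same $\epsilon$, feeding the Seb\H{o}--Vygen algorithm as the black-box $\alpha$-approximation for Graph TSP. The output of the reduction is, by Corollary~\ref{cor:reductionGraphTSP}, a polynomial-time $(1.4+\epsilon)$-approximation algorithm for Graph Path TSP.

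The only point that requires a sentence of justification is why the composition is legitimate in the \emph{graph} setting rather than only in the weighted setting. Here I would appeal to the precise statement of Theorem~\ref{thm:mainReduction}: the reduction invokes $\mathcal{A}$ only on TSP instances defined on \emph{subgraphs of $(G, \ell)$}, without modifying the edge lengths. Since the input is a Graph Path TSP instance, each such subgraph is again a unit-weight graph, hence a valid Graph TSP instance on which the Seb\H{o}--Vygen algorithm runs in polynomial time with ratio $1.4$. All additional operations performed by the reduction are strongly polynomial, and since Graph (Path) TSP instances contain no numerical data of super-polynomial size, polynomial and strongly polynomial time coincide, as noted after Corollary~\ref{cor:reductionGraphTSP}.

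There is essentially no obstacle: the content of the statement lies entirely in Theorem~\ref{thm:mainReduction} and in the Seb\H{o}--Vygen result, and the present corollary is just their composition. If one wanted a one-line proof, it would read: apply Corollary~\ref{cor:reductionGraphTSP} with $\alpha = 1.4$ to the Graph TSP approximation algorithm of Seb\H{o} and Vygen~\cite{sebo_2014_shorter}.
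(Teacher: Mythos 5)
Your proposal is correct and coincides with the paper's own (unstated, immediate) derivation: the corollary is obtained by applying Corollary~\ref{cor:reductionGraphTSP} with $\alpha=1.4$ to the Seb\H{o}--Vygen Graph TSP algorithm, using the fact that the reduction only calls the TSP oracle on subgraphs with unchanged (unit) edge lengths. Nothing to add.
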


Our reduction technique is quite versatile. 
In particular, it applies to a pretty general problem class (the $\Phi$-tour problem with interfaces of bounded size; see 
Definition~\ref{def:phi_tour} and Theorem~\ref{thm:phiTourMain}).
This includes the $T$-tour problem for bounded $|T|$ (see~\cite{sebo_2014_shorter,cheriyan_2015_approximating_2,sebo_2013_eight-fifth} for a definition) and certain uncapacitated vehicle routing problems such as the one with a fixed number of depots studied in \cite{Xu_2015_3_2}.

\subsection{Organization of the paper}

After some brief preliminaries in Section~\ref{sec:preliminaries} to fix basic terminology and notation, we provide an overview of our approach in Section~\ref{sec:overview}. Here, we first focus on some key aspects of our approach, which is based on a new way to employ dynamic programming by using a well-chosen auxiliary problem, which we call $\Phi$-TSP. Moreover, we break down the problem of finding a short solution to $\Phi$-TSP into two cases. 
Combining the two cases, applying the same algorithm recursively, and using a constant-factor approximation algorithm for $\Phi$-TSP on the final recursion level
will imply our main reduction result, Theorem~\ref{thm:mainReduction}.

For one case, 
we show in Section~\ref{sec:shortTJoin} how to reduce the problem to TSP.
For the other case,
we show in Section~\ref{sec:dynProg} how to guess a constant fraction of an optimum solution via dynamic programming.
The detailed proof of Theorem~\ref{thm:mainReduction} is in Section~\ref{sec:MainThm}.
Finally, Section~\ref{sec:4-Approx} contains a $4$-approximation algorithm for $\Phi$-TSP, 
which is followed by concluding remarks in Section~\ref{sec:conclusions}.

\section{Preliminaries}\label{sec:preliminaries}

A weighted graph is a tuple $(V,E,\ell)$, where $V$ is the vertex set, $E$ is the edge set, which we assume w.l.o.g.\ not to have loops or parallel edges, and $\ell : E\to \mathbb{R}_{\geq 0}$ denotes the edge lengths.
We only consider undirected graphs with non-negative edge lengths and do not always state this explicitly.

We often deal with multi-sets of edges.
Although $E$ does not contain parallel edges, when we write $F\subseteq E$, we mean a multi-set $F$ that can contain several copies of the same edge.
We use the operator $\cupp$ to designate the multi-union.
For a vertex set $W\subseteq V$, we denote by $\delta(W)\subseteq E$ all edges with exactly one endpoint in $W$, and, for $v \in V$, we use $\delta(v)$ as a shorthand for $\delta(\{v\})$.
For a multi-set $F\subseteq E$, we define
\begin{equation*}
\odd(F)\coloneqq \{v : v\text{ is a vertex with } |\delta(v)\cap F| \text{ is odd}\}\enspace.
\end{equation*}
For a vertex set $T\subseteq V$, a $T$-join is a multi-set of edges $F\subseteq E$ with $\odd(F) = T$.

For a set $I\subseteq V$ and a graph $G$, we denote by $G/I$ the graph obtained from $G$ by contracting the vertex set $I$. If $I$ is empty, we define $G/\emptyset:=G$.
For a vertex set $W$ and an edge set $F$, we define $F[W]\coloneqq \{e\in F: \text{both endpoints of $e$ are in $W$}\}$.
Moreover, $G[W]$ denotes the induced subgraph with vertex set $W$ and edge set $E[W]$.

Instead of describing tours as walks in $G$, it is convenient to consider them as multi-edge sets.
Then a solution to (Graph) TSP is a multi-edge set $F$ such that $(V,F)$ is connected and $\odd(F) = \emptyset$.
A solution to (Graph) Path TSP (with $s\ne t$) is a multi-edge set $F$ such that $(V,F)$ is connected and $\odd(F) = \{s, t\}$.
From such multi-edge sets---which we also call \emph{tours} or \emph{$s$-$t$ tours}, respectively---we can easily recover walks by Euler's theorem.

We often use $\OPT$ as an arbitrary (but fixed) optimal solution for the problem in question. 
Finally, when using the notion of \emph{approximation algorithm} we will not assume that the algorithm is polynomial time, but state it explicitly if this is the case.

In the interest of clarity and simplicity of the presentation, we did not try to optimize the running times of our procedures. Consequently, we often opt for weaker constants that are easier to obtain.

\section{Overview of approach}\label{sec:overview}

A key novelty of our approach is a new way to set up a dynamic program to successively strengthen a basic algorithm by combining it with a stronger algorithm for TSP. Every time we apply our dynamic program to obtain a stronger algorithm, we end up with a more difficult problem, slowly approaching problem settings for which it is very challenging to find strong approximation algorithms. However, as we show, by guessing a well-chosen set of edges through the dynamic program, we can limit the recursion depth by a constant, which allows us to stay in a regime where our approach runs efficiently.

To introduce our approach, we start with a brief discussion of a much more basic dynamic programming idea that has previously been used in related settings.
We explain the challenges this procedure faces when trying to extend it for our purposes, and 
outline how we overcome the barriers encountered by existing methods.

\subsection{Key challenges and high-level approach}\label{sec:high_level}
 
Assume we are given an $\alpha$-approximation algorithm $\mathcal{A}$ for TSP.
Then finding a short Path TSP solution using $\mathcal{A}$ as an oracle would be easy if the distance $d(s,t)$ between the start $s$ and the end $t$ was short compared to $\ell(\OPT)$, i.e., the length of a shortest Path TSP solution $\OPT$. Indeed, in this case the length of a shortest TSP tour $\OPT_{\TSP}$ and a shortest $s$-$t$ tour $\OPT$ do not differ by much because any solution of one problem can be converted to a solution of the other one by adding a shortest $s$-$t$ path $P$. More precisely, $\OPT_{\TSP} \cupp P$ is an $s$-$t$ tour and $\OPT \cupp P$ is a TSP tour.
Hence, one can simply compute an $\alpha$-approximate TSP tour $F$ and a shortest $s$-$t$ path $P$ and return $F \cupp P$. 

Consequently, a canonical plan would be to try to transform the Path TSP instance to another one with small $s$-$t$ distance.  
It turns out that if the distance between $s$ and $t$ is very large, then such a reduction is indeed possible by using a technique based on dynamic programming that goes back to Blum, Chawla, Karger, Lane, Meyerson, and Minkoff~\cite{blum_2007_approximation}, who studied variants of the Orienteering Problem. Their approach was later extended by Traub and Vygen~\cite{traub_2018_beating} in the context of Graph Path TSP.
This approach allows for reducing to Path TSP instances where the distance between $s$ and $t$ is at most $(\sfrac{1}{3}+\epsilon)\cdot \ell(\OPT)$, for some arbitrarily small constant $\epsilon > 0$ (see~\cite{traub_2018_beating}).

However, this technique faces significant barriers when aiming at a reduction to smaller $s$-$t$ distances. 
Thus our approach follows a different path.
Nevertheless, it is on a high level inspired by the dynamic program in~\cite{blum_2007_approximation} and later variations and extensions thereof~\cite{traub_2018_approaching,zenklusen_2019_approximation,traub_2018_beating,nagele_2019_new}. 
We therefore start with a brief discussion of this prior technique in the context of Path TSP as used in~\cite{traub_2018_beating}, which will be helpful for the understanding of our approach.

\medskip

For simplicity of exposition, consider a Graph Path TSP instance and assume that $d(s,t) \geq (\sfrac{1}{3}+\epsilon)\cdot |\OPT|$ 
for some constant $\epsilon > 0$.
The idea is to study the structure of edges of $\OPT$ in the $d(s,t)$ many $s$-$t$ cuts $\delta(L_0), \ldots, \delta(L_{d(s,t)-1})$ defined by
\begin{equation*}
L_{i} \coloneqq \left\{v\in V : d(s,v) \leq i \right\} \qquad \forall i\in {0,\ldots, d(s,t)-1}\enspace.
\end{equation*}
The key observation is that a constant fraction of these $s$-$t$ cuts will only contain a single edge of $\OPT$, and, hence, one can try to ``guess'' these edges through a dynamic program.
Indeed, every edge can be in at most one of the cuts $\delta(L_0), \delta(L_1),\ldots,\delta(L_{d(s,t)-1})$.
Hence, the average number of $\OPT$-edges in a cut $\delta(L_i)$ can be no higher than $\sfrac{|\OPT|}{d(s,t)}$. 
Using that every $s$-$t$ cut must have an odd intersection with $\OPT$, because $\OPT$ is an $s$-$t$ tour, this implies that a constant fraction of the cuts contains only one edge of $\OPT$.
For brevity, we call a cut $\delta(L_i)$ with $|\delta(L_i) \cap \OPT|=1$ a \emph{$1$-cut}.
Assume we knew all edges of $\OPT$ contained in $1$-cuts. Then the problem decomposes into smaller Path TSP instances. 
See Figure~\ref{fig:simpleDPGraphPathTSP} for an illustration.

\begin{figure}[h!]

\begin{center}
\begin{tikzpicture}[scale=1]

\small

\pgfdeclarelayer{bg}
\pgfdeclarelayer{fg}
\pgfsetlayers{bg,main,fg}

\def\ch{4.5} %
\def\dx{1.5} %

\tikzset{oneEdge/.style={line width=3pt,blue}}

\begin{scope}[every node/.style={fill=black,circle,inner sep=0em,minimum size=4pt},yshift=5.9cm]
\node  (1) at (0.76,-5.90) {};
\node  (3) at (2.20,-6.46) {};
\node  (4) at (2.3,-7.56) {};
\node  (5) at (3.8,-6.56) {};
\node  (7) at (3.72,-4.60) {};
\node  (8) at (5.18,-4.24) {};
\node  (9) at (6.16,-4.08) {};

\node (10) at (7.8,-4.0) {};
\node (10b) at (7.15,-4.5) {};
\node (10c) at (7.65, -5.1) {};

\node (11) at (6.42,-5.24) {};
\node (12) at (5.22,-6.16) {};
\node (13) at (6.62,-6.64) {};
\node (14) at (8.2,-5.60) {};
\node (15) at (9.6,-5.94) {};
\node (16) at (11.0,-7.28) {};
\node (17) at (12.60,-7.00) {};
\node (18) at (11.36,-5.88) {};
\node (19) at (11.46,-4.46) {};
\node (20) at (12.68,-4.70) {};
\node (21) at (14.2,-5.90) {};
\end{scope}

\begin{scope}
\node[above=1pt] at (1) {$s=u_0$};
\node[below=1pt] at (21) {$t=v_8$};

\node[above=2pt] at (3) {$v_0$};
\node[above=2pt] at (7) {$u_2$};
\node[above=2pt] at (8) {$v_2$};
\node[above=1pt] at (14) {$u_5$};
\node[above=1.5pt,xshift=2mm] at (15) {$v_5 = u_6$};
\node[below=1pt] at (16) {$v_6$};
\node[above=1pt] at (20) {$u_8$};

\end{scope}

\def \bs{20pt}

\begin{scope}[thick]
\draw[oneEdge]  (1) --  (3);
\draw  (3) -- (4) -- (5) -- (3) -- (7);
\draw[oneEdge]  (7) --  (8);
\draw  (8) --  (9) -- (10b) -- (10c) -- (11) -- (12) -- (13) -- (10c) -- (14);
\draw (10b) to[bend right=\bs] (10);
\draw (10b) to[bend left=\bs] (10);

\draw[oneEdge] (14) -- (15);
\draw[oneEdge] (15) -- (16);
\draw (16) -- (17) -- (18) -- (19) -- (20);
\draw[oneEdge] (20) -- (21);
\end{scope}

\begin{scope}[dashed,bend right=12]
\newcommand\cut[2][]{
\draw (#2,0.1cm-\ch cm /2) node[below=2pt, left=0pt] {#1} to ++ (0,\ch);
}

\begin{scope}
\foreach \i in {1,3,4,7} {
\cut[$L_\i$]{1.2 + \i*\dx}
}
\end{scope}

\begin{scope}[blue,thick]
\foreach \i in {0,2,5,6,8} {
\cut[$L_\i$]{1.2 + \i*\dx}
}
\end{scope}

\end{scope}

\end{tikzpicture}

 \end{center}
\caption{Illustration of an optimal solution $\OPT$ of a Graph Path TSP instance with $9=d(s,t) > \sfrac{1}{3}\cdot |\OPT| = \sfrac{23}{3}$. 
Thick blue edges are OPT-edges that are the only one in some cut $\delta(L_i)$.
Knowing these edges, the problem breaks into six smaller instances: 
a trivial $s$-$u_0$ Path TSP problem in $G[L_0]$, one $v_0$-$u_2$ Path TSP problem in $G[L_2\setminus L_0]$ and so on, with the last instance being a trivial $v_8$-$t$ Path TSP problem in $G[V\setminus L_8]$. \\
}
\label{fig:simpleDPGraphPathTSP}

\end{figure}
 
Of course, the $\OPT$-edges in $1$-cuts are not known upfront, and hence, the problem cannot be decomposed so easily. However, one can use a dynamic program to guess the $1$-cuts from left to right, i.e., from $\delta(L_0)$ to $\delta(L_{d(s,t)-1})$, together with the $\OPT$-edge in each of them. 
Notice that the sub-instances may not have a short start-to-end distance 
(e.g. $d(v_6,u_8)$ in Figure~\ref{fig:simpleDPGraphPathTSP} may be substantially larger than $\sfrac{1}{3}$ times an optimum
$v_6$-$v_8$-tour in $G[L_8\setminus L_6]$).
As shown in~\cite{traub_2018_beating}, this issue can be addressed by applying the dynamic program recursively to the sub-instances. A key observation in~\cite{traub_2018_beating} is that a constant recursion depth is enough to ensure that the total cost of the remaining sub-instances
becomes negligible compared to the edges guessed through the recursive dynamic program.

\medskip

Notice that to apply this dynamic programming idea, one crucially needs $d(s,t) \geq (\sfrac{1}{3}+\epsilon) \cdot |\OPT|$ for some constant 
$\epsilon > 0$.
Indeed, otherwise none of the cuts $\delta(L_i)$ may be a $1$-cut, and no decomposition into smaller Path TSP instances as above is possible. 
This is the reason why this techniques has only been applied to reduce the start-to-end distance to about a third of $\OPT$.

If we could guess not only $1$-cuts, but also cuts with a larger constant number of $\OPT$-edges, say up to $5$, then we could handle instances with an $s$-$t$ distance below $\sfrac{1}{3}\cdot \ell(\OPT)$. (This idea is inspired by a recent dynamic programming approach in~\cite{nagele_2019_new} in the context of chain-constrained spanning trees.)
Our approach aims at realizing this high-level plan. However, this ostensibly simple algorithmic idea comes with several significant technical hurdles. Most importantly, if we guess more than one edge, the resulting sub-problems are not Path TSP problems anymore. 
More precisely, if we guess 5 edges in each of two consecutive $5$-cuts, then we have up to 10 \emph{interface vertices}, i.e., endpoints of guessed edges. 
See Figure~\ref{fig:guess3} for an example.

\begin{figure}[h!]

\begin{center}

\begin{tikzpicture}[scale=0.6]

\small

\pgfdeclarelayer{bg}
\pgfdeclarelayer{fg}
\pgfsetlayers{bg,main,fg}

\tikzset{oneEdge/.style={line width=3pt,blue}}

\def\ch{9.5} %

\begin{scope}[dashed,bend right=12]
\newcommand\cut[2][]{
\draw (#2,0.5) node[below=2pt, left=0pt] {#1} to ++ (0,\ch);
}

\begin{scope}
\cut[$L_2$]{8.5}

\end{scope}

3-cuts
\begin{scope}[blue,thick]
\foreach \i/\j in {1/0,4/1,14/3} {
\cut[$L_\j$]{0 + \i}
}
\end{scope}
\end{scope}

\def\rad{4}
\def\cd{0.5}
\def \bs{20pt}

\tikzset{nsT/.style={
draw=red,thick,rectangle,inner sep=0em,minimum size=5pt,fill=white
}}

\tikzset{nsI/.style={
fill=white,draw=red, thick,circle,inner sep=0em,minimum size=5pt
}}

\tikzset{ns/.style={
fill=black,circle,inner sep=0em,minimum size=4pt}
}

\begin{scope}[every node/.style={nsI}, shift={(9.5,5.5)}]
\node[nsT] (v1) at (0:\rad) {};
\node[nsT] (v2) at (-45:\rad) {};

\node[nsT] (v3) at (-90:\rad) {};
\node[nsT] (v4) at (-135:\rad) {};

\node (v5) at (180:\rad) {};

\node (v6) at (135:\rad) {};
\node[nsT] (v7) at (90:\rad) {};
\node[nsT] (v8) at (45:\rad) {};
\end{scope}

\begin{scope}[every node/.style={ns}]
 \node (n1) at (7,4) {};
 \node (n2) at (7,5.5) {};
 \node (n3) at (8,6.5) {};
 \node (n4) at (10,6) {};
 
 \node (n5) at (8,4) {};
 \node (n6) at (11,5) {};

 \node (n8) at (10,7) {};
\end{scope}

\begin{scope}[thick]
\draw (v3) -- (n1);
\draw (v4) -- (n1);
\draw (n1) to[bend right=\bs] (n2);
\draw (n1) to[bend left=\bs] (n2);
\draw (n2) -- (n3) -- (n4) -- (n2);

\draw (v2) -- (n5) -- (n6) -- (v1);

\draw (v7) -- (v6) -- (n8) -- (v8);
\end{scope}

\begin{scope}[every path/.style={fill=black!12,draw=black!12}, shift={(9.5,5.5)}]
\begin{pgfonlayer}{bg}
\newcommand\Cset[2]{
\draw (#1:\rad+\cd) arc (#1:#2:\rad+\cd)
  arc (#2:#2+180:\cd)
  arc (#2:#1:\rad-\cd)
  arc (#1+180:#1+360:\cd)
;
}

\Cset{-45}{0}
\Cset{-135}{-90}
\Cset{180}{180}
\Cset{45}{135}
\end{pgfonlayer}
\end{scope}

\begin{scope}[every node/.style={ns}]
 \node (w1) at (3,9) {};
 \node (w2) at (3,8) {};
 
 \node (w3) at (3.5,5.5) {};
 \node (w4) at (2.5, 6.5) {};
 \node (w5) at (2.5, 4.5) {};
 
 \node (w6) at (3,3) {};
 
 \node (s) at (0,5) {};
 
 \node (w7) at (15.5,8.5) {};
 \node (t) at (16,2) {}; 

\end{scope}
 \node[left=2pt] at (s) {$s$};
 
\node[right=2pt] at (t) {$t$};

\begin{scope}[every node/.style={ns}]
  \node (m1) at (2,8.5) {};
\end{scope}

\begin{scope}[thick]
 \draw (w1) -- (m1) -- (w2);
 \draw (w4) -- (w3) -- (w5);
 \draw (w7) -- (t);
\end{scope}

\begin{scope}[oneEdge]
 \draw (w1) -- (v6) -- (w2);
 \draw (v5) to[bend right=\bs] (w3);
 \draw (v5) to[bend left=\bs] (w3);
 \draw (w4) -- (s) -- (w5);
 \draw (s) -- (w6) -- (v4);
 \draw (v3) -- (t) -- (v2);
 \draw (v7) -- (w7) -- (v8);
 \draw (w7) -- (v1);
\end{scope}

\end{tikzpicture}

 \end{center}
\caption{Illustration for guessing edges in cuts $\delta(L_i)$ that contain at most five $\OPT$-edges. 
In this example, $\delta(L_0), \delta(L_1)$, and $\delta(L_3)$ are these cuts and the edges crossing them are highlighted as thick blue edges.
We now define a sub-problem in $G[L_3\setminus L_1]$.
We call the endpoints of the thick blue edges \emph{interface vertices} (red).
The gray sets show the connectivity requirements on the interfaces vertices in this sub-problem: 
in a solution of the sub-problem, vertices in the same gray set must be in the same connected component.
Interface vertices shown as squares must have odd degree; all other vertices must have even degree.}
\label{fig:guess3}

\end{figure}

An optimum $s$-$t$ tour is not necessarily connected inside the vertex set of a sub-problem
but every connected component must contain at least one interface vertex.
Moreover, $\OPT$ needs to connect some of the interface vertices to each other.
This induces connectivity constraints for the sub-problem, shown as gray sets in Figure~\ref{fig:guess3}.
They can also be guessed since the number of interface vertices is constant.
Note, however, that we cannot guess the entire connected components, as there are exponentially many options.

Clearly, these sub-problems become significantly more difficult than the original Path TSP problem. Moreover, if we try to apply such a procedure recursively, then the sub-problems can become more complex with each recursion step, because of an increase in the number of interface vertices per sub-problem. 
Another important issue in a recursive application to our more complex sub-problem
is to identify good cuts in which we should guess edges of $\OPT$. 
Our cuts will result from the dual of a $T$-join problem. 
They will no longer form a chain, but their laminar structure still allows for a dynamic programming approach.

Moreover, it is not obvious how to reduce the problem to TSP in the case when we cannot guess edges by dynamic programming,
and this will involve a careful guessing of further edges of $\OPT$.

We will now describe our approach in detail.
We start by defining a new problem class around which our method is centered, and which we call $\Phi$-TSP.

\subsection{\boldmath $\Phi$-TSP}

As described above, when guessing edges, the endpoints of those edges play a special role in terms of how we have to connect things. 
We capture this through the notion of an interface. 
We define this notion for a general graph $G$ below and will typically use it for subgraphs of the instance we are interested in.
\begin{definition}[interface]
An interface $\Phi$ of a graph $G=(V,E)$ is a triple $\Phi=(I,T,\mathcal{C})$ with
\begin{enumerate}
\item $T\subseteq I\subseteq V$, where $|T|$ is even, and 
\item $\mathcal{C}\subseteq 2^I$ is a partition of $I$.
\end{enumerate}
For an interface $\Phi$ of $G$, we denote by $(I_\Phi, T_\Phi, \mathcal{C}_\Phi)$ its corresponding triple and call $|I_\Phi|$ its size.
\end{definition}

For a given interface, we are interested in finding what we call \emph{$\Phi$-tours}, which are defined as follows.
\begin{definition}[$\Phi$-tour]\label{def:phi_tour}
Let $G=(V,E)$ be a graph.
Let $\Phi=(I,T,\mathcal{C})$ be an interface of $G$. A $\Phi$-tour in $G$ is a multi-set $F\subseteq E$ with
\begin{enumerate}
\item\label{item:PhiTourIsTJoin} $T = \odd(F)$, i.e., $F$ is a $T$-join,
\item\label{item:PhiTourConnectToInterface} $(V,F)/ I$ is connected, and
\item\label{item:PhiTourConnectivity} for any $C\in \mathcal{C}$, the vertices in $C$ lie in the same connected component of $(V,F)$.
\end{enumerate}
\end{definition}
Figure~\ref{fig:phiTour} exemplifies the notation of an interface $\Phi$ and a $\Phi$-tour.

\begin{figure}[h!]

\begin{center}
\hspace{2.5cm}%
\begin{tikzpicture}[scale =0.5]

\pgfdeclarelayer{fg}
\pgfdeclarelayer{bg}
\pgfsetlayers{bg,main,fg}

\def\rad{4}
\def\cd{0.5}
\def \bs{40pt}

\tikzset{nsT/.style={
draw=red,thick,rectangle,inner sep=0em,minimum size=5pt,fill=white
}}

\tikzset{nsI/.style={
fill=white, draw=red,thick,circle,inner sep=0em,minimum size=5pt
}}

\tikzset{ns/.style={
fill=black,circle,inner sep=0em,minimum size=4pt}
}

\begin{scope}[every node/.style={nsI}]
\node[nsT] (v1) at (0:\rad) {};
\node[nsT] (v2) at (-45:\rad) {};

\node[nsT] (v3) at (-90:\rad) {};
\node[nsT] (v4) at (-135:\rad) {};

\node (v5) at (180:\rad) {};

\node (v6) at (135:\rad) {};
\node[nsT] (v7) at (90:\rad) {};
\node[nsT] (v8) at (45:\rad) {};
\end{scope}

\begin{scope}
\node[nsT] (TLab) at (8,0) {};
\node[right] at ($(TLab)+(0.2,0.06)$) {$:T$};
\end{scope}

\begin{scope}[every path/.style={fill=black!12,draw=black!12}]
\begin{pgfonlayer}{bg}
\newcommand\Cset[2]{
\draw (#1:\rad+\cd) arc (#1:#2:\rad+\cd)
  arc (#2:#2+180:\cd)
  arc (#2:#1:\rad-\cd)
  arc (#1+180:#1+360:\cd)
;
}

\Cset{-45}{0}
\Cset{-135}{-90}
\Cset{180}{180}
\Cset{45}{135}
\end{pgfonlayer}
\end{scope}

\begin{scope}
\def\td{0.6}
\node at (60:\rad+\cd+\td) {$C_1$};
\node at (180:\rad+\cd+\td) {$C_2$};
\node at (-120:\rad+\cd+\td) {$C_3$};
\node at (-30:\rad+\cd+\td) {$C_4$};
\end{scope}

\begin{scope}[every node/.style={ns}, shift={(-9.5,-5.5)}]
 \node (n1) at (7,4) {};
 \node (n2) at (7,5.5) {};
 \node (n3) at (8,6.5) {};
 \node (n4) at (10,6) {};
 
 \node (n5) at (8,4) {};
 \node (n6) at (11,5) {};

 \node (n8) at (10,7) {};
\end{scope}

\begin{scope}[thick, shift={(-9.5,-5.5)}]
\draw (v3) -- (n5) -- (n3);
\draw (v4) -- (n1);
\draw (n1) -- (n2);
\draw (n8) to[bend right=\bs] (n4);
\draw (n8) to[bend left=\bs] (n4);
\draw (n2) -- (v5) -- (n3) ;

\draw (v2) -- (n6) -- (v1);

\draw (v7) -- (v6) -- (n8) -- (v8);
\end{scope}

\end{tikzpicture}
 \end{center}
\caption{Example of an interface $\Phi=(I,T,\mathcal{C})$ and a $\Phi$-tour. The partition $\mathcal{C}=\{C_1,C_2,C_3,C_4\}$ of $I$ is highlighted as gray sets. 
Hence, $I = \cup_{i=1}^4 C_i$. Moreover, the vertices in $T$ are drawn as red rectangles.
This defines a $\Phi$-TSP instance that results as a sub-problem in Figure~\ref{fig:guess3}. The shown edges $F$ are a $\Phi$-tour.
}
\label{fig:phiTour}

\end{figure}
The problem we focus on in the following, which we call $\Phi$-TSP, seeks to find a shortest $\Phi$-tour.
\begin{definition}[$\Phi$-TSP]
Given a weighted graph $G=(V,E,\ell)$ and an interface $\Phi$ of $G$, compute a shortest $\Phi$-tour in $G$ or decide that none exists.
In short,
\begin{equation}\label{eq:PhiTSP}
\min\left\{\ell(F) : F \text{ is a $\Phi$-tour in $G$}  \right\}\enspace. \tag{$\Phi$-TSP}
\end{equation}
\end{definition}

Note that for any distinct $s,t\in V$, by choosing the interface $\Phi=(I,T,\mathcal{C})$ with $I=T=\{s,t\}$ and $\mathcal\{C\}=\{\{s,t\}\}$, we have that $\Phi$-tours correspond to solutions to $s$-$t$ Path TSP. Analogously, for larger sets $T$, one captures the $T$-tour problem (see~\cite{sebo_2014_shorter,cheriyan_2015_approximating_2,sebo_2013_eight-fifth}).
Another special case is the uncapacitated vehicle routing problem with a fixed number of depots, for which Xu and Rodrigues \cite{Xu_2015_3_2}
gave a $\sfrac{3}{2}$-approximation.
Here, $I$ is the set of depots, $T=\emptyset$, and $\mathcal{C}$ is the partition into singletons.

Depending on the structure of the graph $G$ and the interface $\Phi$, it may be that no $\Phi$-tour exists. 
We call an interface $\Phi$ of $G$ \emph{feasible} if $G$ admits a $\Phi$-tour.
The existence of a $\Phi$-tour admits the following easy characterization, which can be checked in linear time.
\begin{lemma}\label{lem:existencePhiTour}
Let $G=(V,E,\ell)$ be a weighted graph.
Let $\Phi=(I,T,\mathcal{C})$ be an interface of $G$. Then $G$ admits a $\Phi$-tour if and only if all of the following conditions hold.
\begin{enumerate}
\item\label{item:PhiTourExistsJoin} Each connected component of $G$ contains an even number of vertices in $T$,
\item\label{item:PhiTourExistsX} $G/I$ is connected, and
\item\label{item:PhiTourExistsI} for every $C\in \mathcal{C}$, the vertices in $C$ lie in the same connected component of $G$.
\end{enumerate}
\end{lemma}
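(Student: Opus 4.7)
The plan is to prove the two directions separately. The forward direction is an immediate observation, and the reverse direction is a short explicit construction built from a $T$-join plus doubled edges that patch in the connectivity requirements.

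For necessity, suppose $F$ is a $\Phi$-tour. Condition~(iii) is immediate since $F \subseteq E$ as a multi-set: vertices of $C$ that lie in one component of $(V,F)$ certainly lie in one component of $(V,E)=G$. Condition~(ii) follows similarly, since contracting $I$ in the spanning subgraph $(V,F)$ cannot be ``more connected'' than contracting $I$ in $(V,E)=G$. For~(i), I would invoke the handshake observation that every connected component of $(V,F)$ contains an even number of vertices of odd $F$-degree. Since $\odd(F)=T$ and every component of $(V,F)$ is contained in a single component of $G$, summing these even counts over the components of $(V,F)$ lying inside a fixed component of $G$ shows that each component of $G$ meets $T$ in an even number of vertices.

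For sufficiency, my plan is to build a $\Phi$-tour by starting with any $T$-join $J\subseteq E$ (which exists by (i), this being the classical necessary and sufficient condition for $T$-join existence) and then patching in the two connectivity requirements via \emph{doubled} edges, exploiting that doubling does not change $\odd$. Concretely, by (ii) I lift a spanning tree of $G/I$ to an edge set $E_1\subseteq E$ such that $(V,E_1)/I$ is connected; by (iii), for each $C\in \mathcal{C}$, I pick an edge set $E_C\subseteq E$ inside the component of $G$ containing $C$ that connects all vertices of $C$ to one component. Defining
\[
F \coloneqq J \cupp 2E_1 \cupp \bigcupp_{C\in\mathcal{C}} 2E_C,
\]
where $2A$ denotes the multi-set in which each edge of $A$ is taken twice, I get $\odd(F)=\odd(J)=T$, the multi-set $2E_1\subseteq F$ ensures that $(V,F)/I$ is connected, and $2E_C\subseteq F$ keeps the vertices of each $C$ in one component of $(V,F)$. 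Hence $F$ is a $\Phi$-tour.

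I do not expect any serious obstacle: the essential point is that (i)--(iii) cleanly decouple the three defining properties of a $\Phi$-tour (parity, connectivity after contracting $I$, connectivity inside each $C\in\mathcal{C}$), and each can be handled by an independent, standard construction. Linear-time checkability then follows because each of (i)--(iii) can be verified with one pass of a connected-components computation combined with a parity count over the vertices of $T$ in each component.
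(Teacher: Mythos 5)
Your proof is correct and follows essentially the same route as the paper: necessity is observed directly, and sufficiency is shown by taking a $T$-join (which exists by (i)) and adding doubled edges to supply the two connectivity requirements. The only cosmetic difference is that the paper uses the cruder but quicker construction $E \cupp (E\setminus J)$, which has the right parity and inherits all connectivity from $G$, whereas you add only the doubled edge sets actually needed; both work.
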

\begin{proof}
The three mentioned conditions are clearly necessary for $G$ to admit a $\Phi$-tour. Moreover, if they are satisfied then, due to~\ref{item:PhiTourExistsJoin}, there exists a $T$-join $J\subseteq E$, and points~\ref{item:PhiTourExistsX} and~\ref{item:PhiTourExistsI} guarantee that $E \cupp (E\setminus J)$ is a $\Phi$-tour in $G$.
\end{proof}

It is crucial for our approach to start with a polynomial-time constant-factor approximation algorithm, which we will successively strengthen as discussed in the following.

A $7$-approximation algorithm for $\Phi$-TSP can be obtained easily as follows.
Compute
a minimum cost edge set $F_1$ satisfying \ref{item:PhiTourExistsJoin} ($T$-join), 
a minimum cost edge set $F_2$ satisfying \ref{item:PhiTourExistsX} (spanning tree in $G/I$),
and a 2-approximation $F_3$ of a minimum cost edge set satisfying \ref{item:PhiTourExistsI} (Steiner forest).
Then the disjoint union $F_1 \cupp F_2 \cupp F_2 \cupp F_3 \cupp F_3$ is a $7$-approximation.

With a little more care we can obtain a $4$-approximation algorithm, using Jain's iterative rounding framework~\cite{jain_2001_factor}:

\begin{restatable}{theorem}{fourApprox}\label{thm:fourApprox}
$\Phi$-TSP admits a strongly polynomial $4$-approximation algorithm.
\end{restatable}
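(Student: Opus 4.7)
The plan is to set up a single cut-covering LP for $\Phi$-TSP that simultaneously captures the connectivity requirements~\ref{item:PhiTourConnectToInterface} and~\ref{item:PhiTourConnectivity} of Definition~\ref{def:phi_tour} together with the $T$-cut constraints implied by~\ref{item:PhiTourIsTJoin}, apply Jain's iterative rounding to this LP to obtain a $2$-approximate integral solution, and then correct parity by doubling only the ``excess'' of that solution beyond a $T$-join that it already contains.

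Concretely, define $f\colon 2^V \to \{0,1\}$ by $f(S) = 1$ on proper subsets $S$ whenever any one of the following holds: $S$ separates some $C \in \mathcal{C}$; $S \subseteq V \setminus I$ or $S \supseteq I$; or $|S \cap T|$ is odd. Any $\Phi$-tour $F$ satisfies $|F \cap \delta(S)| \geq f(S)$ for every $S$, so the LP $\min\{\ell^\top x : x(\delta(S)) \geq f(S) \text{ for all } S,\; x \geq 0\}$ is a lower bound on $\ell(\OPT)$. I would then verify that $f$ is weakly supermodular so that Jain's iterative rounding produces, in strongly polynomial time, an integer multi-set $F \subseteq E$ with $\ell(F) \leq 2\ell(\OPT)$ satisfying every cut constraint of $f$.

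Because $F$ as a multi-set satisfies all $T$-cut constraints, each connected component of the submultigraph $(V, F)$ intersects $T$ in an even number of vertices; hence there is a $T$-join $J \subseteq F$, which can be found as a minimum $T$-join of $(V, F)$ in polynomial time. The output is $F^{\mathrm{out}} := F \cupp (F \setminus J)$. The edges of $F \setminus J$ appear twice and contribute even parity everywhere, while $J$ realizes parity exactly $T$, so $\odd(F^{\mathrm{out}}) = T$ and~\ref{item:PhiTourIsTJoin} holds. Since $F^{\mathrm{out}}$ contains $F$ as a sub-multiset, it inherits every cut constraint of $f$, so~\ref{item:PhiTourConnectToInterface} and~\ref{item:PhiTourConnectivity} hold too. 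Finally, $\ell(F^{\mathrm{out}}) = \ell(F) + \ell(F \setminus J) \leq 2\ell(F) \leq 4\ell(\OPT)$.

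The hard part is establishing the $2$-approximation in the Jain step: the connectivity part of $f$ is a proper $0/1$ function and the parity part $S \mapsto |S \cap T| \bmod 2$ is itself weakly supermodular by the parity identity $|A \cap T| + |B \cap T| = |(A \cup B) \cap T| + |(A \cap B) \cap T|$, but the coordinatewise maximum of two weakly supermodular functions is not weakly supermodular in general. Hence the key technical step is a careful case analysis on which ``reason'' makes $f(A) = 1$ and $f(B) = 1$, exploiting that whichever uncrossing inequality is violated by one reason must be satisfied by the other. A fallback that runs Jain only on the connectivity part and then fixes parity via a minimum $\odd(F) \triangle T$-join gives only a $5$-approximation, so the combined LP formulation is what enables the claimed constant $4$.
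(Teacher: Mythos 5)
You have identified the same construction as the paper: the cut-covering function $f$ combining the three types of cuts (type $S\cap I=\emptyset$, type ``$S$ separates some $C\in\mathcal{C}$'', type ``$|S\cap T|$ odd''), a $2$-approximation via Jain's iterative rounding, and then doubling $F\setminus J$ for a $T$-join $J\subseteq F$ to fix parity. This is exactly the paper's route, and the final accounting $\ell(F\cupp(F\setminus J))\le 2\ell(F)\le 4\ell(\OPT)$ is correct. However, there are genuine gaps.

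First and most importantly, the weak supermodularity of $f$ is the technical heart of the argument and you explicitly defer it (``the key technical step is a careful case analysis\dots''). The paper proves it by splitting $f=\max\{f_a,f_b,f_c\}$, observing each piece is weakly supermodular on its own, and then handling the three mixed pairs $(S_a,S_b)$, $(S_a,S_c)$, $(S_b,S_c)$ via a case-specific choice of complementary-slackness-style inequalities. In particular, for the pair of type $(S_b,S_c)$ one has to argue that among the four expressions $f(S_b\cup S_c),f(S_b\cap S_c),f(S_b\setminus S_c),f(S_c\setminus S_b)$, whichever is smallest, the two inequalities from the parity side and the Steiner-forest side that involve it force the other three to equal~$1$. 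This is not ``exploiting that whichever uncrossing inequality is violated by one reason must be satisfied by the other'' in any obvious sense; it genuinely requires the case analysis. Moreover, your stated lemma that the connectivity part of $f$ is a \emph{proper} $0/1$ function is false once you symmetrize the $I$-condition (take $V=\{a,b,c\}$, $I=\{a,b\}$, $S_1=\{a\}$, $S_2=\{b\}$: $f_a(S_1\cup S_2)=1$ while $f_a(S_1)=f_a(S_2)=0$, violating maximality on disjoint sets), so that shortcut does not work. The paper keeps $f_a$ asymmetric, shows it is weakly supermodular by a direct one-line argument, and shows $f_b$ and $f_c$ are respectively pairwise-connectivity and proper; the mixing of types is a separate step.

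Second, the statement to be proved says \emph{strongly polynomial}. Jain's iterative rounding requires repeatedly solving linear programs, and a naive implementation is polynomial but not strongly polynomial. The paper handles this by (i) showing the separation problem over the polytope $P=\{x\in[0,1]^E: x(\delta(S))\ge f(S)\ \forall S\}$ can be solved in strongly polynomial time (global min cut for type~(a), pairwise min cuts for type~(b), min-$T$-cut for type~(c)), and (ii) invoking the Frank--Tardos preprocessing to replace $\ell$ by an objective of small encoding length so that all LP solves are strongly polynomial. Both of these ingredients are missing from your writeup. Also, note that Jain's LP has box constraints $x\in[0,1]^E$ and produces a \emph{set} $F\subseteq E$, not a multi-set; your LP has only $x\ge 0$, and without the upper bounds the iterative rounding does not directly apply. (That $\OPT$ contains a sub\emph{set} satisfying all three cut conditions, so that a $0/1$ solution suffices to compare against, is the observation the paper uses to make this work.)
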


We defer the proof to Section~\ref{sec:4-Approx}.
In the rest of this paper, we will derive a strongly polynomial $(\alpha+\epsilon)$-approximation algorithm for
$\Phi$-TSP instances with bounded interface size, where $\alpha$ is the approximation guarantee for TSP; see Theorem~\ref{thm:phiTourMain}.

\subsection{Iterative improvement of basic algorithm}\label{sec:boosting_thm}

For a TSP algorithm $\Ascr$, we denote for every weighted graph $G$ by $f_{\Ascr}(G)$ the maximum 
runtime of algorithm $\Ascr$ on any subgraph of $G$.
Similarly, for a $\Phi$-TSP algorithm $\Bscr$, we denote for every weighted graph $G$ and any $k\in\mathbb{R}_{\ge 0}$ by 
$f_{\Bscr}(G,k)$ the maximum runtime of algorithm $\Bscr$ on any instance $(G',\Phi)$, where 
$G'$ is a subgraph of $G$ and $|I_{\Phi}| \le k$.

Our plan is to start with the $4$-approximation algorithm for $\Phi$-TSP guaranteed by Theorem~\ref{thm:fourApprox}, and successively improve it through a TSP algorithm with an approximation guarantee $\alpha$.
The following Boosting Theorem is the main technical result towards this goal and quantifies the improvement in terms of approximation factor that we are able to obtain in one improvement step.

\begin{theorem}[Boosting Theorem]\label{thm:iterImprovement}
Let $\alpha, \beta > 1$. Suppose we are given:
\begin{enumerate}[label=(\alph*)]
\item\label{item:iterTSP} an $\alpha$-approximation algorithm $\mathcal{A}$ for TSP, and
\item\label{item:iterPhiTSP} a $\beta$-approximation algorithm $\mathcal{B}$ for $\Phi$-TSP.
\end{enumerate}

Then there is an algorithm that, for any $\epsilon \in (0,1]$, any weighted graph $G=(V,E,\ell)$, and any feasible interface $\Phi = (I,T,\mathcal{C})$ of $G$, returns a $\Phi$-tour $F$ in $G$ of length
\begin{equation}\label{eq:boostingGuarantee}
\ell(F) \leq \max\left\{ (1+\epsilon)\alpha,\ \beta - \frac{\epsilon}{8}(\beta-1) \right\}\cdot \ell(\OPT)
\end{equation}
in time $|V|^{O\left(\frac{|I|}{\epsilon}\right)}
\cdot \left(f_{\mathcal{A}}(G) + f_{\mathcal{B}}\left(G, \frac{9|I|}{\epsilon}\right)\right)$, where $\OPT$ is a shortest $\Phi$-tour in $G$. 
In particular, the algorithm makes calls to $\mathcal{B}$ only on instances with interfaces of size bounded by $\frac{9 |I|}{\epsilon}$. 
\end{theorem}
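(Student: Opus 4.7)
The plan is to implement the dichotomy sketched in Section~\ref{sec:high_level}, suitably generalized from Path TSP to $\Phi$-TSP. At a high level, I compute a minimum $T$-join $J^*$ and split into two cases depending on whether $\ell(J^*)$ is small or large relative to $\ell(\OPT)$; to remove the dependence on the unknown value $\ell(\OPT)$ I can guess it in powers of $(1+\epsilon)$ (paying only a polynomial factor in the running time) or work with scale-invariant ratios. The threshold $\tau$ separating the two cases will be tuned so that Case~A matches the first term and Case~B matches the second term of~\eqref{eq:boostingGuarantee}.

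In Case~A, when $\ell(J^*)\le \tau$, I apply the reduction of Section~\ref{sec:shortTJoin}: I construct a TSP instance on a subgraph of $G$ (by identifying/contracting the interface and making suitable modifications to encode the partition $\mathcal{C}$), invoke $\mathcal{A}$, and combine the resulting TSP tour with $J^*$ to assemble a $\Phi$-tour. Since the modified TSP instance has optimum at most $\ell(\OPT)+O(\ell(J^*))$ and the combination adds $J^*$ once more, the resulting length is at most $\alpha\ell(\OPT)+O(\ell(J^*)) \le (1+\epsilon)\alpha\cdot \ell(\OPT)$ for an appropriate choice of $\tau = \Theta(\epsilon\,\ell(\OPT)/\alpha)$.

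In Case~B, when $\ell(J^*)>\tau$, I use the LP dual of the minimum $T$-join problem to obtain a laminar family $\mathcal{L}$ of $T$-cuts with non-negative weights $y$ summing to $\ell(J^*)$. Because $\ell(\OPT)\ge \sum_{W} y_W\cdot |\delta(W)\cap\OPT|$ and $|\delta(W)\cap\OPT|$ is odd (hence at least $1$) for every $T$-cut $W$, an averaging argument shows that a significant fraction of the dual mass lies on ``few-edge cuts'' that $\OPT$ crosses in only $O(1/\epsilon)$ edges. These are the cuts I process with the dynamic program of Section~\ref{sec:dynProg}: the laminar structure of $\mathcal{L}$ allows me to process them from innermost to outermost, and, for each selected cut, to enumerate both the crossing $\OPT$-edges and the connectivity pattern they induce between their endpoints. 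This decomposes the problem into $\Phi$-TSP sub-instances on subgraphs of $G$ with enlarged interfaces; by limiting the number of processed cuts to $O(|I|/\epsilon)$ and the number of edges crossed per cut to $O(1)$, the total interface size stays bounded by $\frac{9|I|}{\epsilon}$, so $\mathcal{B}$ can be legally invoked on each sub-instance.

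The length accounting in Case~B is (cost of guessed edges)$\,+\,\beta\cdot$(cost of $\OPT$ outside the guessed edges). Tuning $\tau$ so that the guessed edges account for at least a $\frac{\epsilon(\beta-1)}{8\beta}$ fraction of $\ell(\OPT)$ yields the bound $\bigl(\beta-\frac{\epsilon}{8}(\beta-1)\bigr)\ell(\OPT)$, matching the second argument of the maximum in~\eqref{eq:boostingGuarantee}. The DP enumerates $|V|^{O(|I|/\epsilon)}$ states---one per feasible choice of interface vertices and partition on the current cut---which, combined with the calls to $\mathcal{A}$ and $\mathcal{B}$, yields the stated running time. The main obstacle is Case~B: formulating the DP so that sub-instances are correctly described by their interface data (in particular, so that the guessed partitions of the cut vertices faithfully encode the connectivity requirement~\ref{item:PhiTourConnectivity} of Definition~\ref{def:phi_tour}), arguing that the glued solutions indeed form a valid $\Phi$-tour of $G$, and proving via the dual-averaging argument that enough $\OPT$-mass can be captured inside few-edge cuts to deliver exactly the $\frac{\epsilon}{8}(\beta-1)$ improvement claimed in~\eqref{eq:boostingGuarantee}.
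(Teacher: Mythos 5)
Your high-level dichotomy---split on whether a minimum $T$-join $J^*$ is short or long relative to $\ell(\OPT)$, reduce to TSP in the first case and to a laminar-family dynamic program in the second, with the threshold tuned to balance the two terms of~\eqref{eq:boostingGuarantee}---is exactly the structure of the paper's proof, and your description of Case~B (the $T$-join LP dual, the laminar family of $T$-cuts, the averaging argument that a $1-O(\epsilon)$ fraction of the dual mass sits in cuts crossed by $O(1/\epsilon)$ $\OPT$-edges, and the DP that processes cuts inside-out while tracking induced interfaces) is faithful to Sections~\ref{sec:dynProg}. One small stylistic difference: the paper never needs to estimate $\ell(\OPT)$ to decide which case applies; it simply runs both sub-algorithms (Algorithm~\ref{algo:iterImprovement}) and returns the shorter $\Phi$-tour, so the case distinction appears only in the analysis, not in the algorithm.

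There is, however, a genuine gap in your Case~A. You propose to ``construct a TSP instance on a subgraph of $G$ (by identifying/contracting the interface and making suitable modifications to encode the partition~$\mathcal{C}$).'' A contracted graph is not a subgraph, so calling $\mathcal{A}$ on $G/I$ (or on a graph with added gadget edges encoding $\mathcal{C}$) violates the theorem's promise that $\mathcal{A}$ is invoked only on subgraphs of $(G,\ell)$---a constraint that matters precisely because it is what makes the reduction transfer to Graph~TSP. The paper's Algorithm~\ref{algo:shortTJoinNoGuessing} sidesteps this: it runs $\mathcal{A}$ separately on each connected component of (a subgraph of) $G$, and the feasibility of $\Phi$ guarantees that the union $Q$ of these tours, together with the $T$-join $J$, automatically satisfies the connectivity-to-$I$ and the $\mathcal{C}$-partition requirements (Lemma~\ref{lem:algoShortTJoinNoGuessingFeasible}). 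Moreover, your bound ``the modified TSP instance has optimum at most $\ell(\OPT)+O(\ell(J^*))$'' is not something you can get for free. When $(V,\OPT\cup J)$ has more connected components than $G$, you must add an edge set $L$ with up to $|I|-1$ edges to patch them together before you can charge $Q$ against a union of TSP tours, and $\ell(L)$ is controlled not by $\ell(J^*)$ but by the largest edge length outside $\OPT\cup J$. Making this term small is exactly what the heavy-edge guessing in Algorithm~\ref{algo:shortTJoin} accomplishes: one enumerates $|V|^{O(|I|/\delta)}$ candidate sets of ``heavy'' edges to delete and then invokes Algorithm~\ref{algo:shortTJoinNoGuessing} on the surviving subgraph. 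Without that step the $(1+\delta)\alpha\cdot\ell(\OPT)+(\alpha+1)\ell(J)$ bound of Theorem~\ref{thm:algShortTJoin} does not follow.

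A further small correction to your accounting in Case~B: the paper bounds the interface growth as $|I|+k\cdot(\width(\mathcal{L})+1)$ where $k=\lfloor 1/\delta\rfloor=\Theta(1/\epsilon)$ is the maximum number of guessed edges \emph{per cut}, and $\width(\mathcal{L})\le|T|-1\le|I|-1$ is the width of the laminar family (which controls how many cuts can appear at once). You have these two roles reversed (``$O(|I|/\epsilon)$ cuts with $O(1)$ edges each''); the product is the same so the $9|I|/\epsilon$ bound still comes out, but the structural reason is different and matters for the DP's state space.
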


To prove Theorem~\ref{thm:mainReduction}, we start with $\beta = 4$ (Theorem~\ref{thm:fourApprox}) and apply Theorem~\ref{thm:iterImprovement}
repeatedly, but only a constant number of times.
The approximation guarantee $\beta$ decreases until it reaches $(1+\epsilon)\alpha$.
All interfaces will have constant size. We defer the details to Section~\ref{sec:MainThm}.

\subsection{Proof outline of Boosting Theorem (Theorem~\ref{thm:iterImprovement})}\label{sec:proof_of_boosting}

Theorem~\ref{thm:iterImprovement} is obtained by designing two algorithms to obtain a $\Phi$-tour and then returning the better of the $\Phi$-tours computed by these algorithms. 
Each of the following two theorems summarizes the guarantee we obtain with one of the two algorithms.
After that, Algorithm~\ref{algo:iterImprovement}, described below, combines these two sub-procedures to obtain an algorithm that implies Theorem~\ref{thm:iterImprovement}.

The following theorem yields a short $\Phi$-tour if the length of a minimum $T_{\Phi}$-join is small. 
\begin{theorem}\label{thm:algShortTJoin}
Let $\alpha >1$.
Assume we are given an $\alpha$-approximation algorithm $\mathcal{A}$ for TSP. 
Then, for any $\delta >0$, any weighted graph $G=(V,E,\ell)$, and any feasible interface $\Phi=(I,T,\mathcal{C})$ of $G$, one can determine a $\Phi$-tour $F$ in $G$ with 
\begin{equation*}
\ell(F)\leq (1+\delta) \cdot \alpha \cdot \ell(\OPT) + (\alpha+1)\cdot \ell(J)%
\end{equation*}
in time $|V|^{O\left(\frac{|I|}{\delta}\right)} \cdot f_{\mathcal{A}}(G)$,
where $J$ is a shortest $T$-join in $G$ and $\OPT$ is a shortest $\Phi$-tour in $G$.
\end{theorem}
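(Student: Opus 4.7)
Plan: The key observation is that for any TSP tour $F_{\TSP}$ of $G$, the multi-set $F_{\TSP} \cupp J$ has $\odd = T$ and remains connected, so it is automatically a $T$-tour and hence a valid $\Phi$-tour for any $\Phi = (I,T,\mathcal{C})$ with $T \subseteq I$ (both the $(V,\cdot)/I$-connectivity condition and the $\mathcal{C}$-connectivity condition follow from plain connectedness). It therefore suffices to produce a TSP tour of cost at most $\alpha \cdot \big[(1+\delta)\ell(\OPT) + \ell(J)\big]$; appending $J$ then yields a $\Phi$-tour of the claimed cost $(1+\delta)\alpha\ell(\OPT) + (\alpha+1)\ell(J)$.

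The structural basis of the plan is that $\OPT \cupp J$ has all even degrees (as the symmetric-difference union of two $T$-joins) and spans $V$; if it were connected, it would already be a TSP tour of $G$ of cost $\ell(\OPT) + \ell(J)$, and we would be done by a single call to $\mathcal{A}$ on $G$. In general, however, $\OPT \cupp J$ may have up to $|I|$ connected components, each containing at least one interface vertex (since $(V,\OPT)/I$ is connected). To merge these into a single TSP tour of cost at most $(1+\delta)\ell(\OPT) + \ell(J)$, I would enumerate a near-optimal Eulerian connector. Setting the threshold $\tau = \delta\,\ell(\OPT)/|I|$ (obtained by a standard geometric guessing scheme for $\ell(\OPT)$), any connector of cost at most $\ell(\OPT)$ contains at most $|I|/\delta$ edges of length $\geq \tau$; enumerating tuples of up to $|I|/\delta$ such heavy edges, each specified by its two endpoints in $V$, yields $|V|^{O(|I|/\delta)}$ candidates, matching the claimed running time.

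For each guess $H$ of the heavy edges, I would construct a corresponding TSP instance on a suitable subgraph of $(G,\ell)$ such that, for the ``correct'' guess, $\mathcal{A}$'s output together with $H$ forms an Eulerian connected multi-subgraph of $G$ spanning $V$. Running $\mathcal{A}$ and appending $J$ gives a candidate $\Phi$-tour; we output the cheapest over all guesses. For the correct guess, the cost guarantee follows because the light portion of the connector has total length at most $\delta\,\ell(\OPT)$ and is absorbed by the TSP approximation, whereas the heavy portion $H$ is pre-committed at its true cost, giving an auxiliary TSP optimum of at most $(1+\delta)\ell(\OPT) + \ell(J)$ and hence a $\Phi$-tour of cost at most $\alpha\cdot\big[(1+\delta)\ell(\OPT) + \ell(J)\big] + \ell(J)$.

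The main obstacle is designing the auxiliary instance for a given guess $H$ so that (i) $\mathcal{A}$ is called only on a legitimate subgraph of $(G,\ell)$, as required by the runtime accounting through $f_{\mathcal{A}}(G)$, (ii) for the correct guess, the TSP optimum on that subgraph is indeed bounded by $(1+\delta)\ell(\OPT) + \ell(J)$, and (iii) the returned tour together with $H$ and $J$ reassembles into a proper Eulerian connected multi-subgraph of $G$ spanning $V$. A secondary subtlety is quantizing $\ell(\OPT)$ to within a $(1+\delta)$ factor in order to pin down $\tau$; this is handled by trying $O((\log n)/\delta)$ geometric guesses and is absorbed into the $|V|^{O(|I|/\delta)}$ enumeration.
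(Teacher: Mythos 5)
Your opening observation, that a TSP tour plus a $T$-join $J$ yields a $\Phi$-tour, is the same basic reduction used in the paper (its ``simple'' Algorithm~2), with the minor caveat that $G$ need not be connected: the paper therefore calls $\mathcal{A}$ on each connected component separately. You also correctly identify the right structural picture, namely that $\OPT \cupp J$ is Eulerian and spanning but may split into up to $|I|$ components (since $(V,\OPT)/I$ is connected), so a connector $L$ with $|L|\le |I|-1$ edges turns $\OPT\cupp J\cupp L\cupp L$ into a union of TSP tours.

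However, from this point your plan diverges from the paper and, as you yourself flag, leaves the decisive step unresolved. You propose to guess the heavy edges $H$ of the connector and ``pre-commit'' them so that $\mathcal{A}$'s output together with $H$ is Eulerian and connected, while still only invoking $\mathcal{A}$ on a subgraph of $(G,\ell)$ with unchanged edge lengths. There is no way to set this up directly: pre-committing $H$ would require either contracting its endpoints (which breaks parity so the result need not be Eulerian) or zeroing its edge lengths (which is not a subgraph of $(G,\ell)$). The paper sidesteps this entirely by guessing a different set: it fixes a threshold $H_f=\{e:\ell(e)\ge\ell(f)\}$ and guesses $H^*=\OPT\cap H_f$ with $|H^*|\le \sfrac{2|I|}{\delta}$, then \emph{deletes} $D=H_f\setminus(H^*\cup J)$ from $G$ and runs Algorithm~2 per component of $G-D$. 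Since $\OPT$ and $J$ survive the deletion, $\OPT$ is still a $\Phi$-tour of $G-D$, and crucially the connector $L$, which by minimality lies in $E\setminus(\OPT\cup J)$, is now forced to use only edges of length below the threshold $\sfrac{\delta\ell(\OPT)}{2|I|}$; hence $2\ell(L)\le\delta\ell(\OPT)$ and the component-wise TSP optima add up to $(1+\delta)\ell(\OPT)+\ell(J)$. So the object to guess is heavy $\OPT$-edges, not heavy connector edges, and the mechanism is deletion (to cheapen the implicit connector) rather than pre-commitment. This is the missing idea; without it, your plan does not close, because a cheap connector of cost $O(\ell(\OPT))$ may simply not exist in $G$, and committing expensive connector edges cannot be encoded as a TSP instance on a subgraph of $(G,\ell)$.
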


We will give the proof in Section~\ref{sec:shortTJoin}.
The next theorem, proven in Section~\ref{sec:dynProg}, states that we also obtain a short $\Phi$-tour if the length of a minimum $T$-join is large. 

\begin{theorem}\label{thm:algLongTJoin}
Let $\beta > 1$.
Assume we are given a $\beta$-approximation algorithm $\mathcal{B}$ for $\Phi$-TSP.
Then, for any $\delta >0$, any weighted graph $G=(V,E,\ell)$, and any feasible interface $\Phi=(I,T,\mathcal{C})$ of $G$, 
one can determine a $\Phi$-tour $F$ in $G$ with
\begin{equation*}
\ell(F) \leq \left(\big. \beta+ \delta\cdot(\beta-1)\right)\cdot \ell(\OPT) - (\beta-1)\cdot \ell(J)%
\end{equation*}
in time
 $|V|^{O\left(|I| +\frac{|T|}{\delta}\right)} \cdot f_{\mathcal{B}}\left(G, |I|+\frac{|T|}{\delta}\right)$, 
 where $J$ is a shortest $T$-join in $G$ and $\OPT$ is a shortest $\Phi$-tour in $G$.
\end{theorem}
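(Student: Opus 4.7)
The plan is to invoke LP duality for the $T$-join polytope, use it to identify cuts in which the optimum $\Phi$-tour $\OPT$ uses only few edges, and then run a dynamic program over the laminar support of the dual that ``guesses'' those few edges and hands each remaining sub-instance to $\mathcal{B}$ as a black box. First, compute a shortest $T$-join $J^{*}$ together with an optimal laminar family $\mathcal{L}$ of $T$-odd sets (each $L\in\mathcal{L}$ satisfies $|T\cap L|$ odd) with nonnegative dual weights $y_L$ satisfying $\sum_{L:\,e\in\delta(L)} y_L\le\ell(e)$ for every $e\in E$ and $\sum_{L\in\mathcal{L}} y_L=\ell(J^{*})$; such a laminar optimum dual is classical and strongly polynomially computable. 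Writing $c_L:=|\delta(L)\cap\OPT|$, which is odd and therefore at least $1$, summation of the dual inequality over the edges of $\OPT$ yields $\sum_L y_L c_L\le\ell(\OPT)$, and hence $\sum_L y_L(c_L-1)\le\ell(\OPT)-\ell(J^{*})$. Setting $K:=\lceil 1/\delta\rceil$ and calling $L\in\mathcal{L}$ \emph{small} if $c_L\le K$ and \emph{large} otherwise, the total $y$-weight of large cuts is at most $\tfrac{1}{K}(\ell(\OPT)-\ell(J^{*}))\le\delta\,\ell(\OPT)$, so small cuts capture at least $\ell(J^{*})-\delta\,\ell(\OPT)$ of the dual.

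Second, run a dynamic program over $\mathcal{L}$ from the leaves toward the roots. For each $L\in\mathcal{L}$ maintain a table of partial sub-solutions inside $L$ indexed by a \emph{boundary configuration}: a set $S\subseteq V(L)$ of up to $K$ boundary vertices (the intended endpoints of edges of $F\cap\delta(L)$), a partition of $(I\cap L)\cup S$ recording which vertices must end up in the same connected component of $F$, and a parity requirement on $S$. At $L$, one may either \emph{close} it by enumerating the $|V|^{O(K)}$ possible choices for $\delta(L)\cap F$ of size at most $K$ and combining the children tables consistently, or \emph{leave it open} and defer to the parent. Any region that remains open up to a root of $\mathcal{L}$ is fed to $\mathcal{B}$ as a $\Phi'$-TSP instance; since the maximal small descendants of such a region are pairwise disjoint and each contains an odd number of $T$-vertices, there are at most $|T|$ of them, so the interface handed to $\mathcal{B}$ has size at most $|I|+K(|T|+1)=O(|I|+|T|/\delta)$, matching the exponent in the claimed time bound.

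Third, analyze the DP branch that, for every small $L$, guesses exactly $\delta(L)\cap\OPT$ together with the connectivity pattern $\OPT$ induces. Let $G^{*}$ be the union of all guessed edges. Then $\OPT\setminus G^{*}$ restricted to each sub-instance is a feasible $\Phi'$-tour, so the sub-tours returned by $\mathcal{B}$ total at most $\beta(\ell(\OPT)-\ell(G^{*}))$, and the assembled $\Phi$-tour satisfies $\ell(F)\le\ell(G^{*})+\beta(\ell(\OPT)-\ell(G^{*}))=\beta\,\ell(\OPT)-(\beta-1)\,\ell(G^{*})$. Using LP duality,
\begin{equation*}
\ell(G^{*})\;\ge\;\sum_{e\in G^{*}}\sum_{L:\,e\in\delta(L)} y_L\;\ge\;\sum_{L\text{ small}} y_L\,c_L\;\ge\;\sum_{L\text{ small}} y_L\;\ge\;\ell(J^{*})-\delta\,\ell(\OPT),
\end{equation*}
and substituting yields the bound $(\beta+\delta(\beta-1))\,\ell(\OPT)-(\beta-1)\,\ell(J^{*})$. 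I expect the main obstacle to be the clean implementation of the dynamic program, in particular specifying the state, correctly propagating the parity and connectivity constraints when open (large) cuts are deferred up through the laminar tree, and verifying that the interface handed to each call of $\mathcal{B}$ never exceeds $|I|+O(|T|/\delta)$; once this bookkeeping is in place, the approximation guarantee follows from the short duality calculation above.
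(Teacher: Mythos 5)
Your proposal follows essentially the same route as the paper: you construct the same laminar dual optimum to the $T$-join LP, use the same duality calculation to show that cuts with at most $K=O(1/\delta)$ crossing $\OPT$-edges carry $y$-weight at least $\ell(J)-\delta\,\ell(\OPT)$, run a dynamic program over the laminar family that guesses $\OPT$-edges in those cuts, hand the remaining regions to $\mathcal{B}$, and assemble. The paper factors this into two lemmas (the quality of the laminar family, Lemma~\ref{lem:goodLaminarFamily}, and the abstract DP guarantee, Theorem~\ref{thm:guessInLaminarFamily}), while you weave the two together, but the numbers and the bound $(\beta + \delta(\beta-1))\ell(\OPT) - (\beta-1)\ell(J)$ come out identically.

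That said, the part you flag as "the main obstacle" — the precise specification of the DP state and the proof that recombining sub-solutions yields a genuine $\Phi$-tour — is exactly where the paper expends its technical effort, and it is not a mere bookkeeping issue. Two points in your sketch would need repair. First, your DP alternative of "close a cut or leave it open and defer to the parent" is not how the paper's DP avoids large cuts: the table is indexed by pairs $(L,\Phi_L)$ over \emph{all} $L\in\mathcal{L}\cup\{V\}$, and when the DP processes a set $L$ it enumerates \emph{arbitrary} disjoint subfamilies $\{L_1,\dots,L_p\}\subseteq\mathcal{L}$ of proper subsets (not just laminar-tree children), so that large intermediate cuts are skipped automatically without any explicit "open" state propagating up the tree; a literal "defer to parent" scheme would have to represent the open region's state at the deferred node, which the table as you've described it does not do. Second, your boundary configuration has the parity requirement "on $S$," but the parity data $T_L=\odd(F[L])$ must live on all of $I_L=(I\cap L)\cup S$, since interface vertices in $I\cap L$ can also have odd degree inside $L$. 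These are the places where the paper's machinery of induced interfaces (Definition~\ref{def:induced_interface}, Lemma~\ref{lem:induced_interfaces}) and the recombination Lemma~\ref{lem:composedSubInterfaces} do real work; without them your correctness argument for the assembled $\Phi$-tour is incomplete.
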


\begin{algorithm2e}[H]
\begin{enumerate}[label=\arabic*.]
\item Run algorithm guaranteed by Theorem~\ref{thm:algShortTJoin} with $\delta = \sfrac{\epsilon}{2}$ to obtain $\Phi$-tour $F_1$.

\item Run algorithm guaranteed by Theorem~\ref{thm:algLongTJoin} with $\delta = \sfrac{\epsilon}{8}$ to obtain $\Phi$-tour $F_2$.

\item Return the shorter $\Phi$-tour among $F_1$ and $F_2$.

\end{enumerate}

\caption{Approximation algorithm for $\Phi$-TSP to prove Theorem~\ref{thm:iterImprovement}}\label{algo:iterImprovement}
\end{algorithm2e}

\begin{lemma}\label{lem:algProvesIterImp}
Given a weighted graph $G$ and a feasible interface $\Phi=(I,T,\mathcal{C})$ of $G$, 
Algorithm~\ref{algo:iterImprovement} returns a $\Phi$-tour $F$ in $G$ with the guarantees stated in Theorem~\ref{thm:iterImprovement}.
\end{lemma}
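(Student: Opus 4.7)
The plan is to verify separately the three conclusions of Theorem~\ref{thm:iterImprovement}: that the output $F$ is a $\Phi$-tour, the approximation guarantee~\eqref{eq:boostingGuarantee}, and the running time. Correctness and the running-time bound drop out directly from Theorems~\ref{thm:algShortTJoin} and~\ref{thm:algLongTJoin}: both $F_1$ and $F_2$ are $\Phi$-tours by construction, so the returned edge set is as well. Setting $\delta=\epsilon/2$ in Step~1 and $\delta=\epsilon/8$ in Step~2, the two calls fit into a total of $|V|^{O(|I|/\epsilon)}\cdot(f_{\Ascr}(G)+f_{\Bscr}(G,9|I|/\epsilon))$, using $|T|\le|I|$ and $\epsilon\le 1$ to bound $|I|+8|T|/\epsilon$ by $9|I|/\epsilon$.

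The only non-trivial work is the approximation guarantee, which I plan to handle via a threshold argument on $j:=\ell(J)$, the length of a shortest $T$-join. Write $L:=\ell(\OPT)$ and $M:=\max\{(1+\epsilon)\alpha,\ \beta-\tfrac{\epsilon}{8}(\beta-1)\}$ for the target factor, and choose the threshold $\tau:=\tfrac{\epsilon\alpha}{2(\alpha+1)}L$. The plan is to argue that for $j\le\tau$ the tour $F_1$ already attains $ML$, while for $j>\tau$ the negative coefficient of $j$ in the guarantee of $F_2$ forces $F_2$ to attain it; since Algorithm~\ref{algo:iterImprovement} returns the shorter of the two, this suffices.

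Concretely, if $j\le\tau$, Theorem~\ref{thm:algShortTJoin} yields
\[
\ell(F_1)\;\le\;(1+\tfrac{\epsilon}{2})\alpha L+(\alpha+1)\tau\;=\;(1+\tfrac{\epsilon}{2})\alpha L+\tfrac{\epsilon\alpha}{2}L\;=\;(1+\epsilon)\alpha L\;\le\;ML.
\]
If instead $j>\tau$, Theorem~\ref{thm:algLongTJoin} together with $j>\tau$ gives
\[
\ell(F_2)\;<\;\bigl(\beta+\tfrac{\epsilon}{8}(\beta-1)\bigr)L-(\beta-1)\tau\;=\;\beta L+\tfrac{\epsilon(\beta-1)}{8}L\Bigl[1-\tfrac{4\alpha}{\alpha+1}\Bigr].
\]
Rewriting the bracket as $-3+\tfrac{4}{\alpha+1}$, one sees it is at most $-1$ for every $\alpha\ge 1$, so the right-hand side is at most $\bigl(\beta-\tfrac{\epsilon}{8}(\beta-1)\bigr)L\le ML$, as required.

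The only point requiring any care is the small piece of algebra $1-\tfrac{4\alpha}{\alpha+1}\le -1$ for $\alpha\ge 1$, which is where the hypothesis $\alpha>1$ gets used and which dictates the specific choice of threshold $\tau$ and the particular split $\epsilon/2$ versus $\epsilon/8$ between the two steps. Everything else is routine bookkeeping: plug in the two sub-routine bounds and observe that the chosen $\tau$ makes the two upper bounds meet exactly at $ML$.
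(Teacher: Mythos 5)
Your proof is correct and follows essentially the same structure as the paper's: both split on whether $\ell(J)$ is small or large relative to $\ell(\OPT)$ and show that $F_1$ suffices in the first case and $F_2$ in the second. The only difference is the threshold: the paper uses the $\alpha$-independent value $\tfrac{\epsilon}{4}\ell(\OPT)$, in which case the case $\ell(J)\le\tfrac{\epsilon}{4}\ell(\OPT)$ needs the inequality $\tfrac{\alpha+1}{2}\le\alpha$ and the other case is an exact cancellation, whereas you pick $\tau=\tfrac{\epsilon\alpha}{2(\alpha+1)}\ell(\OPT)$ so the first case becomes tight and the $\alpha\ge1$ hypothesis is instead used in the second case via the bracket $1-\tfrac{4\alpha}{\alpha+1}\le-1$. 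Both choices work and the arithmetic checks out.
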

\begin{proof}
The running time guarantee stated in Theorem~\ref{thm:iterImprovement} immediately follows from Theorem~\ref{thm:algShortTJoin} and Theorem~\ref{thm:algLongTJoin}, using $|I| + \frac{|T|}{\sfrac{\epsilon}{8}} \le \frac{9 |I|}{\epsilon}$.

Let $F\in \{F_1,F_2\}$ be the $\Phi$-tour returned by Algorithm~\ref{algo:iterImprovement}. 
To show that $F$ fulfills the approximation guarantee stated in~\eqref{eq:boostingGuarantee}, we distinguish two cases.

If $\ell(J) \leq \frac{\epsilon}{4}\cdot \ell(\OPT)$, then the solution $F_1$ will be short enough:
\begin{align*}
\ell(F) &\leq \ell(F_1)
\leq \left(1+\frac{\epsilon}{2}\right)\alpha\cdot\ell(\OPT) + (\alpha+1)\frac{\epsilon}{4}\cdot\ell(\OPT)
\leq (1+\epsilon)\cdot \alpha\cdot \ell(\OPT)\enspace,
\end{align*}
where we used $\sfrac{(\alpha+1)}{2}\leq \alpha$ for the last inequality, which holds because $\alpha\geq 1$.

If $\ell(J) \geq \frac{\epsilon}{4}\cdot \ell(\OPT)$, then the $\Phi$-tour $F_2$ will be short enough:
\begin{align*}
\ell(F) &\leq \ell(F_2)
\leq \left(\beta + \frac{\epsilon}{8}(\beta-1) \right)\cdot \ell(\OPT)
- (\beta - 1)\cdot \frac{\epsilon}{4}\cdot \ell(\OPT)
= \left( \beta - \frac{\epsilon}{8} (\beta-1) \right) \cdot \ell(\OPT)\enspace,
\end{align*}
thus completing the proof of Lemma~\ref{lem:algProvesIterImp}.
\end{proof}

For the proof of Theorem~\ref{thm:iterImprovement}, it remains to show
Theorem~\ref{thm:algShortTJoin} and Theorem~\ref{thm:algLongTJoin}.

\section{\boldmath Finding a short $\Phi$-tour if there is a short $T$-join}\label{sec:shortTJoin}

In this section we prove Theorem~\ref{thm:algShortTJoin}, i.e., how to get a short $\Phi$-tour if the shortest $T$-join has small length compared to $\ell(\OPT)$.

We start by analyzing a simple algorithm for computing a $\Phi$-tour. 
However, this simple algorithm will not be sufficient to prove Theorem~\ref{thm:algShortTJoin}.
Thus in a second step, we will refine the algorithm to obtain the desired bound.

\smallskip
\begin{algorithm2e}[H]
\KwIn{a weighted graph $G$, an interface $\Phi=(I,T,\Cscr)$ of $G$, and a $T$-join $J$ in $G$.}
\KwOut{an edge set $F$.}
\begin{enumerate}[label=\arabic*.]
\item In each connected component of $G$ apply $\mathcal{A}$ to get an $\alpha$-approximate TSP-tour.
Let $Q$ be the union of these tours.
\item Return $F = Q\cupp J$.
\end{enumerate}
\caption{A simple $\Phi$-TSP algorithm}\label{algo:shortTJoinNoGuessing}
\end{algorithm2e}
\smallskip

Notice that Algorithm~\ref{algo:shortTJoinNoGuessing} always returns an edge set $F$, even if the input is infeasible. 
We therefore show first that Algorithm~\ref{algo:shortTJoinNoGuessing} does return a $\Phi$-tour whenever it is run with a feasible input.
\begin{lemma}\label{lem:algoShortTJoinNoGuessingFeasible}
The set $F$ returned by Algorithm~\ref{algo:shortTJoinNoGuessing} is a $\Phi$-tour if and only if the input is feasible, i.e., $G$ admits a $\Phi$-tour.
\end{lemma}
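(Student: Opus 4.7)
The plan is to establish the two directions of the biconditional. The forward direction is trivial: if the algorithm returns a set $F$ that is a $\Phi$-tour, then $G$ obviously admits a $\Phi$-tour (namely $F$ itself). So the real work is in the backward direction, which I would prove by verifying the three defining properties of a $\Phi$-tour in Definition~\ref{def:phi_tour} for $F = Q \cupp J$.

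Suppose $G$ admits a $\Phi$-tour. Then by Lemma~\ref{lem:existencePhiTour}, each connected component of $G$ has an even number of $T$-vertices (so in particular $J$ is a well-defined $T$-join in $G$), $G/I$ is connected, and each class $C \in \mathcal{C}$ lies within a single connected component of $G$. I would check the three properties in turn:
\begin{enumerate}
\item Since $Q$ is a union of closed TSP tours, $\odd(Q) = \emptyset$, and since $J$ is a $T$-join, $\odd(J) = T$. Because the odd-degree set of a multi-union is the symmetric difference of the odd-degree sets, $\odd(F) = \odd(Q) \symdiff \odd(J) = T$.
\item In Step 1, $Q$ spans each connected component of $G$ (a TSP tour visits every vertex), so the connected components of $(V, Q)$ are exactly those of $G$, and adding $J$ can only merge components further. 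Therefore the components of $(V, F)$ coincide with those of $G$, and contracting $I$ yields a connected graph precisely because $G/I$ is connected.
\item For any $C \in \mathcal{C}$, all vertices of $C$ lie in one component of $G$ by feasibility, hence in one component of $(V, F)$ by the previous observation.
\end{enumerate}

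There is no real obstacle here; the argument is a direct matching of the feasibility conditions of Lemma~\ref{lem:existencePhiTour} with the $\Phi$-tour conditions, together with the observation that attaching spanning TSP tours to a $T$-join preserves the $T$-join parity and all connectivity requirements. The only subtlety worth stating explicitly is that singleton components of $G$ require no edges to be "spanned", so the algorithm handles them vacuously.
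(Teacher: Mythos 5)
Your proof is correct and takes essentially the same approach as the paper: both invoke Lemma~\ref{lem:existencePhiTour}, observe that $Q$ spans each connected component of $G$ so that $(V,Q\cupp J)$ has the same components as $G$, and conclude the three $\Phi$-tour conditions, including $\odd(Q\cupp J)=T$ via $\odd(Q)=\emptyset$. The paper leaves the trivial forward direction implicit, whereas you state it; otherwise the arguments match.
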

\begin{proof}
Assume that $G$ admits a $\Phi$-tour, which implies by Lemma~\ref{lem:existencePhiTour} that the three properties 
\ref{item:PhiTourExistsJoin}, \ref{item:PhiTourExistsX}, and \ref{item:PhiTourExistsI} listed in Lemma~\ref{lem:existencePhiTour} are fulfilled. 
Because the set $Q$ computed in Algorithm~\ref{algo:shortTJoinNoGuessing} consists of TSP tours in each connected component of $G$, 
the vertex sets of the connected components of $(V,Q)$ and $G$ are the same. 
Because $G$ fulfills~\ref{item:PhiTourExistsX} and~\ref{item:PhiTourExistsI}, this implies that also $(V,Q)$ and $(V,Q\cupp J)$ fulfill these two conditions. 
Finally, $\odd(Q\cupp J) = \odd(J) = T$, because $\odd(Q)=\emptyset$ and $J$ is a $T$-join, which shows that $Q\cupp J$ is indeed a $\Phi$-tour.
\end{proof}

We now analyze the length of the $\Phi$-tour returned by Algorithm~\ref{algo:shortTJoinNoGuessing}.
\begin{lemma}\label{lem:shortTJoinNoGuessing}
Assume we are given an $\alpha$-approximation algorithm $\mathcal{A}$ for TSP.
Let $G=(V,E,\ell)$ be a weighted graph, $\Phi=(I,T,\Cscr)$ a feasible interface of $G$, and $J$ a $T$-join in $G$.
Then, Algorithm~\ref{algo:shortTJoinNoGuessing} computes a $\Phi$-tour $F$ in $G$ with 
\begin{equation*}
\ell(F)\leq \alpha \cdot \ell(\OPT) + (\alpha+1)\cdot \ell(J) + 2\alpha\cdot |I|\cdot \max\left\{\big. \ell(e) : e\in E\setminus (\OPT\cup J) \right\} \enspace,
\end{equation*}
in time $O(|V| \cdot f_{\mathcal{A}}(G))$,
where $\OPT$ is a shortest $\Phi$-tour. 
Here $\max\emptyset:=0$.
\end{lemma}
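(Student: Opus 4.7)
\medskip

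My plan is to separate the claim into its two parts: feasibility and the length bound. Feasibility of $F = Q \cupp J$ as a $\Phi$-tour is already handled by Lemma~\ref{lem:algoShortTJoinNoGuessingFeasible}, so I only need to establish the inequality and the running time. Since $\ell(F) = \ell(Q) + \ell(J)$, and the $\alpha$-approximation guarantee gives $\ell(Q) \leq \alpha \sum_{V'} \ell(Q^*(V'))$, where the sum runs over connected components $V'$ of $G$ and $Q^*(V')$ denotes a shortest TSP tour on $V'$, the whole task reduces to exhibiting, in each connected component $V'$, a TSP tour of $V'$ of length at most $\ell(\OPT[V']) + \ell(J[V']) + 2|I\cap V'|\cdot L$, where $L \coloneqq \max\{\ell(e) : e \in E\setminus (\OPT \cup J)\}$ (with $L=0$ if that set is empty). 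Summing over components and adding $\ell(J)$ then yields exactly the claimed bound.

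The construction of the witness TSP tour on a fixed component $V'$ proceeds as follows. Consider the multi-set $H \coloneqq \OPT[V'] \cupp J[V']$. Because $V'$ is a connected component of $G$, all $\OPT$- and $J$-edges incident to $V'$ lie in $V'$, so for every $v\in V'$ the parities $|\delta(v) \cap \OPT[V']|$ and $|\delta(v)\cap J[V']|$ agree with those in $\OPT$ and $J$ respectively. Consequently $\odd(\OPT[V']) = T\cap V' = \odd(J[V'])$, and $H$ has all even degrees. Let $B_1,\dots,B_m$ be the connected components of $(V',H)$. I claim each $B_i$ contains at least one vertex of $I\cap V'$: otherwise $B_i$ would already be a connected component of $(V,\OPT)$ disjoint from $I$, contradicting condition~\ref{item:PhiTourExistsX} in Lemma~\ref{lem:existencePhiTour}. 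Hence $m \leq |I\cap V'|$.

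To connect the $B_i$ into a single spanning Eulerian multigraph on $V'$, I pick a spanning tree in the multigraph obtained from $G[V']$ by contracting each $B_i$; this uses $m-1$ edges of $G[V']$, and each such edge has its endpoints in two distinct $B_i$'s, so it lies outside $H$, i.e., in $E[V']\setminus (\OPT\cup J)$, giving length $\leq L$ each. Doubling these $m-1$ edges preserves the even-degree property and makes the resulting multigraph connected on $V'$; by Euler's theorem it traces a TSP tour on $V'$ of total length at most
\begin{equation*}
\ell(\OPT[V']) + \ell(J[V']) + 2(m-1)\cdot L \;\leq\; \ell(\OPT[V']) + \ell(J[V']) + 2|I\cap V'|\cdot L.
\end{equation*}

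Summing over components and invoking the $\alpha$-approximation gives $\ell(Q) \leq \alpha\bigl(\ell(\OPT) + \ell(J) + 2|I|\cdot L\bigr)$, from which $\ell(F) = \ell(Q) + \ell(J) \leq \alpha\cdot\ell(\OPT) + (\alpha+1)\cdot\ell(J) + 2\alpha|I|\cdot L$, as required. For the running time, $\mathcal{A}$ is invoked at most once per connected component of $G$, so the bound $O(|V|\cdot f_{\mathcal{A}}(G))$ follows. The main conceptual step I expect to be the one worth writing out carefully is the argument that each $B_i$ must contain an interface vertex (and hence the count $m \leq |I\cap V'|$); everything else is bookkeeping.
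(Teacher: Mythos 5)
Your proof is correct and follows essentially the same route as the paper's: the paper also starts from $\OPT \cupp J$, augments it with a small set $L$ of non-$(\OPT\cup J)$ edges to match the connectivity of $G$, and uses $\OPT\cupp J\cupp L\cupp L$ as a witness that the $\alpha$-approximate $Q$ is short; the only cosmetic difference is that you build this witness component by component while the paper chooses one global $L$ and observes $|L|\leq |I|-1$ directly from the fact that $(V,\OPT)$ has at most $|I|$ connected components.

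Two small wrinkles worth fixing. First, the fact you actually need — that $(V,\OPT)/I$ is connected, so every connected component of $(V,\OPT)$ meets $I$ (when $I\neq\emptyset$) — follows from Definition~\ref{def:phi_tour}~\ref{item:PhiTourConnectToInterface} applied to the $\Phi$-tour $\OPT$, not from Lemma~\ref{lem:existencePhiTour}~\ref{item:PhiTourExistsX}, which is the weaker statement that $G/I$ is connected; citing the latter doesn't give you what you use. Second, the intermediate claim ``hence $m\leq |I\cap V'|$'' fails in the degenerate case $I=\emptyset$ (then $V'=V$, $m=1$, $|I\cap V'|=0$); the bound you actually apply, $2(m-1)L\leq 2|I\cap V'|L$, is still correct there since $m=1$, but the stated inequality $m\leq|I\cap V'|$ should be replaced by $m-1\leq|I\cap V'|$ (or the case $I=\emptyset$ handled separately) to be accurate.
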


\begin{proof}
First, observe that the running time is indeed as claimed, because the bottleneck of the algorithm is calling $\mathcal{A}$ for each connected component of $G$; moreover, the connected components can be found in linear time and there are at most $|V|$ many of them.

To bound $\ell(Q)$, we transform a shortest $\Phi$-tour $\OPT$ into a union of TSP solutions, one for each connected component of $G$.
Let $L\subseteq E$ be a minimal edge set such that the vertex sets of
the connected components of $(V,\OPT\cup J\cup L)$ and $G$ are the same. 
Observe that the multi-set $\OPT\cupp J\cupp L \cupp L$ is a union of TSP solutions, one for each connected component of $G$. 
Because the set $Q$ determined in Algorithm~\ref{algo:shortTJoinNoGuessing} was obtained through $\mathcal{A}$, 
which is an $\alpha$-approximation algorithm, we have
\begin{equation*}
\ell(Q) \leq \alpha \cdot \left(\big.\ell(\OPT) + \ell(J) + 2 \ell(L) \right)\enspace,
\end{equation*}
and, hence, the solution $F=Q\cupp J$ returned by the algorithm satisfies
\begin{equation}\label{eq:lengthBoundOnFNoGuessing}
\ell(F) = \ell(Q) + \ell(J) \leq \alpha\cdot\ell(\OPT) + (\alpha+1)\cdot\ell(J) + 2 \alpha \cdot\ell(L)\enspace.
\end{equation}
Moreover, because $\OPT$ is a $\Phi$-tour, we have that $(V,\OPT)/I$ must be connected, which implies that 
$(V,\OPT)$ has at most $|I|$ connected components, and thus
\begin{equation*}
|L| \leq |I|-1\enspace.
\end{equation*}
Together with~\eqref{eq:lengthBoundOnFNoGuessing}, this leads to the desired guarantee:
\begin{align*}
\ell(F) &\leq \alpha \cdot\ell(\OPT) + (\alpha+1)\cdot\ell(J) + 2\alpha \cdot |I| \cdot \max\left\{\big. \ell(e) : e\in E\setminus (\OPT \cup J) \right\}\enspace,
\end{align*}
where the inequality follows from $L\subseteq E\setminus (\OPT \cup J)$, which holds because the edges in $L$ connect different connected components of $(V, \OPT\cup J)$.
\end{proof}

We now explain how to refine Algorithm~\ref{algo:shortTJoinNoGuessing} by a guessing step
to obtain the guarantees claimed in Theorem~\ref{thm:algShortTJoin}.
If all edges that are not contained in $\OPT \cup J$ have length at most $\frac{\delta \cdot \ell(\OPT)}{2 \cdot |I|}$,
Lemma~\ref{lem:shortTJoinNoGuessing} already implies the desired bound.
To obtain this property, we delete all edges from $G$ that are \emph{heavy}, 
i.e. have length at least $\frac{\delta \cdot \ell(\OPT)}{2 \cdot |I|}$, and are not contained in $\OPT \cup J$.
We guess this set of edges to delete as follows.
First we guess the set $H$ of heavy edges, which can be done in polynomial time by guessing a minimum length edge in $H$.
Then we guess the set $H^* = \OPT\cap H$ of heavy edges contained in $\OPT$.
Algorithm~\ref{algo:shortTJoin} formalizes this procedure and, as we show next, indeed implies Theorem~\ref{thm:algShortTJoin}.

\smallskip
\begin{algorithm2e}[H]
Compute a shortest $T$-join $J$ in $G$. \\
For $f\in E$ define $H_f \coloneqq \{ e\in E: \ell(e) \ge \ell(f) \}$.\\
\For{ every edge set $H \in \{ H_f : f\in E \} \cup \{ \emptyset \}$}{
   \For{every set $H^*\subseteq H$ with $|H^*|\leq \sfrac{2|I|}{\delta}$}{
      Set $D\coloneqq H \setminus (H^* \cup J)$. \\
      Apply Algorithm~\ref{algo:shortTJoinNoGuessing} to the graph $(V, E\setminus D)$ to obtain a multi-set $F_D$ of edges, which is a 
      $\Phi$-tour in $G$ if the input is feasible.
   }
}
Among all computed $\Phi$-tours $F_D$, return a shortest one.
\caption{$\Phi$-TSP algorithm to prove Theorem~\ref{thm:algShortTJoin}}\label{algo:shortTJoin}
\end{algorithm2e}
\smallskip

\begin{proof}[Proof of Theorem~\ref{thm:algShortTJoin}]
 We start by observing that the running time of Algorithm~\ref{algo:shortTJoin} is indeed bounded by $|V|^{O(\sfrac{|I|}{\delta})} \cdot f_{\mathcal{A}}(G)$.
 There are at most $|V|^2$  possible edges $f$ that are being considered in the outer for-loop.
 For each of them, there are $|V|^{O(\sfrac{|I|}{\delta})}$ possible sets $H^*$ considered in the inner for-loop.
 Thus, there are at most $|V|^{O(\sfrac{|I|}{\delta})}$ calls to Algorithm~\ref{algo:shortTJoinNoGuessing}.
 Finally, all other operations can be done in time $|V|^{O(1)}$.
  
 We now show that Algorithm~\ref{algo:shortTJoin} returns a $\Phi$-tour with the guarantees claimed by Theorem~\ref{thm:algShortTJoin}.
 Let $\OPT$ be a shortest $\Phi$-tour and let $H\coloneqq \{ e\in E : \ell(e) \ge \sfrac{\delta \cdot \ell(\OPT)}{2 |I|}\}$ be the 
 set of heavy edges. Then in some iteration of the outer for-loop we consider the set $H$.
 Because 
 \begin{equation*}
  \ell(\OPT) \ge \ell (H\cap \OPT) \ge |H\cap \OPT| \cdot  \frac{\delta \cdot \ell(\OPT)}{2 \cdot |I|}\enspace,
 \end{equation*}
 we have $|H\cap \OPT| \le \sfrac{2|I|}{\delta}$, and thus, we consider the set $H^* \coloneqq H\cap \OPT$ in some iteration of the inner for-loop.
 As $D = H \setminus (H^* \cup J)$ does not contain any edge of $\OPT$, the $\Phi$-tour $\OPT$ is a feasible solution 
 of the instance to which we apply Algorithm~\ref{algo:shortTJoinNoGuessing}.
 Moreover, the set $D$ contains all heavy edges not contained in $\OPT \cup J$ and hence by Lemma~\ref{lem:shortTJoinNoGuessing}, we obtain
 \begin{align*}
  \ell(F_D) &\le  \alpha \cdot \ell(\OPT) + (\alpha+1)\cdot \ell(J) + 2\alpha\cdot |I|\cdot \max \left\{\big. \ell (e) : e\in (E\setminus D) \setminus (\OPT \cup J) \right\} \\ 
  &\le   \alpha \cdot \ell(\OPT) + (\alpha+1)\cdot \ell(J) + 2\alpha\cdot |I|\cdot   \frac{\delta \cdot \ell(\OPT)}{2 \cdot |I|} \\
  &= (1+\delta)\cdot\alpha\cdot\ell(\OPT) + (\alpha+1)\cdot \ell(J)\enspace.
 \end{align*}
\end{proof}

\section{Iterative improvement via dynamic programming}\label{sec:dynProg}

In this section, we show how to prove Theorem~\ref{thm:algLongTJoin}, i.e., how to obtain a short $\Phi$-tour
if the length of a shortest $T$-join is large. 
Here, our goal is to use dynamic programming to ``guess'' a significant portion, in terms of total length, of edges used in $\OPT$.
Very recently, dynamic programming has become a strong tool in the context of Path TSP, Chain-Constrained Spanning Trees, 
and related problems~\cite{traub_2018_approaching,traub_2018_beating,zenklusen_2019_approximation,nagele_2019_new}, 
leading to the currently best known approximation factors for these settings. 
The dynamic programming idea we employ combines and extends elements used in these prior dynamic programming techniques. 

What we aim to achieve with dynamic programming in the context of $\Phi$-TSP, for some interface $\Phi=(I,T,\mathcal{C})$ of $G$, is the following. 
We can fix an arbitrary laminar family $\mathcal{L}$ of subsets of $V$. 
Our goal is to guess what edges of $\OPT$ are crossing the cuts in $\mathcal{L}$. 
Clearly, if $\OPT\cap \delta(L)$ contains many edges for some $L\in \mathcal{L}$, it seems computationally elusive to guess them. 
This is the reason why we fix some constant $k$ and only guess $\OPT$-edges in cuts $\delta(L)$ for $L\in \mathcal{L}$ if $|\OPT\cap \delta(L)|\leq k$. 
We denote the sets inducing these cuts by $\mathcal{L}(\OPT,k) \subseteq \mathcal{L}$ and the $\OPT$-edges in these cuts by $\OPT(\mathcal{L},k)\subseteq \OPT$. 
Formally, for any edge set $R\subseteq E$, we define
\begin{align*}
\mathcal{L}(R,k) &\coloneqq \left\{ L\in \mathcal{L} : |R\cap \delta(L)|\leq k \right\}\enspace, \text{ and}\\
R(\mathcal{L},k) &\coloneqq \bigcup_{L\in \mathcal{L}(R,k)} \left( \delta(L)\cap R
\right)\enspace.
\end{align*}
As we discuss in more detail later, guessing the edges $\OPT(\mathcal{L},k)$ can be achieved through a dynamic program that guesses the $\OPT$-edges 
in the different cuts defined by $\mathcal{L}$ step by step, from smaller to larger sets in $\mathcal{L}$.
However, the running time of the propagation step of the dynamic program depends on the number of disjoint sets in $\mathcal{L}$ 
that can be contained in some larger set $L\in\mathcal{L}$. 
We capture this dependency through the \emph{width} $\width(\mathcal{L})$ of the laminar family $\mathcal{L}$ 
(see~\cite{nagele_2019_new} for a similar use of this notion).

\begin{definition}[width of a laminar family]
The \emph{width} $\width(\mathcal{L})$ of a laminar family $\mathcal{L}$ is the number of minimal sets contained in the family.
\end{definition}
Observe that the number of minimal sets of a laminar family bounds the size of any subfamily of disjoint sets.

The following theorem formalizes what we can achieve through our dynamic program, which we present later in detail. 
Notice that for the algorithm to be efficient, we need $\mathcal{L}$ to have width bounded by a constant.
\begin{theorem}\label{thm:guessInLaminarFamily}
Let $\beta > 1$.
Assume there is a $\beta$-approximation algorithm $\mathcal{B}$ for $\Phi$-TSP. 
Then there is an algorithm that computes for any feasible interface $\Phi=(I,T,\mathcal{C})$ of a weighted graph $G=(V,E,\ell)$, 
any $k\in \mathbb{Z}_{\geq 0}$, and any laminar family $\mathcal{L}$ over $V$, a $\Phi$-tour $F$ with
\begin{equation}\label{eq:guessInLaminarFamily}
\ell(F) \leq \min\left\{\big.
\beta\cdot \ell(R) - (\beta -1 ) \cdot \ell(R(\mathcal{L},k)) : \text{$R$ is a $\Phi$-tour}
\right\}
\end{equation}
in time $|V|^{O\left(|I| + k\cdot \width(\mathcal{L})\right)}\cdot f_{\mathcal{B}} \left(\big. G, |I|+k\cdot (\width(\mathcal{L})+1) \right)$. 
In particular, the algorithm calls $\mathcal{B}$ only on instances with interfaces of size bounded by $|I|+k\cdot(\width(\mathcal{L})+1)$.
\end{theorem}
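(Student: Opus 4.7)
My plan is to set up a dynamic program on the laminar family $\mathcal{L}$, augmented with $V$ as a root so that $\hat{\mathcal{L}} \coloneqq \mathcal{L} \cup \{V\}$ forms a single rooted tree. For each $L \in \hat{\mathcal{L}}$ the DP maintains a table $T_L$ indexed by \emph{boundary signatures} $\sigma_L$. A signature $\sigma_L$ records at most $k$ \emph{stub vertices} in $L$ (the intended inner endpoints of $F \cap \delta(L)$), a parity requirement at each stub, and a partition of these stubs together with $I \cap L$ encoding which boundary points the solution inside $L$ must connect. Each $\sigma_L$ induces an interface $\Phi(\sigma_L)$ on the subgraph $G[L]$, and $T_L[\sigma_L]$ is meant to store an approximate cost of a cheapest $\Phi(\sigma_L)$-tour in $G[L]$.

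\textbf{Transitions.}
The DP runs bottom-up through $\hat{\mathcal{L}}$. To fill in $T_L[\sigma_L]$ at an internal node $L$, I would enumerate (i) an antichain $\mathcal{L}^*_L \subseteq \{L' \in \mathcal{L} : L' \subsetneq L\}$ of \emph{chosen} descendants, (ii) for each $L' \in \mathcal{L}^*_L$ a compatible signature $\sigma_{L'}$, and (iii) for every stub of every $\sigma_{L'}$ an outer endpoint in $L \setminus \bigcup_{L' \in \mathcal{L}^*_L} L'$ or inside some other $L'' \in \mathcal{L}^*_L$, thereby completing the child stubs into actual guessed edges. Contracting each $L' \in \mathcal{L}^*_L$ into a super-vertex that exposes only its stubs yields a subgraph $G'$ of $G[L]$; on $G'$ I would set up a $\Phi$-TSP sub-instance whose interface merges the stub, parity, and partition data of $\sigma_L$ and of the $\sigma_{L'}$'s with the $I$-vertices lying outside all contracted sets, and call $\mathcal{B}$ on it. The value $T_L[\sigma_L]$ is the minimum, over all enumerations, of the length of the guessed edges plus $\sum_{L' \in \mathcal{L}^*_L} T_{L'}[\sigma_{L'}]$ plus the cost returned by $\mathcal{B}$. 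The output $F$ is read off $T_V$ at the (unique) signature with no stubs, partition $\mathcal{C}$, and parities inherited from $T$.

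\textbf{Length bound.}
To prove~\eqref{eq:guessInLaminarFamily}, I would fix an arbitrary $\Phi$-tour $R$ and exhibit an enumeration path in the DP that achieves cost $\beta \ell(R) - (\beta-1) \ell(R(\mathcal{L}, k))$. Take $\mathcal{L}^* \coloneqq \mathcal{L}(R, k)$, and for each $L \in \mathcal{L}^*$ the signature $\sigma_L^R$ naturally induced by $R$: stubs are the inner endpoints of $R \cap \delta(L)$, the partition is induced by the connected components of $(V, R)$ restricted to $L$, and parities come from $T$. These signatures are mutually compatible, and at every $L \in \mathcal{L}^*$ the edges of $R$ lying in $E[L]$ but outside all $E[L']$ for $L' \in \mathcal{L}^*_L$ form a feasible solution to the sub-instance passed to $\mathcal{B}$; hence $\mathcal{B}$ returns something of length at most $\beta$ times their total length. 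Summing over the DP tree, each edge of $R \setminus R(\mathcal{L}, k)$ is charged to a unique $\mathcal{B}$-call and pays factor $\beta$, whereas each edge of $R(\mathcal{L}, k)$ is guessed exactly once and pays factor $1$, giving the bound.

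\textbf{Running time and main obstacle.}
Each node $L$ admits at most $|V|^{O(|I| + k)}$ signatures (the stubs contribute $|V|^{O(k)}$ and the partition/parity data contributes $|V|^{O(|I|)}$), the antichain plus child-signature enumeration contributes $|V|^{O(k \cdot \width(\mathcal{L}))}$, and each $\mathcal{B}$-call operates on an interface of size at most $|I| + k (\width(\mathcal{L}) + 1)$, made up of $|I|$ original interface vertices, at most $k$ stubs of $\sigma_L$, and at most $k$ stubs per chosen descendant (of which there are at most $\width(\mathcal{L})$). Since $|\hat{\mathcal{L}}| = O(\width(\mathcal{L}))$, the total running time meets the claimed bound. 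The hard part will be defining the ``compatibility'' relation between parent and child signatures so that partition and parity data compose correctly across levels: I need to verify that the interface handed to each $\mathcal{B}$-call faithfully encodes the global connectivity constraints of $\Phi$, so that the assembled $F$ really satisfies Definition~\ref{def:phi_tour}, and that no edge of $R$ is counted twice in the length analysis.
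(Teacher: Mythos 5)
Your proposal has the right high-level architecture (dynamic program over $\overline{\mathcal{L}} = \mathcal{L}\cup\{V\}$ processed from smaller to larger sets, boundary data captured by a small interface at each $L$, a $\mathcal{B}$-call on the "complementary" region of each node), and your running-time accounting of interface sizes, antichains, and guessed edge sets matches the paper. However, there is a concrete design flaw plus a missing key ingredient, and you yourself flag the missing ingredient as "the hard part" without resolving it.

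The design flaw is the plan to contract each chosen child $L'$ into a super-vertex exposing its stubs and then call $\mathcal{B}$ on that graph. First, contraction does not produce a subgraph of $G$, so the guarantee $f_\mathcal{B}(G,\cdot)$ in the theorem statement would not apply. Second, and more fundamentally, contracting $L'$ (or merging its non-stub interior into a super-vertex adjacent to all stubs) implicitly declares all stubs of $L'$ to be connected "for free," which may not be what the looked-up child solution $F_{L',\sigma_{L'}}$ actually provides; $\Phi$-TSP has no mechanism for certifying pre-existing connectivity of interface vertices. The paper avoids all of this: $\mathcal{B}$ is called only on the induced subgraph $G[L_0]$ with $L_0 = L\setminus\bigcup_{i=1}^p L_i$, using an interface $\Phi_0$ whose vertex set is $(I_L\cap L_0)\cup U_0$ where $U_0$ are the $L_0$-side endpoints of the guessed crossing edges $X$ — the child stubs are not in the $\mathcal{B}$-call at all, and the children's contributions come purely from the DP table.

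The missing ingredient is precisely what makes "compatibility" a non-issue. The paper does \emph{not} try to certify upfront that child and parent signatures compose consistently. Instead, the algorithm freely guesses the partition and parity data $(T_i,\mathcal{C}_i)$ of all sub-interfaces $\Phi_i$ (only the vertex sets $I_i$ are determined by $I_L$ and $X$), assembles $F = X\cupp F_0\cupp\bigcup_{i=1}^p F_{L_i,\Phi_i}$, and then \emph{checks a posteriori} whether this $F$ is a $\Phi_L$-tour. Correctness of the length bound rests on Lemma~\ref{lem:composedSubInterfaces}: if $\Phi_i$ is the interface \emph{induced} by $(R,\Phi)$ on $L_i$ (for a fixed $\Phi$-tour $R$), then the combination of $X$ with arbitrary $\Phi_i$-tours in $G[L_i]$ is guaranteed to be a $\Phi_L$-tour, where $\Phi_L$ is the interface induced on $L$. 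This is the statement that allows one to charge each edge of $R\setminus R(\mathcal{L},k)$ to a unique $\mathcal{B}$-call and each edge of $R(\mathcal{L},k)$ to a unique guessed set $X$. Without this lemma (and the observation, via Lemma~\ref{lem:induced_interfaces}, that induced interfaces compose under nested restrictions, i.e.\ the interface induced by $(R[L],\Phi_L)$ on $L_i$ equals the one induced by $(R,\Phi)$ on $L_i$), your length analysis cannot be completed, and the DP's a posteriori feasibility check is not justified to succeed on the "right" branch.
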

Note that the guarantee stated in~\eqref{eq:guessInLaminarFamily} for $R=\OPT$ indeed reflects the guessing of the edges in $\OPT(\mathcal{L},k)$. 
More precisely, by replacing $R$ by $\OPT$ in~\eqref{eq:guessInLaminarFamily}, we obtain a $\Phi$-tour $F$ with an upper bound on its length $\ell(F)$ that decomposes into two terms:
\begin{enumerate}[label=(\roman*),topsep=0.2em]
\item a term $\ell(\OPT(\mathcal{L},k))$, i.e., each edge $e\in \OPT(\mathcal{L},k)$ contributes its length $\ell(e)$, and

\item a term $\beta\cdot\ell(\OPT\setminus \OPT(\mathcal{L},k))$, where the length of each other edge in $\OPT$ 
gets inflated by the approximation factor $\beta$ of the algorithm $\mathcal{B}$.
\end{enumerate}

\subsection{Finding a suitable laminar family}

To make significant progress through Theorem~\ref{thm:guessInLaminarFamily}, we need to find a laminar family $\mathcal{L}$ over $V$ such that $\ell(\OPT(\mathcal{L},k))$ is large. 
Let $J$ be a shortest $T$-join.
If $\ell(J)$ is large, then we will construct a family $\mathcal{L}$ with the property that even for any $T$-join $R$, 
the length $\ell(R(\Lscr,k))$ is large. 
Notice that this implies what we want because $\OPT$ is a $T$-join.

This statement is formalized in Lemma~\ref{lem:goodLaminarFamily}, which is derived from the dual of the natural linear program to find a shortest $T$-join. 
We exploit that there is an optimal dual solution whose support corresponds to a laminar family of subsets of $V$, which follows from combinatorial uncrossing arguments. 

\begin{lemma}\label{lemma:dualTjoin}
Let $G=(V,E,\ell)$ be a weighted graph.
Moreover, let $T\subseteq V$ such that $G$ contains a $T$-join, and let $t \in T$.
Then there is strongly polynomial algorithm that computes a laminar family $\Lscr$ over $V \setminus \{t\}$ 
and values $y \in \mathbb{R}^{\mathcal{L}}_{> 0}$ such that
\begin{alignat}{3}
 \sum_{\substack{L\in\Lscr:\\ e\in\delta(L)}} y_L &\le \ell(e)\qquad &\forall e\in E\enspace,\label{eq:dualFeasibilityTJoin}
\\
\sum_{L\in\Lscr} y_L &= \ell(J), \text{ and}\quad &\label{eq:dualFeasibilityObjective}\\
|L \cap T| &\text{ is odd} &\forall L\in\Lscr\enspace.\label{eq:dualFeasibilityTCut}
\end{alignat}
\end{lemma}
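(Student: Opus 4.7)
The plan is to obtain $\mathcal{L}$ together with $y$ as the support of an uncrossed optimum dual solution to the classical LP relaxation for minimum $T$-join. This is the standard polyhedral picture, with two extra cosmetic operations (uncrossing to get laminarity, and complementation to push every set inside $V\setminus\{t\}$). The main obstacle is not the existence part, but doing everything in strongly polynomial time.

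First I set up LP duality. Consider the primal
$\min\{\ell^{\top}x : x(\delta(S))\geq 1 \text{ for every $T$-cut } S,\ x\geq 0\}$
and its dual
$\max\{\sum_{S} y_{S} : \sum_{S : e\in \delta(S)} y_{S}\leq \ell(e)\ \forall e\in E,\ y\geq 0\}$,
where $S$ ranges over $T$-cuts (sets with $|S\cap T|$ odd). By the Edmonds--Johnson theorem, the primal LP is integral and has optimum value $\ell(J)$. LP duality therefore provides an optimum dual $y^{*}$ with $\sum_{S} y^{*}_{S}=\ell(J)$, supported on $T$-cuts, and this immediately yields \eqref{eq:dualFeasibilityTJoin}, \eqref{eq:dualFeasibilityObjective} and \eqref{eq:dualFeasibilityTCut}. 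What remains is to arrange laminarity and the containment in $V\setminus\{t\}$.

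To force containment in $V\setminus\{t\}$, replace any support set $L$ with $t\in L$ by its complement $V\setminus L$: since $|T|$ is even we have $|V\setminus L\cap T|\equiv |L\cap T|\pmod 2$, and $\delta(L)=\delta(V\setminus L)$, so feasibility, objective, and the $T$-cut property are preserved. Next uncross: if two support sets $A,B\subseteq V\setminus\{t\}$ properly cross, consider the parity of $|A\cap B\cap T|$. If it is odd, both $A\cap B$ and $A\cup B$ are $T$-cuts; if it is even, both $A\setminus B$ and $B\setminus A$ are $T$-cuts. In either case the two replacements $A',B'$ stay inside $V\setminus\{t\}$ and satisfy the edge-wise inequality $\mathbf{1}_{\delta(A')}+\mathbf{1}_{\delta(B')}\leq \mathbf{1}_{\delta(A)}+\mathbf{1}_{\delta(B)}$ (submodularity/posimodularity of the cut function), so shifting $\varepsilon=\min\{y^{*}_{A},y^{*}_{B}\}$ of weight from $\{A,B\}$ onto $\{A',B'\}$ preserves dual feasibility and the objective. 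A standard potential such as $\sum_{S} y_{S}|S|^{2}$ in the odd case and $\sum_{S} y_{S}(|V|-|S|)^{2}$ in the even case (handled lexicographically, as in Schrijver's textbook) forces termination with a laminar support; deleting sets with $y_{L}=0$ leaves the strictly positive $y$ on the desired laminar family $\mathcal{L}$ over $V\setminus\{t\}$.

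The step that requires the most care is making this strongly polynomial. The existential uncrossing is immediate, but one needs an initial optimum dual supported on polynomially many sets, and then a way to detect and execute uncrossings efficiently. The initial dual can be produced by any of the classical combinatorial algorithms for $T$-joins/$T$-cut packings (Edmonds, Padberg--Rao, Barahona, Frank, or via $b$-matching reductions), or alternatively by directly building a Gomory--Hu-type $T$-cut tree. Once the support is polynomial, each uncrossing step can be detected by iterating over support pairs and implemented in strongly polynomial time, and the potential argument above bounds the total number of steps polynomially, giving the claimed algorithm.
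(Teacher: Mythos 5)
Your plan follows the same overall route as the paper: take the standard LP relaxation of minimum $T$-join with $T$-cut constraints, obtain an optimal dual $T$-cut packing, push all sets into $V\setminus\{t\}$, and uncross the support to make it laminar. The complementation step is valid (the paper avoids it by defining the family $\mathcal{F}$ to exclude $t$ from the start, a cosmetic difference), and your parity case analysis and the cut-function inequalities used for the uncrossing exchange are correct.

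The genuine gap is precisely the strong polynomiality that the lemma claims---you rightly flag it as the delicate part, but then do not actually resolve it. There are two holes. First, you assert that an optimal dual with polynomially many support sets is produced by ``any of the classical combinatorial algorithms (Edmonds, Padberg--Rao, Barahona, Frank, or via $b$-matching reductions), or a Gomory--Hu-type $T$-cut tree.'' None of these outputs an optimal fractional $T$-cut packing for arbitrary $\ell$ off the shelf: Padberg--Rao and Gomory--Hu-style constructions solve the \emph{separation} problem (finding a minimum $T$-cut), not a maximum packing, and matching duals live on blossom/odd-set constraints rather than $T$-cuts, so extracting the required packing needs a nontrivial argument that is missing. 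The paper instead obtains a polynomial-support optimal dual via Frank--Tardos preconditioning plus the ellipsoid method (to identify a polynomial set of tight dual constraints), then solves a linear system with the original lengths. Second, your potential-function argument ($\sum_S y_S|S|^2$, lexicographically with $\sum_S y_S(|V|-|S|)^2$) proves termination of uncrossing but does not by itself give a polynomial bound on the number of steps: the amount $\varepsilon=\min\{y_A,y_B\}$ of weight shifted per step is a real number that need not be bounded below, and the support is not monotone under uncrossing (a zeroed-out set can reappear as an intersection or union later). Obtaining a strongly polynomial uncrossing procedure is a theorem in its own right, which the paper cites (Karzanov 1996), not a direct corollary of a monotone potential. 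As written, your argument establishes the existence of the required $\mathcal{L}$ and $y$ but not the strongly polynomial algorithm that the lemma asserts.
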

\begin{proof}
We start with a classical linear description to find a minimum length $T$-join, based on the dominant of the $T$-join polytope. To this end, let $\mathcal{F} = \{Q\subseteq V\setminus \{t\} : |Q\cap T| \text{ is odd}\}$; these vertex sets induce all $T$-cuts. Then, the following linear program computes the value $\ell(J)$ of a shortest $T$-join (see, e.g.,~\cite{schrijver_2003_combinatorial}).
\begin{equation}\label{eq:domTJoinLP}
\begin{array}{rr@{\;}c@{\;}ll}
\min & \multicolumn{1}{c}{\displaystyle\sum_{e\in E}\ell(e) x_e}\\
&x(\delta(Q)) &\geq  &1 &\forall\; Q\in \mathcal{F}\\
&x &\in &\mathbb{R}^E_{\geq 0} &
\end{array}
\end{equation}
Its dual problem, which is a fractional $T$-cut packing problem, is given below.
\begin{equation}\label{eq:TCutPackingLP}
\begin{array}{rr@{\;}c@{\;}ll}
\max & \multicolumn{1}{c}{\displaystyle\sum_{Q\in \mathcal{F}} y_Q}\\
&\displaystyle\sum_{\substack{Q\in \mathcal{F}:\\ e\in \delta(Q)}}y_Q &\leq  &\ell(e) &\forall\; e\in E\\
&y &\in &\mathbb{R}^{\mathcal{F}}_{\geq 0} &
\end{array}
\end{equation}

If $y\in\mathbb{R}^{\mathcal{F}}_{\geq 0}$ is an optimum dual solution with laminar support $\Lscr$,
then $y$ and $\Lscr$ have the desired properties.
Here~\eqref{eq:dualFeasibilityObjective} follows from strong duality and~\eqref{eq:dualFeasibilityTCut} follows from $L\in \mathcal{F}$.

A strongly polynomial algorithm to compute such an optimal dual solution with laminar support can be obtained by standard techniques:
Using the framework of Frank and Tardos~\cite{frank_1987_application}, one can first find in strongly polynomial time a vector $\hat{\ell} \in \mathbb{R}^E_{\geq 0}$ with encoding length polynomial in $|E|$, and such that the set of optimal solutions of~\eqref{eq:domTJoinLP} remains the same when replacing $\ell$ by $\hat{\ell}$. Moreover, also the set of optimal dual bases remains the same. This allows for solving~\eqref{eq:domTJoinLP} in strongly polynomial time through the ellipsoid method. To find an optimal dual basis, one can delete all variables from~\eqref{eq:TCutPackingLP} that do not correspond to constraints encountered by the ellipsoid algorithm when solving~\eqref{eq:TCutPackingLP}. Now solving the reduced dual problem~\eqref{eq:TCutPackingLP} with $\hat{\ell}$ instead of $\ell$ allows for finding an optimal dual basis, which, by the result of Frank and Tardos, remains an optimal dual basis for~\eqref{eq:TCutPackingLP} without replacing $\ell$ by $\hat{\ell}$. Knowing an optimal dual basis, one can obtain an optimal solution to~\eqref{eq:TCutPackingLP} in strongly polynomial time by solving a linear equation system. 

Finally, this solution can be transformed into a laminar one by uncrossing:
if $y_A > 0$ and $y_B >0$ for $A,B\in \mathcal{F}$ with $A \setminus B \ne \emptyset$ and $B\setminus A \ne \emptyset$ and $A\cap B \ne \emptyset$,
then either $A\cap B$ and $A\cup B$ belong to $\mathcal{F}$ or $A \setminus B$ and $B\setminus A$ belong to $\mathcal{F}$;
we can increase the dual variables on these two sets by $\min\{ y_A, y_B\}$ and decrease the dual variables $y_A$ and $y_B$ by the same amount, maintaining
a feasible dual solution.
Karzanov~\cite{karzanov_1996_how} showed how to obtain a laminar family by a sequence of such uncrossing steps in strongly polynomial time.
\end{proof}

We now show that the family $\Lscr$ from Lemma~\ref{lemma:dualTjoin} has the desired properties.
\begin{lemma}\label{lem:goodLaminarFamily}
Let $G=(V,E,\ell)$ be a weighted graph.
Moreover, let $T\subseteq V$ such that $G$ admits a $T$-join. 
Then, there is a strongly polynomial algorithm that computes a laminar family $\mathcal{L}$ over $V$ with $\width(\mathcal{L})\leq \max \{ 0,\, |T| -1\}$ such that 
for any $T$-join $R\subseteq E$, and any $k\in \mathbb{Z}_{\geq 0}$, we have
\begin{align*}
\ell(R(\mathcal{L},k)) & \geq \ell(J) - \frac{1}{k+1}\cdot \ell(R)\enspace,
\end{align*}
where $J$ is a shortest $T$-join in $G$.
\end{lemma}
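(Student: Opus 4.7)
My plan is to derive the family $\mathcal{L}$ from the laminar dual packing provided by Lemma~\ref{lemma:dualTjoin} and then obtain the length bound through standard LP-duality bookkeeping. If $T=\emptyset$, I take $\mathcal{L}=\emptyset$: then $\ell(J)=0$ and the inequality is vacuous. Otherwise I pick an arbitrary $t\in T$ and apply Lemma~\ref{lemma:dualTjoin} to obtain, in strongly polynomial time, a laminar family $\mathcal{L}$ on $V\setminus\{t\}$ together with values $y\in\mathbb{R}^{\mathcal{L}}_{>0}$ satisfying~\eqref{eq:dualFeasibilityTJoin}--\eqref{eq:dualFeasibilityTCut}. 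The width bound is then immediate: the minimal sets of $\mathcal{L}$ are pairwise disjoint, each must contain at least one vertex of $T$ by~\eqref{eq:dualFeasibilityTCut}, and none of them contains $t$, so there are at most $|T|-1$ of them.

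For the main inequality I exploit the odd-parity condition~\eqref{eq:dualFeasibilityTCut}. Fix an arbitrary $T$-join $R\subseteq E$. Since $R$ is a $T$-join, $|R\cap\delta(L)|$ has the same parity as $|L\cap T|$, hence is odd and at least $1$; in particular $|R\cap\delta(L)|\geq k+1$ whenever $L\notin\mathcal{L}(R,k)$. Summing~\eqref{eq:dualFeasibilityTJoin} over the edges of $R$, swapping the order of summation, and invoking~\eqref{eq:dualFeasibilityObjective}, I will obtain
\begin{align*}
\ell(R) \;\geq\; \sum_{L\in\mathcal{L}} y_L\cdot|R\cap\delta(L)|
  \;\geq\; (k+1)\!\sum_{L\in\mathcal{L}\setminus\mathcal{L}(R,k)}\! y_L
  \;=\; (k+1)\Bigl(\ell(J)-\sum_{L\in\mathcal{L}(R,k)} y_L\Bigr),
\end{align*}
which rearranges to $\sum_{L\in\mathcal{L}(R,k)} y_L \geq \ell(J)-\ell(R)/(k+1)$.

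To close the gap I will show $\ell(R(\mathcal{L},k))\geq \sum_{L\in\mathcal{L}(R,k)} y_L$. For every edge $e\in R(\mathcal{L},k)$, feasibility~\eqref{eq:dualFeasibilityTJoin} gives $\ell(e)\geq \sum_{L\in\mathcal{L}(R,k):\, e\in\delta(L)} y_L$; summing over $e$ and switching the order of summation, each $L\in\mathcal{L}(R,k)$ contributes $y_L\cdot|R\cap\delta(L)|\geq y_L$, using that $R\cap\delta(L)\subseteq R(\mathcal{L},k)$ precisely when $L\in\mathcal{L}(R,k)$. Combining the two displays yields the lemma. I do not anticipate a serious obstacle: once one notices that~\eqref{eq:dualFeasibilityTCut} forces every cut in $\mathcal{L}$ to be crossed an odd number of times by $R$, everything reduces to bookkeeping; the only mild care required is in handling the multi-set semantics so that $|R\cap\delta(L)|$ correctly accounts for parallel copies.
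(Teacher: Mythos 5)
Your proposal is correct and follows essentially the same route as the paper: take the laminar dual packing from Lemma~\ref{lemma:dualTjoin}, bound the width using that every set in the family contains a vertex of $T\setminus\{t\}$, lower-bound $\ell(R(\mathcal{L},k))$ by $\sum_{L\in\mathcal{L}(R,k)} y_L$ via dual feasibility and the fact that $R$ meets every $T$-cut, and bound $\ell(R)$ from below by $(k+1)\sum_{L\notin\mathcal{L}(R,k)} y_L$. The only difference from the paper's write-up is a harmless reordering of the two dual-counting estimates before combining them with~\eqref{eq:dualFeasibilityObjective}.
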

\begin{proof}
If $T=\emptyset$, we can simply set $\Lscr =\emptyset$ because $\ell(J)=0$.
Otherwise, we compute $\Lscr$ and $y$ as in Lemma~\ref{lemma:dualTjoin}
and show that $\Lscr$ has the desired properties.
Since every set in $\Lscr$ must contain an element of $T \setminus \{t\}$, we have $\width(\mathcal{L})\leq |T| -1$.

Let now $R\subseteq E$ be a $T$-join, and let $k\in \mathbb{Z}_{\ge 0}$.
Since $R$ is a $T$-join, it has a non-empty intersection with every cut $\delta(L)$ with $L\in\Lscr$ because of~\eqref{eq:dualFeasibilityTCut}.
Hence, by~\eqref{eq:dualFeasibilityTJoin},
\begin{align}\label{eq:lowerBoundEdgesToGuess}
  \ell(R(\mathcal{L},k)) &= \sum_{e\in R(\Lscr,k)} \ell(e) \geq \sum_{e\in R(\mathcal{L},k)} \sum_{\substack{L \in \Lscr:\\ e\in\delta(L)}} y_L 
  = \sum_{L \in \Lscr} |R(\mathcal{L},k) \cap \delta(L)| \cdot y_L  \ge \sum_{L\in \mathcal{L}(R,k)} y_L\enspace.
\end{align}
Again using~\eqref{eq:dualFeasibilityTJoin}, we moreover obtain
\begin{align}\label{eq:upperBoundEdgesNotToGuess}
 \ell(R)&\ge \sum_{e\in R} \sum_{\substack{L \in \Lscr:\\ e\in\delta(L)}} y_L 
 = \sum_{L\in\Lscr} |R \cap \delta(L)| \cdot y_L %
 \ge \sum_{L\in \Lscr \setminus \mathcal{L}(R,k)} (k+1) \cdot y_L\enspace.
\end{align}
Combining~\eqref{eq:dualFeasibilityObjective},~\eqref{eq:lowerBoundEdgesToGuess}, and~\eqref{eq:upperBoundEdgesNotToGuess}, we obtain
\begin{align*}
\ell(J) & = \sum_{L\in\Lscr} y_L = \sum_{L\in \mathcal{L}(R,k)} y_L + \sum_{L\in \Lscr \setminus \mathcal{L}(R,k)} y_L
\le  \ell(R(\mathcal{L},k)) + \frac{1}{k+1} \cdot \ell(R)\enspace,
\end{align*}
as desired.
\end{proof}

Finally, Theorem~\ref{thm:algLongTJoin} is a direct consequence of Theorem~\ref{thm:guessInLaminarFamily} and Lemma~\ref{lem:goodLaminarFamily}.

\begin{proof}[Proof of Theorem~\ref{thm:algLongTJoin}]
If $T =\emptyset$, we simply call the given $\beta$-approximation algorithm $\Bscr$.
Otherwise, let $k=\lfloor\sfrac{1}{\delta}\rfloor$.
We apply Lemma~\ref{lem:goodLaminarFamily} to obtain in strongly polynomial time a laminar family $\mathcal{L}$ over $V$ such that
\begin{enumerate}
\item $\displaystyle\ell(R(\mathcal{L},k)) \geq \ell(J) - \frac{1}{k+1}\cdot \ell(R) \geq \ell(J) - \delta \cdot \ell(R) \quad \forall\;\text{$T$-join $R\subseteq E$}$, and
\item $\width(\mathcal{L}) \leq \max\{ 0,\, |T|-1\}=|T|-1$, where the equality follows from the assumption $T\neq\emptyset$.
\end{enumerate}
\smallskip
Because a shortest $\Phi$-tour $\OPT$ is a $T$-join, we have
\begin{equation*}
\ell(\OPT(\mathcal{L},k)) \geq \ell(J) -\delta\cdot \ell(\OPT)\enspace,
\end{equation*}
which, together with Theorem~\ref{thm:guessInLaminarFamily} implies the desired results, i.e., that one can find a $\Phi$-tour $F$ in $G$ with
\begin{align*}
\ell(F) &\leq
 \beta \cdot \ell(\OPT) - (\beta -1) \cdot \ell(\OPT(\mathcal{L},k))  \\
&\leq \beta\cdot \ell(\OPT) - (\beta-1)\left(
\ell(J) - \delta\cdot \ell(\OPT)
\right)
\\
&= (\beta + \delta\cdot (\beta-1))\cdot \ell(\OPT) - (\beta-1)\cdot \ell(J)\enspace,
\end{align*}
in time
\begin{equation*}
|V|^{O\left(|I| + \frac{\width(\mathcal{L})}{\delta}\right)}
\cdot f_{\mathcal{B}}\left(G, |I|+ \frac{\width(\mathcal{L})+1}{\delta}\right)
\leq |V|^{O\left(|I| +\frac{|T|}{\delta}\right)} \cdot f_{\mathcal{B}}\left(G, |I| +\frac{|T|}{\delta}\right)\enspace.
\end{equation*}

\end{proof}

It remains to derive Theorem~\ref{thm:guessInLaminarFamily}, which, as mentioned, we show through a dynamic programming approach.

\subsection{Combining partial solutions}\label{sect:combining_solutions}

In the analysis of our dynamic programming algorithm we use the following notion of an \emph{induced interface}, 
which allows us to analyze the algorithm with respect to interfaces coming from a shortest $\Phi$-tour $\OPT$.

\begin{definition}[induced interface]\label{def:induced_interface} 
Let $G=(V,E,\ell)$ be a weighted graph.
Let $\Phi=(I,T,\mathcal{C})$ be an interface of $G$, and let $F$ be a $\Phi$-tour in $G$.
For $W \subseteq V$, the interface $\Phi_W=(I_W,T_W,\mathcal{C}_W)$ induced by $(F,\Phi)$ on $W$ is defined by 
\begin{enumerate}[label=(\roman*)]
\item\label{item:induced_interface_i_w} $I_W = (I\cap W) \cup U$, where $U$ is the set of vertices in $W$ that are connected by an edge of $F$ to a vertex in $V\setminus W$,
\item $T_W=\odd(F[W])$, and
\item $\mathcal{C}_W\subseteq 2^{I_W}$ contains, for each connected component of $(W,F[W])$, a set including all vertices of $I_W$ contained in that connected component.
\end{enumerate}

\end{definition}

See Figure~\ref{fig:indInterface} for an example of an induced interface. Moreover, also Figure~\ref{fig:guess3}, which we used as an illustrative example in the introduction to showcase the guessing of multiple edges per cut, highlights an induced interface with $W=L_3\setminus L_1$, which is induced by an $s$-$t$ tour. 
We remark that the interface induced by $(F, \Phi)$ depends only on $F$ and $I$, not on $T$ or $\mathcal{C}$.

\begin{figure}
\begin{center}
\begin{tikzpicture}[scale =0.5]

\pgfdeclarelayer{fg}
\pgfdeclarelayer{bg}
\pgfsetlayers{bg,main,fg}

\def\rad{4}
\def\cd{0.5}
\def \bs{40pt}

\tikzset{nsT/.style={
draw=red,thick,rectangle,inner sep=0em,minimum size=5pt,fill=white
}}

\tikzset{nsI/.style={
fill=white, draw=red,thick,circle,inner sep=0em,minimum size=5pt
}}

\tikzset{ns/.style={
fill=black,circle,inner sep=0em,minimum size=4pt}
}

\newcommand\Cset[2]{
\draw (#1:\rad+\cd) arc (#1:#2:\rad+\cd)
  arc (#2:#2+180:\cd)
  arc (#2:#1:\rad-\cd)
  arc (#1+180:#1+360:\cd)
;
}

\newcommand\CsetSeg[2]{
\pgfmathanglebetweenpoints{\pgfpointanchor{#1}{center}}{\pgfpointanchor{#2}{center}}
\global\let\myangle\pgfmathresult

\draw ($(#1)+(\myangle-270:\cd)$)
  arc (\myangle-270:\myangle-90:\cd)
  -- ($(#2)+(\myangle-90:\cd)$)
  arc (\myangle-90:\myangle+90:\cd);

}

\begin{scope}

\begin{scope}[every node/.style={nsI}]
\node[nsT] (v1) at (0:\rad) {};
\node[nsT] (v2) at (-45:\rad) {};

\node[nsT] (v3) at (-90:\rad) {};
\node[nsT] (v4) at (-135:\rad) {};

\node (v5) at (180:\rad) {};

\node (v6) at (135:\rad) {};
\node[nsT] (v7) at (90:\rad) {};
\node[nsT] (v8) at (45:\rad) {};
\end{scope}

\begin{scope}[every path/.style={fill=black!12,draw=black!12}]
\begin{pgfonlayer}{bg}

\Cset{-45}{0}
\Cset{-135}{-90}
\Cset{180}{180}
\Cset{45}{135}
\end{pgfonlayer}
\end{scope}

\begin{scope}
\def\td{0.7}
\node at (60:\rad+\cd+\td) {$C_1$};
\node at (180:\rad+\cd+\td) {$C_2$};
\node at (-120:\rad+\cd+\td) {$C_3$};
\node at (-30:\rad+\cd+\td) {$C_4$};
\end{scope}

\begin{scope}[every node/.style={ns}, shift={(-9.5,-5.5)}]
 \node (n1) at (7,4) {};
 \node (n2) at (7,5.5) {};
 \node (n3) at (7.6,6.5) {};
 \node (n5) at (7.8,3) {};

 \node (n8) at (10,7) {};
 \node (n4) at (9.5,6.3) {};
 \node (n9) at (9,5.7) {};
\node (n10) at (9.75,5) {};
\node (n11) at (10.3,4) {};

 \node (n6) at (11,5) {};

\end{scope}

\begin{scope}[thick, shift={(-9.5,-5.5)}]
\draw (v3) -- (n5) -- (n3);
\draw (v4) -- (n1);
\draw (n1) -- (n2);
\draw (n8) to[bend right=\bs] (n4);
\draw (n8) to[bend left=\bs] (n4);

\draw (n4) -- (n9);
\draw (n4) -- (n10);
\draw (n9) -- (n10);
\draw (n10) to[bend left=\bs] (n11);
\draw (n10) to[bend right=\bs] (n11);

\draw (n2) -- (v5) -- (n3) ;

\draw (v2) -- (n6) -- (v1);

\draw (v7) -- (v6) -- (n8) -- (v8);
\end{scope}

\begin{scope}[line width=1.5pt,blue]
\coordinate (ws) at ($(v1)+(0.45,0)$);
\draw (ws) .. controls +(-90:1) and +(-45:1) .. ($(v1)-(1,0)$)
.. controls +(135:1) and +(100:3) .. ($(v3)+(0.5,1)$)
.. controls +(-80:3.8) and +(-135:2) .. ($(v4)+(-1,1)$)
.. controls +(45:1) and +(-60:1) .. ($(n2)!0.5!(n3)$)
.. controls +(120:1) and +(225:1) .. (135:\rad+0.4)
.. controls +(45:1) and +(135:1) .. ($(v8)+(0.5,0.3)$)
.. controls +(-45:1) and +(90:1) .. (ws);
\end{scope}

\end{scope} 

\node[blue] at ($(v1)+(0.8,1.5)$) {$W$};

\begin{scope}[shift={(0,-7)}]

\node[align=center] {$I = \{\tikz{\node[nsI] {};},\tikz{\node[nsT] {};}\, \}$
\qquad $T=\{\tikz{\node[nsT] {};}\}$\\[0.5em]
$\mathcal{C}=\{C_1,C_2,C_3,C_4\}$
};

\end{scope}

\begin{scope}[xshift=15cm]

\begin{scope}[every node/.style={nsI}]
\node (v1) at (0:\rad) {};

\node[nsT] (v3) at (-90:\rad) {};
\node[nsT] (v4) at (-135:\rad) {};

\node[nsT] (v6) at (135:\rad) {};
\node[nsT] (v8) at (45:\rad) {};
\end{scope}

\begin{scope}[every node/.style={ns}, shift={(-9.5,-5.5)}]
 \node[nsT] (n1) at (7,4) {};
 \coordinate (n2) at (7,5.5);
 \node[nsT] (n3) at (7.6,6.5) {};
 \node (n5) at (7.8,3) {};

 \node (n8) at (10,7) {};
 \node (n4) at (9.5,6.3) {};
 \node (n9) at (9,5.7) {};
\node[nsI] (n10) at (9.75,5) {};

\end{scope}

\begin{scope}[line width=1.5pt,blue,opacity=0.2]
\coordinate (ws) at ($(v1)+(0.45,0)$);
\draw (ws) .. controls +(-90:1) and +(-45:1) .. ($(v1)-(1,0)$)
.. controls +(135:1) and +(100:3) .. ($(v3)+(0.5,1)$)
.. controls +(-80:3.8) and +(-135:2) .. ($(v4)+(-1,1)$)
.. controls +(45:1) and +(-60:1) .. ($(n2)!0.5!(n3)$)
.. controls +(120:1) and +(225:1) .. (135:\rad+0.4)
.. controls +(45:1) and +(135:1) .. ($(v8)+(0.5,0.3)$)
.. controls +(-45:1) and +(90:1) .. (ws);
\end{scope}

\node[blue,opacity=0.5] at ($(v1)+(0.8,1.5)$) {$W$};

\begin{scope}[every path/.style={fill=black!12,draw=black!12}]
\begin{pgfonlayer}{bg}

\Cset{0}{0}
\CsetSeg{v8}{n10}
\CsetSeg{v8}{v6}

\CsetSeg{n1}{v4}
\CsetSeg{n3}{v3}

\end{pgfonlayer}
\end{scope}

\begin{scope}
\def\td{0.7}
\node at (0:\rad+\cd+\td) {$D_1$};
\node at (45:\rad+\cd+\td) {$D_2$};
\node at (225:\rad+\cd+\td) {$D_3$};
\node at (270:\rad+\cd+\td) {$D_4$};
\end{scope}

\begin{scope}[thick, shift={(-9.5,-5.5)},opacity=0.3]
\draw (v3) -- (n5) -- (n3);
\draw (v4) -- (n1);
\draw (n8) to[bend right=\bs] (n4);
\draw (n8) to[bend left=\bs] (n4);

\draw (n4) -- (n9);
\draw (n4) -- (n10);
\draw (n9) -- (n10);

\draw (v6) -- (n8) -- (v8);
\end{scope}

\begin{scope}[shift={(0,-7)}]
\node[align=center] {$I_W = \{\tikz{\node[nsI] {};},\tikz{\node[nsT] {};}\, \}$
\qquad $T_W=\{\tikz{\node[nsT] {};}\}$\\[0.5em]
$\mathcal{C}_W=\{D_1,D_2,D_3,D_4\}$
};
\end{scope}

\end{scope} 

\end{tikzpicture}
 \end{center}
\caption{On the left-hand side, an interface $\Phi=(I,T,\mathcal{C})$ on a graph $G=(V,E)$ is shown together with a $\Phi$-tour $F\subseteq E$ (the black edges) and a set $W\subseteq V$. The right-hand side figure depicts the interface $\Phi_W=(I_W,T_W,\mathcal{C}_W)$ induced by $(F,\Phi)$ on $W$.}\label{fig:indInterface}
\end{figure}

The following lemma shows some basic properties of induced interfaces. 

\begin{lemma}\label{lem:induced_interfaces}
 Let $G=(V,E,\ell)$ be a weighted graph and $\Phi=(I,T,\mathcal{C})$ an interface of $G$.
 Let $F$ be a $\Phi$-tour in $G$ and $W\subseteq V$.
 Let $\Phi_W$ be the interface induced by $(F,\Phi)$ on $W$.
 Then 
 \begin{enumerate}
  \item \label{item:induced_interface_is_interface} $\Phi_{W}$ is an interface of $G[W]$,
  \item \label{item:induced_tours_feasible_for_induced_interfaces} $F[W]$ is a $\Phi_W$-tour in $G[W]$, and
  \item \label{item:subtour-induces-same-interface} for every $W' \subseteq W$,
        the interface induced by $(F[W],\Phi_W)$ on $W'$ equals the interface induced by $(F,\Phi)$ on $W'$.
 \end{enumerate}
\end{lemma}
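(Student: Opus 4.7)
\medskip
\noindent\textbf{Proof plan.}
The three claims are mostly bookkeeping around Definition~\ref{def:induced_interface}, so I would verify each one by unfolding the definitions. For \ref{item:induced_interface_is_interface}, I would check $T_W \subseteq I_W \subseteq W$, that $|T_W|$ is even, and that $\mathcal{C}_W$ partitions $I_W$. The inclusion $I_W\subseteq W$ is immediate, and $|T_W|$ is automatically even by the handshake lemma applied to $F[W]$. For $T_W \subseteq I_W$, note that a vertex $v\in W\setminus I_W$ has $v\notin I$ (hence $v\notin\odd(F)=T$, so $v$ has even degree in $F$) and no $F$-edge to $V\setminus W$, so all of its $F$-edges lie in $F[W]$; thus $v\notin\odd(F[W])=T_W$. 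That $\mathcal{C}_W$ partitions $I_W$ follows directly from the fact that the connected components of $(W,F[W])$ partition $W$.

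For \ref{item:induced_tours_feasible_for_induced_interfaces}, I would verify the three conditions of Definition~\ref{def:phi_tour}. Condition~\ref{item:PhiTourIsTJoin} is the definition of $T_W$. Condition~\ref{item:PhiTourConnectivity} holds by construction of $\mathcal{C}_W$. The main step is Condition~\ref{item:PhiTourConnectToInterface}, i.e.\ that $(W,F[W])/I_W$ is connected, which amounts to showing that every connected component $Z$ of $(W,F[W])$ contains a vertex of $I_W$. This is where I would use that $(V,F)/I$ is connected. Pick $v\in Z$; if $v\in I\cap W$ we are done. Otherwise there is an $F$-path $v=v_0,v_1,\ldots,v_k$ with $v_k\in I$ and $v_0,\ldots,v_{k-1}\notin I$. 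If all $v_i\in W$, then $v_k\in I\cap W\subseteq I_W$ and lies in $Z$. Otherwise, letting $i$ be the smallest index with $v_i\notin W$, the vertex $v_{i-1}$ is in $W$ and has an $F$-edge to $V\setminus W$, so $v_{i-1}\in U\subseteq I_W$; and $v_0,\ldots,v_{i-1}$ lie in $Z$ since the edges between them are in $F[W]$. Either way $Z\cap I_W\neq\emptyset$. This connectivity argument is the only nontrivial step in the proof.

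For \ref{item:subtour-induces-same-interface}, I would compare the two induced interfaces component by component. Since $W'\subseteq W$, we have $F[W][W']=F[W']$, which immediately gives equality of the $T$-components and, once the $I$-components agree, equality of the $\mathcal{C}$-components. For the $I$-component, let $U^{\mathrm{out}}$ be the vertices of $W'$ joined by an $F$-edge to $V\setminus W'$, and $U^{\mathrm{in}}$ the vertices of $W'$ joined by an $F[W]$-edge to $W\setminus W'$. Then $(I\cap W')\cup U^{\mathrm{out}}$ must be compared with $(I_W\cap W')\cup U^{\mathrm{in}} = (I\cap W')\cup (U\cap W')\cup U^{\mathrm{in}}$. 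It suffices to show $U^{\mathrm{out}}=(U\cap W')\cup U^{\mathrm{in}}$, which follows by splitting an $F$-edge leaving $W'$ into two cases depending on whether its other endpoint lies in $W\setminus W'$ (giving $U^{\mathrm{in}}$) or in $V\setminus W$ (giving $U\cap W'$); the reverse inclusions are similar.
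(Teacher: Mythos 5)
Your proof is correct and follows essentially the same approach as the paper's: part~(i) is the contrapositive of the paper's argument, part~(iii) is the identical bookkeeping, and for part~(ii) you give the direct form of the argument (finding a vertex of $I_W$ in each component via an $F$-path to $I$) where the paper argues by contradiction with a disconnecting set; the two are contrapositives of each other and both rest on the connectivity of $(V,F)/I$.

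The one place where your write-up as stated would fail is the edge case $I=\emptyset$. There is then no ``$F$-path ending in $I$'' to invoke, and if additionally $W=V$ then $I_W=\emptyset$ as well, so your reformulation ``$(W,F[W])/I_W$ is connected $\iff$ every component of $(W,F[W])$ meets $I_W$'' is false (the right-hand side is vacuously false when $W\neq\emptyset$). These subcases are harmless but should be dispatched explicitly: if $I=\emptyset$ then $(V,F)$ itself is connected (since $(V,F)/\emptyset=(V,F)$ and $F$ is a $\Phi$-tour), so either $W=V$ and $(W,F[W])/I_W=(V,F)$ is connected outright, or $W\subsetneq V$ and every component $Z$ of $(W,F[W])$ must have an $F$-edge leaving it, which necessarily goes to $V\setminus W$ and hence yields a vertex of $U\subseteq I_W$ inside $Z$. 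With that caveat closed, the proof is complete.
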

\begin{proof}
Let $\Phi_W=(I_W,T_W,\mathcal{C}_W)$.
As in Definition~\ref{def:induced_interface}~\ref{item:induced_interface_i_w}, let $U$ be  the set of vertices in $W$ that are connected by an edge of $F$ to a vertex in $V\setminus W$.

To prove~\ref{item:induced_interface_is_interface}, we have to observe that $T_W\subseteq I_W$. (Notice that we clearly have that $|T_W|$ is even because $T_W = \odd(F[W]).)$
Let $u\in T_W$. 
If $F$ contains an edge connecting $u$ with $V\setminus W$, then $u\in U$
and hence $u\in I_W$.
Otherwise we have $(\delta(u)\cap F)\subseteq E[W]$ and hence $u\in T_W = \odd(F[W])$ implies $u\in \odd(F)$.
Since $F$ is a $\Phi$-tour, we conclude $u\in T \subseteq I$.
Moreover, $u\in T_W=\odd(F[W])\subseteq W$, so $u \in I \cap W \subseteq I_W$.

To prove  \ref{item:induced_tours_feasible_for_induced_interfaces}, we have to show that $(W,F[W])/I_W$ is connected
(the other two conditions of Definition~\ref{def:phi_tour} trivially hold).
Suppose not. Then there is a set $W' \subseteq W \setminus I_W$ with $W' \ne W$ and $F[W] \cap \delta(W') = \emptyset$.
This implies, together with $I_W=(I\cap W)\cup U$---which holds by definition of $I_W$---that $W' \subseteq V\setminus I$ with $W' \ne V$ and $F\cap \delta(W') = F[W] \cap \delta(W') = \emptyset$.
This contradicts the fact that $(V,F)/I$ is connected, which has to hold because $F$ is a $\Phi$-tour.

To show~\ref{item:subtour-induces-same-interface}, let $(I_1, T_1, \Cscr_1)$ be the interface induced by $(F, \Phi)$ on $W'$
and let $(I_2, T_2, \Cscr_2)$ be the interface induced by $(F[W], \Phi_W)$ on $W'$.
Let $U_1$ be the set of vertices in $W'$ that are connected by an edge of $F$ to a vertex in $V\setminus W'$.
Let $U_2$ be the set of vertices in $W'$ that are connected by an edge of $F[W]$ to a vertex in $W\setminus W'$.
Then $U_1 = (U\cap W') \cup U_2$. Therefore,
\begin{align*}
 I_1 &= (I \cap W') \cup U_1 \\
 &= (I\cap W') \cup (U\cap W') \cup U_2 \\
 &=  \left(\big. ((I\cap W) \cup U) \cap W' \right) \cup U_2 \\
 &= (I_{W} \cap W') \cup U_2 \\
 &= I_2\enspace.
\end{align*}
Finally, because $(F[W])[W']=F[W']$, which follows from $W'\subseteq W$, we have
\begin{equation*}
T_2 = \odd((F[W])[W']) = \odd(F[W']) = T_1\enspace,
\end{equation*}
and also $\mathcal{C}_1 = \mathcal{C}_2$, because these partitions of $I_1=I_2$ are both defined with respect to the connected components of $(W',F[W'])$, because $(W', (F[W])[W'])=(W',F[W'])$.
\end{proof}

Notice that given an interface $\Phi=(I,T,\mathcal{C})$ on a graph $G=(V,E)$ and a $\Phi$-tour $F\subseteq E$, then the interface $\Phi_V$ induced by $(F,\Phi)$ on $V$ is not necessarily identical to $\Phi$. More precisely, $\Phi_V=(I_V,T_V,\mathcal{C}_V)$ always fulfills $I_V=I$ and $T_V=T$. However, $F$ may connect different parts of the partition $\mathcal{C}$, which, in the interface $\Phi_V$, will then only appear as one set in $\mathcal{C}_V$. See the left-hand side illustration in Figure~\ref{fig:indInterface} for such an example where the highlighted $\Phi$-tour would induce an interface $\Phi_V\neq \Phi$ on $V$ because $C_2\cup C_3$ is a single set in $\mathcal{C}_V$.

In our dynamic program we will combine solutions for different subgraphs with induced interfaces. 
The following lemma shows sufficient conditions under which this works out.

\begin{lemma}\label{lem:composedSubInterfaces}
Let $G=(V,E,\ell)$ be a weighted graph.
Let $\Phi=(I,T,\mathcal{C})$ be an interface of $G$ and let $F$ be a $\Phi$-tour in $G$. 
Let $W_0,\ldots, W_p$ be a partition of $V$. 
For $i\in \{0,\ldots,p\}$, let  $\Phi_i=(I_i,T_i,\mathcal{C}_i)$ be the interface induced by $(F,\Phi)$ on $W_i$, 
and let $F_i$ be a $\Phi_i$-tour in $G[W_i]$. Then
\begin{equation*}
F' \coloneqq X \cupp \left(\bigcup_{i=0}^p F_i \right)
\end{equation*}
is a $\Phi$-tour in $G$, where $X \coloneqq F\cap \bigcup_{i=0}^p \delta(W_i)$.
\end{lemma}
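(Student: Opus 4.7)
The plan is to verify the three defining conditions of a $\Phi$-tour (Definition~\ref{def:phi_tour}) for $F'$, exploiting that $F$ and $F'$ share the same crossing edges $X$ and differ only in that $F[W_i]$ is replaced by $F_i$ inside each part $W_i$. Condition~(i) reduces to a degree count: for any $v \in W_i$, $\deg_{F'}(v) = \deg_X(v) + \deg_{F_i}(v)$ and $\deg_F(v) = \deg_X(v) + \deg_{F[W_i]}(v)$, and Lemma~\ref{lem:induced_interfaces}~(ii) tells us that $F[W_i]$ is itself a $\Phi_i$-tour, so both $F_i$ and $F[W_i]$ are $T_i$-joins and hence $\deg_{F_i}(v) \equiv \deg_{F[W_i]}(v) \pmod{2}$. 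Therefore $\odd(F') = \odd(F) = T$.

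The heart of the argument is the following walk-transformation lemma: \emph{if $u, v \in \bigcup_j I_j$ lie in the same connected component of $(V, F)$, then they lie in the same connected component of $(V, F')$.} Given a walk $P$ from $u$ to $v$ in $F$, decompose it at $X$-edges into maximal runs, each inside a single $W_j$. The entry and exit of every run lie in $I_j$: interior endpoints are incident to $X$-edges and thus sit in $U_j \subseteq I_j$, while the overall endpoints $u$ and $v$ each lie in $I_j$ for the unique part $W_j$ containing them, by hypothesis. Inside each run, the entry $a$ and exit $b$ sit in a common connected component of $(W_j, F[W_j])$, hence in the same $\mathcal{C}_j$-part by the very definition of the induced interface $\Phi_j$, and therefore in the same connected component of $(W_j, F_j)$ by condition~(iii) of $F_j$ being a $\Phi_j$-tour. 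Replacing each run by an $F_j$-walk from $a$ to $b$ yields a walk from $u$ to $v$ using only edges of $F'$.

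Condition~(iii) for $F'$ then follows immediately: any $C \in \mathcal{C}$ satisfies $C \subseteq I \subseteq \bigcup_j I_j$ and is contained in a single connected component of $(V, F)$ since $F$ is a $\Phi$-tour, so the walk-transformation lemma places all its vertices in one connected component of $(V, F')$. For condition~(ii), fix $v \in V \setminus I$ with $v \in W_i$: since $F_i$ is a $\Phi_i$-tour, $(W_i, F_i)/I_i$ is connected, giving an $F_i$-walk from $v$ to some $y \in I_i$; because $(V, F)/I$ is connected, $y$ is connected in $(V, F)$ to some $u \in I$; and the walk-transformation lemma applied to $u, y \in \bigcup_j I_j$ then links $v$ to $I$ in $(V, F')$. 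The main subtlety I expect is the endpoint analysis in the walk-transformation lemma, which is the reason the lemma is phrased for endpoints in $\bigcup_j I_j$ rather than for arbitrary vertices; the extra hop through $(W_i, F_i)/I_i$-connectivity then bootstraps condition~(ii) to arbitrary $v \in V \setminus I$.
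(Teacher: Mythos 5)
Your proof follows essentially the same structure as the paper's: the parity argument for $\odd(F')=T$ is the paper's symmetric-difference computation in different clothing; your ``walk-transformation lemma'' is precisely the paper's Claim~\ref{claim:connectivity_preserved}; and the bootstrap through $(W_i,F_i)/I_i$-connectivity for condition~\ref{item:PhiTourConnectToInterface} matches the paper's argument. One stylistic difference: you prove the walk-transformation lemma constructively (decompose the $F$-walk at $X$-edges into maximal runs, replace each run by an $F_j$-walk), whereas the paper argues by contradiction from a vertex-minimal path and distinguishes the cases $X\cap E(P)=\emptyset$ (path lies in one $W_i$) versus $X\cap E(P)\neq\emptyset$ (path is a single $X$-edge). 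Your version is arguably more transparent, but both hinge on the same facts: $X$-endpoints lie in $\bar I$, and same-component entry/exit points in $(W_j,F[W_j])$ land in a common $\mathcal{C}_j$-part.

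There is one edge case your argument for condition~\ref{item:PhiTourConnectToInterface} silently skips. You fix $v\in V\setminus I$, produce $y\in I_i$ via $(W_i,F_i)/I_i$-connectivity, then produce $u\in I$ via $(V,F)/I$-connectivity. Both steps presuppose non-emptiness: $y$ requires $I_i\neq\emptyset$, and $u$ requires $I\neq\emptyset$. If $I=\emptyset$, there is no $u$ and ``linking $v$ to $I$'' is meaningless; instead one must show $(V,F')$ itself is connected. This is easily repaired---for $p>0$, since $(V,F)$ is connected every $W_i\neq V$ has $\delta(W_i)\cap F\neq\emptyset$, so $I_i=U_i\neq\emptyset$; then any $v,v'$ bootstrap to $y\in I_i,\,y'\in I_{i'}$, which are connected in $(V,F)$ and hence in $(V,F')$ by your lemma. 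And if $p=0$ with $I=\emptyset$, then $F'=F_0$ and $I_0=\emptyset$, so $(V,F')=(W_0,F_0)/I_0$ is connected outright. The paper handles this case distinction explicitly (``Either $I=\emptyset$, \ldots\ Or $I\neq\emptyset$, \ldots''); your write-up should too.
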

\begin{proof}
We first show point~\ref{item:PhiTourIsTJoin} of Definition~\ref{def:phi_tour}, i.e.\  $\odd(F') =T$.
For $i\in \{0,\ldots,p\}$, we have $\odd(F_i) = T_i = \odd(F[W_i])$ since $F_i$ is a $\Phi_i$-tour.
Thus
\begin{align*}
 \odd(F') &= \odd(X) \symdiff \odd(F_0) \symdiff \cdots \symdiff \odd(F_p) \\
 &= \odd(X) \symdiff \odd(F[W_0]) \symdiff \cdots \symdiff \odd(F[W_p]) \\
 &= \odd(F) \\
 &= T\enspace,
\end{align*}
where $\symdiff$ denotes the symmetric difference; 
we used $F = X \cupp F[W_0] \cupp \dots \cupp F[W_p]$.
Before proving that $F'$ also fulfills the remaining two properties of a $\Phi$-tour, we show the following claim.
See Figure~\ref{fig:combining_phi_tours} for an illustration.

\begin{figure}
\begin{center}
 \begin{tikzpicture}[yscale=0.7, xscale=0.8]

\tikzset{oneEdge/.style={line width=3pt,blue}}
\def \bs{20pt}

\tikzset{nsU/.style={
fill=black,draw=none, ultra thick,circle,inner sep=0em,minimum size=4pt
}}

\tikzset{nsI/.style={
fill=black,draw=red, ultra thick,circle,inner sep=0em,minimum size=5pt
}}

\node[blue] (X) at (6,7.5) {\Large $X$};

\begin{scope}[xshift=15.5cm]
\node[nsI] (ILab) at (0,4) {};
\node[right] at ($(ILab)+(0.2,0.06)$) {$: I \subseteq \bar I$};

\node[nsU] (ULab) at (0,5) {};
\node[right] at ($(ULab)+(0.2,0.06)$) {$:\bar I$};
\end{scope}

\begin{scope}[dashed, thick]
 \node (e1) at (2.7,5.6) {$W_0$};
 \draw (3,5) ellipse (3 and 2);
 \node (e2) at (9,5.8) {$W_1$};
 \draw (8.5,6) ellipse (2 and 1.7);
  \node (e2) at (12.2,5.8) {$W_2$};
 \draw (12.5,6) ellipse (1.5 and 1.8);
 \node (e3) at (2.7,1.0) {$W_3$};
 \draw (2,1.3) ellipse (2 and 1.3);
  \node (e4) at (5.8,1.0) {$W_4$};
 \draw (6,1.5) ellipse (1.5 and 1.5);
  \node (e5) at (11.2,1.9) {$W_5$};
 \draw (11,2) ellipse (3 and 2);
\end{scope}

\begin{scope}[every node/.style={nsI}]
\node (v1) at (2,2) {};
\node (v4) at (1,1) {};
\node (v5) at (13, 1.5) {};
\node (v6) at (9,7) {};
\node (v7) at (13,6.5) {};
\end{scope}

\begin{scope}[every node/.style={nsU}]
\node (w1) at (6,2) {};
\node (w2) at (2,4) {};
\node (w3) at (4,4) {};
\node (w4) at (4.5, 6) {};
\node (w5) at (9.5, 2.5) {};
\node (w6) at (11,3) {};
\node (w8) at (7, 1.5) {};
\node (w9) at (9,5) {};
\node (w10) at (10,1) {};
\end{scope}

\begin{scope}[oneEdge]
 \draw (v6) to[bend right=\bs] (w4);
 \draw (v6) to[bend left=\bs] (w4);
 \draw (w2) -- (v1);
 \draw (w5) -- (w3) -- (w1);
 \draw (w6) -- (w9);
 \draw (w8) -- (w10);
\end{scope}

 \end{tikzpicture}  \end{center}
\caption{ 
\label{fig:combining_phi_tours}
The dashed ellipsoids show the partition of $V$ into $W_0,\dots, W_p$.
The thick blue edges are the edges in $X$.
Only the vertices in $\bar I$ are shown here, where the vertices with a red boundary
are those contained in $I$.
}
\end{figure}
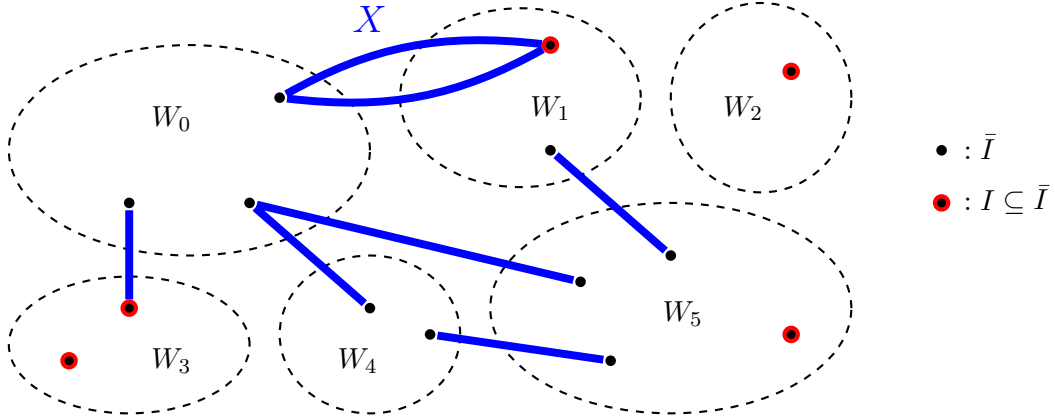

\begin{claim}\label{claim:connectivity_preserved}
 Let $\overline{I} \coloneqq I_0 \cupp \dots \cupp I_p$ and $a,b \in \overline{I}$.
 Suppose $(V, F)$ contains an $a$-$b$~path. Then $(V,F')$ contains an $a$-$b$~path.
\end{claim}

\begin{proof}[Proof of Claim~\ref{claim:connectivity_preserved}]
Suppose the claim is wrong. Then there exist vertices $a,b\in \overline{I}$ such that
$(V, F)$ contains an $a$-$b$~path $P$, but $(V,F')$ does not.
We choose $a$, $b$, and $P$ such that the number of edges of $P$ is minimum.
Consequently, $P$ contains no vertex of $\overline{I} \setminus \{ a, b\}$.
We now distinguish two cases. 

\noindent 
\textbf{Case 1:} $X\cap E(P) = \emptyset$. \\
Then $P$ is completely contained in a single set $W_i$ for some $i\in \{0,\ldots,p\}$, by definition of $X$.
Hence, $a,b \in W_i \cap \overline{I} = I_i$ and $a$ and $b$ are connected by the path $P$ in $(W_i, F[W_i])$.
Since $\Phi_i$ is the interface induced by $(F,\Phi)$ on $W_i$, the vertices $a$ and $b$ are contained in the same set of the partition $\Cscr_i$ of $I_i$.
This implies that every $\Phi_i$-tour, and in particular $F_i$, must contain an $a$-$b$~path, contradicting the assumption 
that $(V,F')$ contains no $a$-$b$~path.

\noindent 
\textbf{Case 2:} $X\cap E(P) \ne \emptyset$. \\
Recall $X = F\cap \bigcup_{i=0}^p \delta(W_i)$. 
For $i\in\{0,\ldots,p\}$, the set $I_i$ contains all vertices of $W_i$ that are an endpoint of an edge in $X$, by definition of the induced interface $\Phi_i$.
Thus all endpoints of edges in $X$ are contained in $\overline{I}$.
Since $P$ contains no vertex of $\overline{I} \setminus \{ a, b\}$, we have $X \cap E(P) = \{\{a,b\}\}$, i.e.
the path $P$ consists only of a single edge that is contained in $X$ and thus also in $F'$.
This contradicts our assumption that $(V,F')$ contains no $a$-$b$~path.
\qedhere\\[-0.3em]
\mbox{}\hfill {\footnotesize(proof of Claim~\ref{claim:connectivity_preserved})}
\end{proof}

To show point~\ref{item:PhiTourConnectivity} of Definition~\ref{def:phi_tour},
we need to show that any two vertices $a$ and $b$ that are contained in the same set of the partition $\Cscr$ of $I$
are also contained in the same connected component of $(V,F')$.
If $a$ and $b$ are contained in the same set of the partition $\Cscr$, they are contained in the same connected component of $(V,F)$
because $F$ is a $\Phi$-tour.
Hence by Claim~\ref{claim:connectivity_preserved} and $I\subseteq \bar I$, also $(V, F')$ contains an $a$-$b$~path.

It remains to show point~\ref{item:PhiTourConnectToInterface} of Definition~\ref{def:phi_tour}, i.e., we prove that $(V,F')/ I$ is connected. 
First observe that if $p=0$, then the result holds because then $F'=F_0$ is a $\Phi_0$-tour and $I_0=I$. 
Hence, assume from now on $p>0$. In this case, we first observe that
\begin{equation}\label{eq:IiNotEmpty}
I_i \neq \emptyset \quad\forall i\in \{0,\ldots,p\}\enspace.
\end{equation}
Indeed, because $(V,F)/I$ is connected, which follows from $F$ being a $\Phi$-tour, we have for each $i\in \{0,\ldots,p\}$ that either $I\cap W_i\neq \emptyset$ or $\delta(W_i)\cap F\neq\emptyset$, both of which imply $I_i\neq \emptyset$.

To conclude that $(V,F')/I$ is connected, we will observe the following two properties, which immediately imply the result:
\begin{enumerate}[label=(\alph*)]
\item\label{item:WiConnToIi} For each $i\in \{0,\ldots,p\}$, each vertex $v\in W_i$ is connected to a vertex in $I_i$ in the graph $(W_i,F_i)$.
\item\label{item:IiConn} All vertices in $\cup_{i=0}^p I_i$ are connected in $(V,F')/I$.
\end{enumerate}
Notice that~\ref{item:WiConnToIi} is a consequence of~\eqref{eq:IiNotEmpty} and the fact that $(W_i,F_i)/I_i$ is connected, which holds because
$F_i$ is a $\Phi_i$-tour in $G[W_i]$.
 Finally,~\ref{item:IiConn} follows from Claim~\ref{claim:connectivity_preserved} due to the following. Either $I=\emptyset$, in which case $(V,F)/I = (V,F)$ is connected---because $F$ is a $\Phi$-tour---which implies \ref{item:IiConn} by Claim~\ref{claim:connectivity_preserved}. Or $I\neq\emptyset$, in which case the connectivity of $(V,F)/I$ implies that in $(V,F)$ each vertex $v\in \cup_{i=0}^p I_i$ is connected to a vertex of $I$, again implying~\ref{item:IiConn}  by Claim~\ref{claim:connectivity_preserved}.
\end{proof}

\subsection{The dynamic program}

We now expand on the dynamic program used to show Theorem~\ref{thm:guessInLaminarFamily}.
The dynamic program is formally described by Algorithm~\ref{algo:guessInLaminarFamily} below. 
See also Figure~\ref{fig:dynamic_program} for an illustration.
Before formally proving that Algorithm~\ref{algo:guessInLaminarFamily} indeed returns a $\Phi$-tour implying Theorem~\ref{thm:guessInLaminarFamily}, we provide a brief explanatory discussion outlining the core ideas of the algorithm and the line of reasoning we employ to show its correctness.
\begin{figure}[b!]
\begin{center}
 \begin{tikzpicture}[yscale=0.9, xscale=0.8]

\tikzset{oneEdge/.style={line width=3pt,blue}}
\tikzset{grayOneEdge/.style={line width=1.5pt,gray}}

\def \bs{20pt}

\tikzset{nsN/.style={
fill=none,draw=none, ultra thick,circle,inner sep=0em,minimum size=4pt, outer sep=1pt
}}

\tikzset{nsU/.style={
fill=black,draw=none, ultra thick,circle,inner sep=0em,minimum size=4pt, outer sep=1pt
}}

\tikzset{nsI/.style={
fill=black,draw=red, ultra thick,circle,inner sep=0em,minimum size=5pt, outer sep=1.5pt
}}

\begin{scope}
\node[nsI] (ILab) at (14.7,4) {};
\node[right] at ($(ILab)+(0.3,0.0)$) {\small $: I \cap L \subseteq I_0 \cup \dots \cup I_3 $};

\node[nsU] (ULab) at (14.7,5) {};
\node[right] at ($(ULab)+(0.3,0.0)$) {\small $: I_0 \cup \dots \cup I_3 $};

\draw[oneEdge] (14.5,3) -- (14.9,3);
\node[right] at (15,3) {\small $: X$};

\draw[grayOneEdge] (14.5,2) -- (14.9,2);
\node[right] at (15,2) {\small $: R\cap \delta(L)$};
\end{scope}

\begin{scope}[dashed, ultra thick, blue]
 \node (L) at (5.9, 6.5) {\large$L$};
 \draw[fill=blue, fill opacity=0.1] (6.7,3.5) ellipse (6.7 and 3.4);
 
 \draw[fill=white] (2.5,3.5) ellipse (2 and 1.5);
  \node(L1) at (2.5, 3.5) {$L_1$};
 
 \draw[fill=white] (7,3.5) ellipse (2 and 1.5);
   \node(L2) at (8, 3.1) {$L_2$};
 
 \draw[fill=white] (11,3.5) ellipse (1.2 and 1.9);
  \node(L3) at (10.5, 4.4) {$L_3$};
\end{scope}

\begin{scope}[dashed]
 \draw (11, 3) ellipse (0.7 and 1.1);
 \draw (11, 2.8) ellipse (0.5 and 0.5);
 \draw (4.8,3.5) ellipse (4.5 and 2.4);
 \draw (9, 1.2) ellipse (1.1 and 0.7);
\end{scope}

\begin{scope}[every node/.style={nsI}]
\node (t1) at (11,2.8) {};
\node (t2) at (1.8,3) {};
\node (t3) at (4.5,2) {};
\node (t4) at (9,1) {};
\node (t5) at (6.3,4) {};
\end{scope}

\begin{scope}[every node/.style={nsU}]
\node (v1) at (2,4) {};
\node (v2) at (3.5,3.5) {};
\node (v3) at (7.5, 6) {};
\node (v4) at (7,2.5) {};
\node (v5) at (11.5, 4.5) {};
\node (v6) at (6,3) {};
\node (v7) at (7,0.6) {};
\end{scope}

\begin{scope}[every node/.style={nsN}]
\node (o1) at (0.5,6.3) {};
\node (o2) at (8,8) {};
\node (o3) at (13, 6.5) {};
\end{scope}

\begin{scope}[oneEdge]
 \draw (v2) -- (v6);
 \draw (v4) -- (v7);
 \draw (t5) -- (v2);
\end{scope}

\begin{scope}[grayOneEdge]
\draw (o1) -- (v1);
\draw (o2) -- (v3);
\draw (o3) -- (v5);
\end{scope}

 \end{tikzpicture}  \end{center}
\caption{
Illustration of Algorithm~\ref{algo:guessInLaminarFamily}.
The dashed ellipses show the laminar family $\Lscr$; 
these sets are considered by the algorithm in an order of non-decreasing cardinality.
Suppose we are considering $L\in \Lscr(R,k)$; only subsets of $L$ are shown in the figure.
In the dynamic program we guess the children $L_1, \dots, L_p$ of $L$ in the laminar family  $\Lscr(R,k)$.
The sets $L_1, \dots, L_p$ are shown as blue ellipses with white interior.
The light blue area shows the set $L_0 = L \setminus (L_1 \cup \dots \cup L_p)$.
\\
We also guess the set $X$ of edges in $R[L] \cap (\delta(L_1) \cup \dots \cup \delta (L_p))$; these are the thick blue edges.
The thin gray edges are the edges in $R\cap \delta(L)$; these will be guessed only in a later step.
However, we do guess the interface $\Phi_L =(I_L, T_L, \Cscr_L)$ that $(R,\Phi)$ induces on $L$, where $I_L$ consists of 
all vertices in $I\cap L$ (shown with a thick red boundary) and all vertices in $L$ that are endpoints of gray edges.
Moreover, we guess the interfaces $\Phi_i =(I_i, T_i, \Cscr_i)$ that $(R,\Phi)$ induces on the sets $L_i$ for $i\in \{0,\dots,p\}$.
The picture shows only the vertices in $I_0 \cup \dots \cup I_p$;
these are the vertices in $L$ that are contained in the set $I$ or are an endpoint of a thick blue or thin gray edge.
\\
We compute a $\Phi_0$-tour in $G[L_0]$ and combine it with $X$ and the $\Phi_i$-tours in $G[L_i]$ for $i\in\{1,\dots,p\}$
that we have computed in previous steps of the dynamic program.
This yields a $\Phi_L$-tour in $G[L]$.
\label{fig:dynamic_program}
}
\end{figure}
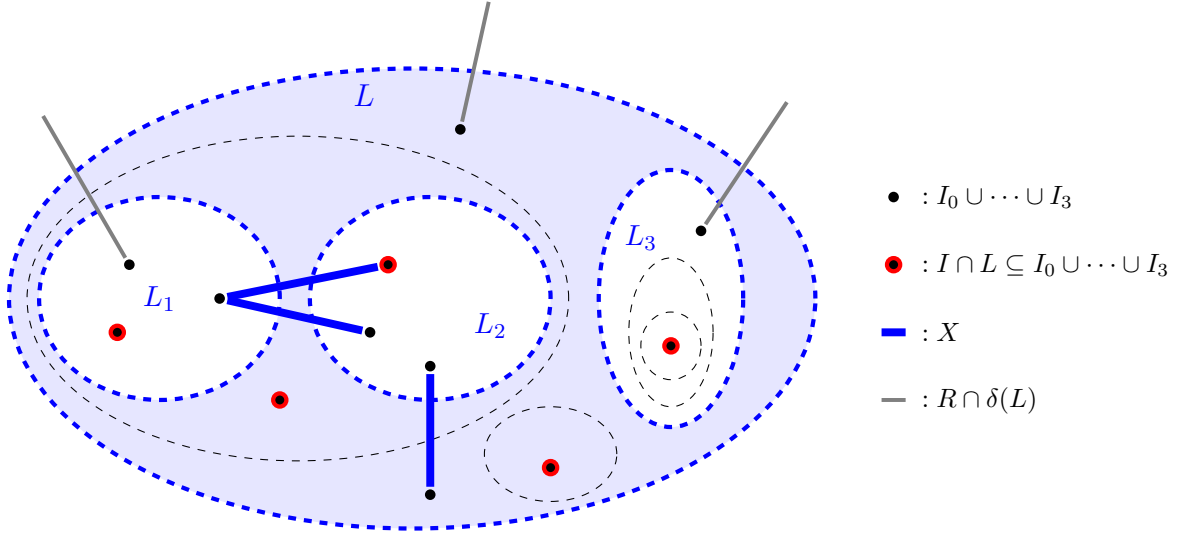

To this end, let $R$ be a $\Phi$-tour (unknown to the algorithm), and we will show that the dynamic program returns a $\Phi$-tour $F\subseteq E$ such that $\ell(F) \leq \beta \cdot \ell(R) - (\beta-1)\cdot \ell(R(\mathcal{L},k))$.
Conceptually, we want to consider the elements of the laminar family $\Lscr(R,k) \subseteq \Lscr$ from smaller to larger ones.
Since we do not know the laminar family $\Lscr(R,k)$, we consider all sets in $\Lscr$ in an arbitrary fixed order of non-decreasing cardinality.
We then guess, for every vertex set $L\in\Lscr(R,k)$, the interface $\Phi_L$ induced by $(R,\Phi)$ on $L$.
Now we compute a $\Phi_L$-tour $F_{L,\Phi_L}$ in $G[L]$ as follows.

First, we guess the children $L_1, \dots, L_p$ of $L$ in the laminar family $\Lscr(R,k)$.
Then we guess the set $X \subseteq R[L]$ of edges that cross the cuts $\delta(L_1), \dots , \delta(L_p)$.
In other words, we guess all edges in  $R(\Lscr,k)$ that are contained in $L$, but not in any child of $L$.
Moreover, for each child $L_i$ with $i\in \{1,\ldots,p\}$, we guess the interface $\Phi_i$ induced by $(R,\Phi)$ on $L_i$.
Because we consider the elements of the laminar family $\Lscr$ in an order of non-decreasing cardinality, we have already considered 
$L_i$ before considering the current set $L$.
Hence we have already computed some $\Phi_i$-tour $F_{L_i,\Phi_i}$ for all $i\in \{1,\ldots,p\}$.

We now want to extend the union of these $\Phi_i$-tours for all $i\in\{1,\ldots,p\}$ and the set $X$ of edges crossing the 
boundaries of the children $L_1, \dots, L_p$ to a $\Phi_L$-tour in $G[L]$.
To this end we define $L_0  \coloneqq L\setminus \cup_{i=1}^p L_i$. 
Then $L_0, \dots, L_p$ is a partition of $L$.
We also guess the interface $\Phi_0$ that $(R,\Phi)$ induces on  $L_0$.
Then, by Lemma \ref{lem:composedSubInterfaces} applied to the graph $G[L]$, 
the union of $X$ and arbitrary $\Phi_i$-tours in $G[L_i]$ for $i = \{0, \dots, p\}$ is a $\Phi_L$-tour in $G[L]$.
Here we use that $\Phi_i$ is the interface induced by $(R[L], \Phi_L)$ on $L_i$ for $i =\{0, \dots, p\}$ (cf.~Lemma~\ref{lem:induced_interfaces}
\ref{item:subtour-induces-same-interface}).
Finally, we use the given algorithm $\Bscr$ to compute a $\beta$-approximation  $F_0$ 
of a minimum length $\Phi_0$-tour in the subgraph $G[L_0]$ and 
combine $X$, $F_0$, and the $\Phi_i$-tours $F_{L_i,\Phi_i}$ for $i\in \{1,\ldots,p\}$ to a $\Phi_L$-tour $F_{L,\Phi_L}$.

\smallskip
\begin{algorithm2e}[H]
\setstretch{1.05}

\SetKwData{Nil}{Nil}

Let $\overline{\mathcal{L}}= \mathcal{L}\cup \{V\}$\;

\For{$L\in \overline{\mathcal{L}}$, in non-decreasing order of cardinality}{

\For{each interface $\Phi_L=(I_L,T_L,\mathcal{C}_L)$ of $G[L]$ with $|I_L|\leq |I| + k$}{

Let $F_{L,\Phi_L} \coloneqq \Nil$\tcp*{No $\Phi_L$-tour found yet.}
\tcp*[f]{We set by convention: $\ell(\Nil)=\infty$.}

\For{each subfamily $\{L_1,\ldots, L_p\}\subseteq \overline{\mathcal{L}}$ of disjoint proper subsets of $L$}{

Let $L_0 \coloneqq L\setminus \cup_{i=1}^p L_i$\;

\For{all $X\subseteq \left(\bigcup_{i=1}^p \delta(L_i)\right) \cap E[L]$ with $|X\cap \delta(L_i)|\leq k \;\;\forall i\in \{1,\ldots,p\}$ }{

For $i\in \{0, \dots, p\}$, let $U_i \subseteq L_i$ be the set of all vertices in $L_i$ that are an endpoint of some edge in $X$\;
Let $I_i \coloneqq (I_L\cap L_i) \cup U_i \;\;\forall i \in \{0,\ldots, p\} $\;

\For{all $T_i, \mathcal{C}_i$ such that  $\Phi_i=(I_i, T_i, \mathcal{C}_i)$ is an interface of $G[L_i] \;\;\ (i \in \{0,\ldots, p\})$ }{

Use Algorithm $\mathcal{B}$ to find a $\Phi_0$-tour $F_0$\;

Let $F\coloneqq X\cupp F_0 \cupp \bigcup_{i=1}^p F_{L_i,\Phi_i}$\;

\If{$F$ is a $\Phi_L$-tour and $\ell(F)\leq \ell(F_{L,\Phi_L})$}{

Set $F_{L,\Phi_L} = F$\;

}

}

}

}

}

}

\Return $F_{V,\Phi}$\;

\caption{Algorithm computing a $\Phi$-tour with the properties mentioned in Theorem~\ref{thm:guessInLaminarFamily}}\label{algo:guessInLaminarFamily}
\end{algorithm2e}

\bigskip

In what follows, we now provide a rigorous proof that Algorithm~\ref{algo:guessInLaminarFamily} implies Theorem~\ref{thm:guessInLaminarFamily} by leveraging the tools from Section~\ref{sect:combining_solutions}.

\subsection{Proof of Theorem~\ref{thm:guessInLaminarFamily}}

We start by showing that Algorithm~\ref{algo:guessInLaminarFamily} has indeed the claimed running time, before proving its correctness.

\subsubsection*{Running time}

The running time of Algorithm~\ref{algo:guessInLaminarFamily} is dominated by the $5$-fold nested for-loops. 
We first determine upper bounds on the number of iterations of each for-loop separately, whenever the algorithm reaches it.

\begin{enumerate}[label=\raisenth* for-loop:, leftmargin=2.6cm]

\item It goes over all sets in $\overline{\mathcal{L}}$. 
Because $\overline{\mathcal{L}}$ is a laminar family over $V$, it contains $O(|V|)$ sets.

\item It goes over all interfaces $\Phi_L=(I_L, T_L, \mathcal{C}_L)$ of $G[L]$ with $|I_L| \leq |I| + k$. 
There are no more than $(|L|+1)^{|I| +k} \leq (|V|+1)^{|I| +k}$ choices for choosing $I_L$. 
Moreover, there are at most $2^{|I_L|} \leq 2^{|I|+k}$ choices for $T_L\subseteq I_L$. 
Finally, the number of partitions $\mathcal{C}_L$ of $I_L$ can be upper bounded by $|I_L|^{|I_L|} \leq |V|^{|I|+k}$. 
Overall, the number of iterations of any run of the second for-loop is bounded by $|V|^{O(|I|+k)}$.

\item It iterates over subfamilies of $\overline{\mathcal{L}}$ of disjoint proper subsets of $L$. 
Because the sets are disjoint, such a family can have at most $\width(\overline{\mathcal{L}}) \le \width(\mathcal{L}) + 1$ sets, 
and we can therefore bound the number of these subfamilies by $|\overline{\mathcal{L}}|^{\width(\mathcal{\overline{L}})} = |V|^{O(\width(\mathcal{L}))}$.

\item It iterates over edge sets $X\subseteq (\cup_{i=1}^p \delta(L_i)) \cap E[L]$ with $|X\cap \delta(L_i)| \leq k$ for all $i\in \{1,\ldots,p\}$, and can be bounded as follows. 
Notice that $|X| \leq \sum_{i=1}^p |X\cap \delta(L_i)| \leq p\cdot k \leq \width(\mathcal{L}) \cdot k$. 
Hence, there are at most $(|E|+1)^{k\cdot \width(\mathcal{L})} = |V|^{O(k\cdot \width(\mathcal{L}))}$ options for $X$.

\item 
This loop runs for all $i\in \{0,1,\dots,p\}$ over all interfaces $\Phi_i=(I_i,T_i,\mathcal{C}_i)$ of $G[L_i]$ , 
where $L_i$ and $I_i$ are fixed.
The number of interfaces $\Phi_i$ for a fixed $i\in \{0, \dots, p\}$ is thus bounded by $(2|I_i|)^{|I_i|} \leq (2|V|)^{|I_i|}$ and, 
hence, the total number of combinations of such interfaces, and thus also on the number of iterations each time this for-loop is run, is bounded by
\begin{align}\label{eq:boundOnIterations5}
\prod_{i=0}^p (2|V|)^{|I_i|}
= (2|V|)^{\sum_{i=0}^p |I_i|}\enspace.
\end{align}
Moreover, for $i\in \{1,\ldots,p\}$, we have $ |I_i| \leq  k + |I_L \cap L_i|$, which follows from the fact that each set $I_i$ contains the elements of $I_L \cap L_i$ 
together with at most $k$ endpoints of edges from $X$ because $|X\cap \delta(L_i)|\leq k$.
This implies
\begin{equation}\label{eq:boundSizeOfIi}
 \sum_{i=1}^p |I_i|  \leq p \cdot k + |I_L| \le \width(\mathcal{L})\cdot k + (|I| + k)  = O(|I|+k\cdot \width(\mathcal{L}))\enspace.
\end{equation}
Similarly,  
\begin{equation}\label{eq:boundSizeOfI0}
|I_0| \leq |I_L| + k \cdot \width(\mathcal{L}) \leq |I| + k + k\cdot \width(\mathcal{L}) = O(|I|+k\cdot \width(\mathcal{L}))\enspace.
\end{equation}
Combining \eqref{eq:boundSizeOfIi} and \eqref{eq:boundSizeOfI0} with \eqref{eq:boundOnIterations5},
we can bound the number of iterations of the fifth for-loop by $|V|^{O(|I|+k\cdot \width(\mathcal{L}))}$.
\end{enumerate}

\smallskip

The most expensive single operation performed by Algorithm~\ref{algo:guessInLaminarFamily} is the call to Algorithm~$\mathcal{B}$ to find a $\Phi_0$-tour, 
which, by assumption, takes no more than $f_{\mathcal{B}}(G, |I_{0}|)$ time. 
Due to the bound on $|I_0|\leq |I| + k\cdot (\width(\mathcal{L})+1)$ provided by~\eqref{eq:boundSizeOfI0}, we have that the total running time is thus indeed bounded by
$|V|^{O(|I|+k\cdot \width(\mathcal{L}))} \cdot f_{\mathcal{B}}(G, |I| + k \cdot (\width(\mathcal{L})+1))$.

\subsubsection*{Correctness}

We now show that, whenever $G$ admits a $\Phi$-tour, then Algorithm~\ref{algo:guessInLaminarFamily} 
will find a $\Phi$-tour $F_{V,\Phi}$ with the length guarantee claimed by Theorem~\ref{thm:guessInLaminarFamily}.
So let $R$ be a $\Phi$-tour. We have to show that $F_{V,\Phi}$ computed by the algorithm is a $\Phi$-tour (instead of \textsf{Nil}) and that it satisfies
\begin{equation}\label{eq:FPhiGood}
\ell(F_{V,\Phi}) \leq \beta \cdot \ell(R\setminus R(\mathcal{L},k)) + \ell(R(\mathcal{L},k))\enspace.
\end{equation}
We prove~\eqref{eq:FPhiGood} by showing the following claim from smaller to larger sets $L\in \mathcal{L}(R,k)\cup \{V\}$.

\begin{claim}\label{claim:subinterfacesAreGood}
Let $L\in \mathcal{L}(R,k) \cup \{V\}$.
If $L=V$, let $\Phi_L = \Phi$. 
Otherwise, let $\Phi_L=(I_L, T_L, \mathcal{C}_L)$ be the interface induced by $(R,\Phi)$ on $L$. 
Then Algorithm~\ref{algo:guessInLaminarFamily} computes a $\Phi_L$-tour $F_{L,\Phi_L}$ such that
\begin{equation*}
\ell(F_{L,\Phi_L}) \leq \beta\cdot \ell\left(R[L]\setminus R(\mathcal{L},k)\right) + \ell\left( R[L] \cap R(\mathcal{L},k) \right)\enspace.
\end{equation*}
\end{claim}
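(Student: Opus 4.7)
The plan is to prove Claim~\ref{claim:subinterfacesAreGood} by induction on $|L|$, exploiting the order in which Algorithm~\ref{algo:guessInLaminarFamily} processes the sets in $\overline{\mathcal{L}}$. The key structural observation is that for the analysis we pick a specific good iteration: the one in which $L_1,\dots,L_p$ are chosen to be the (by laminarity, pairwise disjoint) maximal elements of $\mathcal{L}(R,k)$ that are strict subsets of $L$, the set $X$ is chosen to be $R[L] \cap \bigcup_{i=1}^p \delta(L_i)$, and each interface $\Phi_i$ is chosen to be the interface induced by $(R,\Phi)$ on $L_i$. We first verify that this iteration is actually considered by the algorithm: $|X \cap \delta(L_i)|\le |R\cap\delta(L_i)| \le k$ because $L_i\in \mathcal{L}(R,k)$, and $|I_L|\le |I|+k$ because the extra vertices in $I_L\setminus I$ are endpoints of the at most $k$ edges of $R\cap\delta(L)$ (and $L=V$ contributes no such vertices). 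The $I_i$ produced in the loop then agree with those of the induced interfaces $\Phi_i$.

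Next, I would verify feasibility. By Lemma~\ref{lem:induced_interfaces}\ref{item:induced_tours_feasible_for_induced_interfaces} applied in $G$, the restriction $R[L_i]$ is a $\Phi_i$-tour in $G[L_i]$ for each $i\in\{0,\dots,p\}$, so in particular $\Phi_0$ is feasible and the call to $\mathcal{B}$ returns some $\Phi_0$-tour $F_0$ with $\ell(F_0)\le \beta\cdot \ell(R[L_0])$. By Lemma~\ref{lem:induced_interfaces}\ref{item:subtour-induces-same-interface}, the interface that $(R[L],\Phi_L)$ induces on $L_i$ coincides with $\Phi_i$. Hence Lemma~\ref{lem:composedSubInterfaces}, applied in $G[L]$ to the partition $L_0,\dots,L_p$ with the $\Phi_i$-tours $F_0$ and $F_{L_i,\Phi_i}$ (the latter existing and being $\Phi_i$-tours by the inductive hypothesis applied to each $L_i\in\mathcal{L}(R,k)$, which is processed earlier since $|L_i|<|L|$), guarantees that the combined set $F = X \cupp F_0 \cupp \bigcup_{i=1}^p F_{L_i,\Phi_i}$ is a $\Phi_L$-tour in $G[L]$. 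Consequently $F_{L,\Phi_L}$ is not \textsf{Nil} and satisfies $\ell(F_{L,\Phi_L})\le \ell(F)$.

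The remaining piece is the length bound, and the crux here is the identity $R[L_0]\cap R(\mathcal{L},k)=\emptyset$. Indeed, suppose $e\in R[L_0]$ lies in $\delta(L')$ for some $L'\in\mathcal{L}(R,k)$. By laminarity of $\mathcal{L}$, comparing $L'$ with $L$ and with $L_1,\dots,L_p$ forces $L'$ to either contain a full endpoint-pair of $e$ or to be equal to or contained in some $L_i$; in each case one easily derives a contradiction with $e$ having both endpoints in $L_0=L\setminus\bigcup_i L_i$. Combined with the evident inclusion $X\subseteq R(\mathcal{L},k)$ (since each $L_i\in\mathcal{L}(R,k)$), this yields the disjoint multi-unions
\begin{align*}
R[L]\setminus R(\mathcal{L},k) &= \bigcupp_{i=0}^{p}\bigl(R[L_i]\setminus R(\mathcal{L},k)\bigr),\\
R[L]\cap R(\mathcal{L},k) &= X \cupp \bigcupp_{i=1}^{p}\bigl(R[L_i]\cap R(\mathcal{L},k)\bigr),
\end{align*}
where I used $\ell(R[L_0])=\ell(R[L_0]\setminus R(\mathcal{L},k))$. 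Combining $\ell(F)=\ell(X)+\ell(F_0)+\sum_{i=1}^p \ell(F_{L_i,\Phi_i})$ with the inductive bound on each $\ell(F_{L_i,\Phi_i})$ and the bound $\ell(F_0)\le \beta\cdot\ell(R[L_0])$, the terms reassemble exactly into $\beta\cdot\ell(R[L]\setminus R(\mathcal{L},k))+\ell(R[L]\cap R(\mathcal{L},k))$.

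The main obstacle I expect is the bookkeeping around the two disjointness/decomposition facts, especially the argument that $R[L_0]\cap R(\mathcal{L},k)=\emptyset$: this is where the structural choice of $L_1,\dots,L_p$ as the maximal elements of $\mathcal{L}(R,k)$ strictly below $L$ is essential, so that any cut in $\mathcal{L}(R,k)$ crossed by an edge of $R[L_0]$ would either coincide with some $L_i$ or lie strictly inside one, contradicting the placement of the edge's endpoints in $L_0$. The rest is a careful but routine decomposition of $\ell(R[L])$ together with the inductive hypothesis.
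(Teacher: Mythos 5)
Your proposal is correct and takes essentially the same approach as the paper: induction from smaller to larger sets of $\mathcal{L}(R,k)\cup\{V\}$, choosing the specific iteration where $L_1,\dots,L_p$ are the children of $L$ in $\mathcal{L}(R,k)\cup\{V\}$, $X=R[L]\cap\bigcup_{i=1}^p\delta(L_i)$, and the $\Phi_i$ are the interfaces induced by $(R,\Phi)$, then invoking Lemma~\ref{lem:induced_interfaces} and Lemma~\ref{lem:composedSubInterfaces} for feasibility and a partitioning of $R[L]$ into $X$, $R[L_0]$, and the $R[L_i]$ for the length bound. The only stylistic difference is that you make the key disjointness $R[L_0]\cap R(\mathcal{L},k)=\emptyset$ explicit, whereas the paper leaves it implicit in stating the partition of $R[L]\setminus R(\mathcal{L},k)$ directly with $R[L_0]$ as a block.
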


Observe that the claim immediately implies Theorem~\ref{thm:guessInLaminarFamily} by choosing $L=V$.
Hence, it remains to prove the claim.

\begin{proof}[Proof of Claim~\ref{claim:subinterfacesAreGood}]

We prove the claim by induction from smaller to larger sets in $\mathcal{L}(R,k)\cup \{V\}$. 
Hence, let $L\in \mathcal{L}(R,k)\cup \{V\}$ and assume that the claim holds for sets in $\mathcal{L}(R,k)\cup \{V\}$ of strictly smaller cardinality than $L$. 
In particular, it holds for the children $L_1, \ldots, L_p$ of $L$ in the laminar family $\mathcal{L}(R,k)\cup \{V\}$. (Note that $L$ may also not have any children.)
Let $L_0 \coloneqq L \setminus \cup_{i=1}^p L_i$, and for $i\in \{0,\ldots, p\}$, let $\Phi_i = (I_i, T_i, \mathcal{C}_i)$ be the interface induced by $(R,\Phi)$ on $L_i$. 
By using Lemma~\ref{lem:induced_interfaces} \ref{item:subtour-induces-same-interface} in the case $L\ne V$, 
we observe that $\Phi_i$ is also the interface induced by $(R[L], \Phi_L)$ on $L_i$.
Let $F_0$ be a $\Phi_0$-tour obtained through Algorithm~$\mathcal{B}$. Because $L_0, L_1,\ldots, L_p$ partitions $L$, we have by Lemma~\ref{lem:composedSubInterfaces} that
\begin{equation*}
F\coloneqq X
\cupp F_0
\cupp\; \bigcup_{i=1}^p F_{L_i,\Phi_i} 
\end{equation*}
is a $\Phi_L$-tour, where
\begin{equation}\label{eq:setXToBeConsidered}
X \coloneqq R[L]\cap \bigcup_{i=1}^p \delta(L_i)\enspace.
\end{equation}
 Before discussing that this $\Phi_L$-tour $F$ will indeed be considered by Algorithm~\ref{algo:guessInLaminarFamily}, we bound its length.
 First, $\ell(F_0) \le \beta \cdot \ell(R[L_0])$ because $\mathcal{B}$ is a $\beta$-approximation algorithm and
 $R[L_0]$ is a $\Phi_0$-tour by  Lemma \ref{lem:induced_interfaces}~\ref{item:induced_tours_feasible_for_induced_interfaces}.
 Moreover, for $i\in \{1,\dots, p\}$ we apply the induction hypothesis to $L_i$ and $\Phi_i$, which is possible because $L_i \in \Lscr(R,k)$ has strictly smaller cardinality than $L$.
 Hence, $F_{L_i,\Phi_i}$ is a  $\Phi_i$-tour and fulfills the length bound stated in the claim.
 We therefore get
\begin{align}
\ell(F) &= \ell\left( X \right) + \ell(F_0) + \sum_{i=1}^p \ell\left( F_{L_i, \Phi_i} \right)\notag\\
&\leq \ell\left( X \right) + \beta\cdot \ell(R[L_0]) + \sum_{i=1}^p \big(
\beta\cdot \ell\left( R[L_i] \setminus R(\mathcal{L},k)) + \ell(R[L_i] \cap R(\mathcal{L},k) \right)
\big)\notag\\
&= \beta\cdot \ell\left( R[L] \setminus R(\mathcal{L},k) \right) + \ell\left( R[L] \cap R(\mathcal{L},k) \right)\label{eq:lengthOfCompFGood}\enspace,
\end{align}
where the last equality follows by observing that
\begin{align*}
R[L_0],\; R[L_1]\setminus R(\mathcal{L},k), \ldots, R[L_p]\setminus R(\mathcal{L},k)\quad &\text{partitions}\quad R[L]\setminus R(\mathcal{L},k)\enspace, \text{ and}\\
X,\;
R[L_1]\cap R(\mathcal{L},k),\ldots,R[L_p]\cap R(\mathcal{L},k)\quad
&\text{partitions}\quad R[L]\cap R(\mathcal{L},k)\enspace.
\end{align*}
Due to~\eqref{eq:lengthOfCompFGood}, the $\Phi_L$-tour $F$ fulfills the length bound of the claim. 
It remains to show that the $\Phi_L$-tour $F$ will indeed be considered by Algorithm~\ref{algo:guessInLaminarFamily}.
For this, we show that the following quantities are considered in the five nested for-loops:
\begin{enumerate}[label=\raisenth* for-loop:, leftmargin=2.6cm]
\item considers $L$,
\item considers the interface $\Phi_L=(I_L,T_L,\mathcal{C}_L)$,
\item considers the children $L_1,\ldots, L_p$ of $L$ in the laminar family $\mathcal{L}(R,k)\cup \{V\}$,
\item considers the set $X$,
\item considers, for $i\in \{0,\dots, p\}$, the interfaces $\Phi_i$ induced by $(R[L],\Phi_L)$ on $L_i$.
\end{enumerate}
\smallskip
This run would indeed produce $F$. All that remains to be shown is that the above five quantities, to be considered within the five nested for-loops, fulfill the conditions set by the respective for-loops:

\begin{enumerate}[label=\raisenth* for-loop:, leftmargin=2.6cm]
\item Algorithm~\ref{algo:guessInLaminarFamily} considers all sets in $\mathcal{L}$ and hence, also $L$.

\item 
If $L = V$, the interface $\Phi$ is obviously considered.
Otherwise $\Phi_L=(I_L, T_L, \mathcal{C}_L)$ is the interface induced by $(R,\Phi)$ on $L$,
and we have $I_L = (I \cap L) \cup U$, where $U$ is the set of vertices in $L$ connected by an edge of $R$ to a vertex in $V\setminus L$. 
As $L \in \Lscr(R,k) \cup \{V\}$, we have $|\delta(L)\cap R|\leq k$, and hence $|U|\leq k$, 
which implies $|I_L| \leq |I| + k$ and shows that the interface $\Phi_L$ is considered in the second for-loop.

\item We have $\{L_1,\ldots, L_p\} \subseteq \overline{\mathcal{L}}$ . Hence, the subfamily $\{L_1, \ldots, L_p\}$ will be considered in the third nested for-loop.

\item The set $X$ we want to consider is given by~\eqref{eq:setXToBeConsidered}. 
This set clearly satisfies $X\subseteq \left(\cup_{i=1}^p\delta(L_i)\right) \cap E[L]$ because $R[L]\subseteq E[L]$. Moreover, for each $i\in \{1,\ldots,p\}$ we have
\begin{equation*}
|X\cap \delta(L_i)| = |R[L]\cap \delta(L_i)| \leq |R\cap \delta(L_i)| \leq k\enspace,
\end{equation*}
where the last inequality follows from $L_i\in \mathcal{L}(R,k)$. Hence, the set $X$ will be considered during the fourth nested for-loop of the algorithm.

\item For $i\in \{0,\ldots, p\}$ we have that $\Phi_i=(I_i,T_i,\mathcal{C}_i)$ is the interface of $G[L_i]$ induced by $(R[L],\Phi_L)$ on $L_i$. 
Hence, $I_i = (I_L \cap L_i) \cup U_i$, where $U_i$ are all vertices in $L_i$ connected by an edge of $R[L]$ to a vertex in $L\setminus L_i$.
We have $R[L]\cap \delta(L_i) = X \cap \delta(L_i)$ by our choice of $X$ as described in~\eqref{eq:setXToBeConsidered} and because $\{L_0,\dots, L_p\}$ is a partition of $L$.
Therefore, $I_i \coloneqq (I_L\cap L_i) \cup U_i$, as desired. 
Hence, the interfaces $\Phi_i$ for $i\in \{0,\dots, p\}$ indeed get considered in the fifth nested for-loop of the algorithm.
\end{enumerate}
\end{proof}

As said, Claim~\ref{claim:subinterfacesAreGood} implies \eqref{eq:FPhiGood}, completing the
proof of Theorem~\ref{thm:guessInLaminarFamily}.

We remark that Claim~\ref{claim:subinterfacesAreGood} can be slightly strengthened as follows. The statement also holds when replacing the induced interface $\Phi_L = (I_L,T_L,\mathcal{C}_L)$ by any interface $\Phi_L' = (I_L, T_L, \mathcal{C}_L')$ where $\mathcal{C}_L'$ is a refinement of $\mathcal{C}_L$. However, we do not need this for our purposes.

\section{Proof of the main theorem}\label{sec:MainThm}

We finally prove that the Boosting Theorem (Theorem~\ref{thm:iterImprovement}) implies Theorem~\ref{thm:mainReduction}. 
In fact, we prove a generalization, stated below as Theorem~\ref{thm:phiTourMain}, which, for $k=2$ and $\Phi=(I,T,\Cscr)$ with $I=T=\{s,t\}$ and $\Cscr=\{\{s,t\}\}$, yields Theorem~\ref{thm:mainReduction}.

\begin{theorem}\label{thm:phiTourMain}
    Let $\mathcal{A}$ be an $\alpha$-approximation algorithm for TSP. Then, for any $\epsilon >0$ and any integer $k$, there is an $(\alpha+\epsilon)$-approximation algorithm for $\Phi$-TSP restricted to instances with $|I_{\Phi }|\le k$ that, for any instance $(G,\Phi)$, calls $\mathcal{A}$ a strongly polynomial number of times on TSP instances defined on subgraphs of $G$, and performs further operations taking strongly polynomial time.
\end{theorem}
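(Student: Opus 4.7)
The plan is to obtain Theorem~\ref{thm:phiTourMain} by iterating the Boosting Theorem (Theorem~\ref{thm:iterImprovement}) a constant number of times, bootstrapping from the $4$-approximation for $\Phi$-TSP given by Theorem~\ref{thm:fourApprox}. Set $\epsilon' \coloneqq \min\{1,\,\epsilon/\alpha\} \in (0,1]$, so that $(1+\epsilon')\alpha \le \alpha+\epsilon$. Define a sequence of $\Phi$-TSP approximation algorithms $\mathcal{B}_0,\mathcal{B}_1,\dots,\mathcal{B}_r$ inductively: let $\mathcal{B}_0$ be the algorithm of Theorem~\ref{thm:fourApprox}, and for $i \ge 1$, let $\mathcal{B}_i$ be the algorithm produced by Theorem~\ref{thm:iterImprovement} when applied to $\mathcal{A}$ and $\mathcal{B}_{i-1}$ with parameter $\epsilon'$. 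The final algorithm will be $\mathcal{B}_r$, for a suitable constant $r$ chosen below.

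Let $\beta_i$ be the approximation ratio of $\mathcal{B}_i$. By construction, $\beta_0=4$ and
\begin{equation*}
    \beta_i \;\le\; \max\!\left\{(1+\epsilon')\alpha,\ \beta_{i-1} - \tfrac{\epsilon'}{8}(\beta_{i-1}-1)\right\}.
\end{equation*}
Whenever $\beta_{i-1} > (1+\epsilon')\alpha$, this yields $\beta_i - 1 \le (\beta_{i-1}-1)(1-\epsilon'/8)$, and hence $\beta_r - 1 \le 3(1-\epsilon'/8)^r$. Choosing the integer $r \coloneqq \bigl\lceil \tfrac{8}{\epsilon'}\ln\tfrac{3}{(1+\epsilon')\alpha-1}\bigr\rceil$ therefore guarantees $\beta_r \le (1+\epsilon')\alpha \le \alpha+\epsilon$. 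Note that both $\epsilon'$ and $r$ depend only on $\alpha$ and $\epsilon$, so they are constants (using $\alpha \ge 1$ and $\epsilon > 0$, the denominator $(1+\epsilon')\alpha - 1 \ge \epsilon'$ is strictly positive).

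For the running time, the crucial feature of Theorem~\ref{thm:iterImprovement}---and the reason why Theorem~\ref{thm:phiTourMain} requires the hypothesis $|I_\Phi|\le k$---is that $\mathcal{B}_i$ invokes $\mathcal{B}_{i-1}$ only on instances whose interface has size at most $\tfrac{9}{\epsilon'}$ times the current interface size. Unrolling the recursion, if the outer call to $\mathcal{B}_r$ has $|I_\Phi|\le k$, then $\mathcal{B}_{r-j}$ is called only on interfaces of size at most $k(9/\epsilon')^j$, and in particular $\mathcal{B}_0$ is called only on interfaces of size at most $K \coloneqq k(9/\epsilon')^r$, which is a constant. Since every recursion level multiplies the running time by a $|V|^{O(K/\epsilon')}$ factor and there are only $r = O(1)$ levels, the algorithm makes a strongly polynomial number of calls to $\mathcal{A}$---each on a subgraph of $G$, a property preserved by Theorem~\ref{thm:iterImprovement}---and performs the remaining operations in strongly polynomial time.

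The main ``obstacle'' is really just the bookkeeping above: since Theorem~\ref{thm:iterImprovement} does all the heavy lifting, the proof reduces to verifying that (i) the approximation ratio converges to $\alpha+\epsilon$ in a constant number of iterations, and (ii) the interface-size blow-up factor $(9/\epsilon')^r$, although exponential in $r$, remains constant because $r$ depends only on $\alpha$ and $\epsilon$. Theorem~\ref{thm:mainReduction} then follows as the special case $k=2$ with the interface $\Phi = (\{s,t\},\{s,t\},\{\{s,t\}\})$, whose $\Phi$-tours are exactly the $s$-$t$~tours.
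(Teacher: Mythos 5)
Your proposal follows the same strategy as the paper: bootstrap from the strongly polynomial $4$-approximation of Theorem~\ref{thm:fourApprox}, iterate the Boosting Theorem~\ref{thm:iterImprovement} a constant (in $\alpha,\epsilon,k$) number of times, and track the multiplicative blow-up of the interface size through the recursion to keep every call on interfaces of constant size. The only substantive difference is in bounding the number of boosting rounds: you exploit the geometric contraction $\beta_i - 1 \le (1-\epsilon'/8)(\beta_{i-1}-1)$ while in the regime $\beta_{i-1} > (1+\epsilon')\alpha$, whereas the paper uses the weaker linear estimate $\beta_i \le \beta_{i-1} - \tfrac{\epsilon}{8\alpha}(\alpha-1)$; both yield a constant round count, and your version is slightly tighter. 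One cosmetic point: your $r$ should be defined as $\max\{0,\lceil\cdot\rceil\}$ to cover the degenerate case $(1+\epsilon')\alpha > 4$ where the logarithm is negative (there the $4$-approximation already suffices); and the implication ``hence $\beta_r - 1 \le 3(1-\epsilon'/8)^r$'' should be read as holding unless some $\beta_j$ has already dropped to $(1+\epsilon')\alpha$, in which case the monotonicity you implicitly invoke (once $\beta_j \le (1+\epsilon')\alpha$, all later $\beta$'s remain so) finishes the argument directly.
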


\begin{proof}
\leavevmode
We obtain the result by repeatedly applying the Boosting Theorem, i.e., Theorem~\ref{thm:iterImprovement}, to strengthen the $4$-approximation algorithm for $\Phi$-TSP guaranteed by Theorem~\ref{thm:fourApprox} through the $\alpha$-approximation algorithm for TSP which we assume to exist.
Without loss of generality $\epsilon \le 1$.
The Boosting Theorem will be repeated $i_{\max}$ many times with error parameter given by $\epsilon'=\sfrac{\epsilon}{\alpha}$, where
\begin{equation*}
i_{\max} := \left\lceil \frac{4- (\alpha+\epsilon)}{\alpha -1} \cdot \frac{8 \alpha}{\epsilon} \right\rceil\enspace.
\end{equation*}
Notice that $i_{\max}$ is constant, because both $\epsilon$ and $\alpha$ are fixed.

Let $\beta_0 := 4$ be the approximation factor for $\Phi$-TSP before applying the Boosting Theorem. We assume $\alpha \le 1.5 <\beta_0$ because Christofides' algorithm is a strongly polynomial $1.5$-approximation algorithm for TSP \cite{christofides_1976_worst-case, serdukov_1978_some}.
Let $i\in \{1, \dots, i_{\max}\}$.
After $i$ applications of the Boosting Theorem we obtain an algorithm $\mathcal{B}_{i}$ for $\Phi$-TSP with approximation ratio at most 
\begin{equation*}
\beta_i \coloneqq \max \left\{(1 + \epsilon')\alpha ,\ \beta_{i-1} - \frac{\epsilon'}{8}\cdot(\beta_{i-1} -1) \right\}
= \max \left\{\alpha + \epsilon ,\ \beta_{i-1} - \frac{\epsilon}{8 \alpha}\cdot(\beta_{i-1} -1) \right\}
\enspace,
\end{equation*} 
where we used $\epsilon' = \sfrac{\epsilon}{\alpha}$.
We therefore have 
\begin{align*}
 \beta_i &\le \max \left\{\alpha + \epsilon ,\ \beta_{i-1} - \frac{\epsilon}{8\alpha}(\alpha -1) \right\} 
 \le \max \left\{\alpha + \epsilon,\ \beta_{0} - i \cdot \frac{\epsilon}{8\alpha}(\alpha -1) \right\}\enspace,
\end{align*}
where the last inequality follows by induction on $i$.
Hence, 
\begin{align*}
 \beta_{i_{\max}} &\le \max \left\{\alpha + \epsilon ,\ 4 - i_{\max} \cdot \frac{\epsilon}{8\alpha}(\alpha -1) \right\} \\
 &= \max \left\{\alpha + \epsilon,\ 4 - \left\lceil \frac{4- (\alpha + \epsilon)}{\alpha -1} \cdot \frac{8\alpha}{\epsilon} \right\rceil \cdot \frac{\epsilon}{8\alpha}(\alpha -1) \right\} \\
 &= \alpha + \epsilon\enspace .
\end{align*}

Moreover, we define real numbers $k_i>0$ for $i\in \{0,\ldots, i_{\max}\}$ to upper bound the size of the interfaces we have to be able to handle after $i$ boosting steps. We want the $\beta_{i_{\max}}$-approximation algorithm $\mathcal{B}_{i_{\max}}$, obtained after $i_{\max}$ many applications of the Boosting Theorem, to handle interfaces of size $k_{i_{\max}}\coloneqq k$. Because $\mathcal{B}_{i_{\max}}$ was obtained by applying the Boosting Theorem to $\mathcal{B}_{i_{\max}-1}$, we obtain that $\mathcal{B}_{i_{\max}-1}$ needs to handle interfaces of size bounded by $k_{i_{\max}-1}\coloneqq \frac{9}{\epsilon'} \cdot k_{i_{\max}}$. Repeating this reasoning, we obtain upper bounds $k_i$ on the size of the interfaces that we have to handle with $\mathcal{B}_i$ that satisfy
\begin{equation*}
k_i \coloneqq \frac{9}{\epsilon'} \cdot k_{i+1} \quad \forall i\in \{i_{\max}-1,\ldots, 1,0\}\enspace,
\end{equation*}
which implies
\begin{equation*}
k_i= k\cdot \left(\frac{9}{\epsilon'}\right)^{i_{\max}-i} \qquad \forall i\in \{0,\ldots, i_{\max}\}\enspace.
\end{equation*}
Notice that because $i_{\max}$, $k$, and $\epsilon'$ are constant, also $k_0$ is constant.

For $i=i_{\max}$, the following claim implies Theorem~\ref{thm:phiTourMain} because $\beta_{i_{\max}}= \alpha + \epsilon$ and $i_{\max}$, $k_0$, and $\epsilon'$ are constant and 
$\Bscr_0$ is a strongly polynomial algorithm.
\begin{claim}\label{claim:inductionIterativeImprovement}
Let $c>0$ be the hidden constant in the big-$O$ notation in the runtime bound in Theorem~\ref{thm:iterImprovement}.
Let $i\in \{0,\dots, i_{\max}\}$ and let $\Ascr$ be the given $\alpha$-approximation algorithm for TSP.
Then there is a $\beta_i$-approximation algorithm $\Bscr_i$ for $\Phi$-TSP that, for every weighted graph $G$, runs in time at most
\begin{equation}\label{eq:claimIndIterImp}
 f_i(G) \coloneqq |V|^{i \cdot c \cdot\frac{k_0}{\epsilon'}} \cdot \left(\big. i \cdot f_{\Ascr}(G) + f_{\Bscr_0}(G)\right)
\end{equation}
on any instance $(G',\Phi)$, where $G'$ is a subgraph of $G$ and $|I_{\Phi}| \le k_i$.
\end{claim}

We prove the claim by induction on $i$.
By Theorem~\ref{thm:fourApprox} we have a strongly polynomial $\beta_0$-approximation algorithm $\Bscr_0$ for $\Phi$-TSP, 
implying the claim for $i=0$.

Now let $i\in\{1, \dots, i_{\max}\}$.
By our induction hypothesis, there exists a $\beta_{i-1}$-approximation algorithm $\Bscr_{i-1}$ that runs in time $f_{i-1}(G)$
on every weighted graph $G$ and every interface $\Phi$ of $G$  with $|I_{\Phi}| \le k_{i-1}$.
Applying Theorem~\ref{thm:iterImprovement} to the algorithms $\Ascr$ and $\Bscr_{i-1}$ then yields a
$\beta_i$-approximation algorithm $\Bscr_i$ for $\Phi$-TSP that runs on every graph $G$ and every interface $\Phi$ with $|I_{\Phi}|\le k_i$ in time at most 
\begin{align*}
 & |V|^{c\cdot \frac{k_i}{\epsilon'}}
\cdot \left( f_{\mathcal{A}}(G) + f_{\Bscr_{i-1}}\left(G,\, {\textstyle\frac{9\cdot k_{i}}{\epsilon'}} \right)\right) \\
& \le |V|^{c\cdot\frac{k_0}{\epsilon'}} \cdot \left(f_{\mathcal{A}}(G) + f_{\Bscr_{i-1}}\left(G,\, k_{i-1}\right)\right) \\
& \le |V|^{c\cdot\frac{k_0}{\epsilon'}} \cdot \left(\big. f_{\mathcal{A}}(G) + f_{i-1}(G)\right) \\
& = |V|^{c\cdot\frac{k_0}{\epsilon'}} \cdot f_{\mathcal{A}}(G) +  
|V|^{i \cdot c\cdot\frac{k_0}{\epsilon'}} \cdot \left(\big.(i-1) \cdot f_{\Ascr}(G) + f_{\Bscr_0}(G)\right) \\
& \le f_i(G)\enspace.
\end{align*}
\end{proof}

\section{\boldmath Strongly polynomial 4-approximation algorithm for $\Phi$-TSP}
\label{sec:4-Approx}

\fourApprox*

\begin{proof}
Let $\Phi=(I,T,\mathcal{C})$ be an interface of $G=(V,E)$.
By Lemma~\ref{lem:existencePhiTour}, we can assume that $\Phi$ is feasible.
The main component of our algorithm is to obtain a strongly polynomial $2$-approximation algorithm for the problem of finding a set (not a multi-set) 
$F\subseteq E$ of minimum length $\ell(F)$ that satisfies the following three conditions:
\begin{enumerate}[label=(\roman*),topsep=0.3em]
\item\label{item:apxConn} $(V,F)/I$ is connected;
\item\label{item:apxPart} $(V,F)$ connects all vertices within any $C\in \mathcal{C}$;
\item\label{item:apxTEven} each connected component of $(V,F)$ contains an even number of vertices in $T$.
\end{enumerate}
We will achieve this through an application of Jain's iterative rounding method for the Generalized Steiner Network Problem~\cite{jain_2001_factor} together with the elegant framework of Frank and Tardos~\cite{frank_1987_application} to transform certain polynomial-time algorithms into strongly polynomial ones.

Before we discuss the details of Jain's method in our setting together with the framework of Frank and Tardos, we first assume that we can indeed find in strongly polynomial time a set $F\subseteq E$ fulfilling~\ref{item:apxConn},~\ref{item:apxPart}, and~\ref{item:apxTEven} of length no larger than twice the length of a shortest edge set fulfilling these three conditions. 
Because a shortest $\Phi$-tour $\OPT$ must fulfill these conditions, and removing parallel edges does not destroy them, 
there is a subset of $\OPT$ that contains no parallel edges and satisifies~\ref{item:apxConn},~\ref{item:apxPart}, and~\ref{item:apxTEven}.
Therefore, $\ell(F) \leq 2\ell(\OPT)$.

Due to property~\ref{item:apxTEven}, the set $F$ contains a $T$-join $J\subseteq F$, which we can find in linear time through standard techniques. 
We then return $F \cupp (F\setminus J)$, which is indeed a $\Phi$-tour and satisfies
\begin{equation*}
\ell(F\cupp(F\setminus J)) \leq 2\ell(F) \leq 4\ell(\OPT)\enspace,
\end{equation*}
as desired. It remains to show how to obtain a strongly polynomial $2$-approximation algorithm for finding a shortest edge set fulfilling~\ref{item:apxConn},~\ref{item:apxPart}, and~\ref{item:apxTEven}.
We start by showing how an application of Jain's iterative rounding method leads to a polynomial-time, but not necessarily strongly polynomial, $2$-approximation algorithm. 

\medskip

To this end, observe that a set $F\subseteq E$ satisfies~\ref{item:apxConn},~\ref{item:apxPart}, and~\ref{item:apxTEven} if and only if
\begin{equation}\label{eq:genSteinerDescr}
|F\cap \delta(S)| \geq f(S) \qquad \forall S\subseteq V\enspace, 
\end{equation}
where the function $f:2^V\to \{0,1\}$ is defined as follows. For $S\subsetneq V$ with $S\neq \emptyset$, we set $f(S)=1$ if at least one of the following three properties holds:
\begin{enumerate}[label=(\alph*),topsep=0.3em]
\item\label{item:apxFConn} $S\cap I = \emptyset$;
\item\label{item:apxFPart} $\exists\: C\in \mathcal{C}$ s.t.~$S\cap C\neq \emptyset$ and $C\setminus S \neq \emptyset$;
\item\label{item:apxFTEven} $|S\cap T|$ is odd.
\end{enumerate}
Otherwise we set $f(S) = 0$. (In particular, $f(\emptyset)=f(V)=0$.) Indeed, the properties~\ref{item:apxFConn},~\ref{item:apxFPart}, and~\ref{item:apxFTEven} are just reformulations of~\ref{item:apxConn},~\ref{item:apxPart}, and~\ref{item:apxTEven}, respectively.

Jain's technique~\cite{jain_2001_factor} leads to a polynomial-time $2$-approximation algorithm for finding a shortest edge set $F$ satisfying~\eqref{eq:genSteinerDescr} if, first, the function $f$ is \emph{weakly supermodular}, which means
\begin{equation}\label{eq:weaklySupermodular}
f(X) + f(Y) \leq \max\left\{
f(X\cup Y) + f(X\cap Y),\,
f(X\setminus Y) + f(Y\setminus X)
\right\}\qquad \forall X,Y\subseteq V\enspace,
\end{equation}
and, second, one can separate over the polytope
\begin{equation}\label{eq:polSepJain}
P = \left\{x\in [0,1]^E : x(\delta(S)) \geq f(S) \;\,\forall S\subseteq V\right\}\enspace.
\end{equation}
in polynomial time.

We start by showing~\eqref{eq:weaklySupermodular}. Notice that~\eqref{eq:weaklySupermodular} clearly holds if $X\subseteq Y$, because in this case we have $\{X,Y\} = \{X\cup Y, X\cap Y\}$. Hence, in what follows, we always assume that $X\setminus Y \neq\emptyset$ and $Y\setminus X \neq \emptyset$.

Let $f_a$, $f_b$, and $f_c$ be the functions from $2^V$ to $\{0,1\}$ that take a value of $1$ precisely for sets $S\subsetneq V, S\neq \emptyset$ that satisfy~\ref{item:apxFConn},~\ref{item:apxFPart}, or~\ref{item:apxFTEven}, respectively. 
Hence, $f(S) = \max\{f_a(S), f_b(S), f_c(S)\}$. First, one can observe that each of the functions $f_a$, $f_b$, and $f_c$ is weakly supermodular.
Consider first $f_a$ and let $X,Y \subseteq V$ with  $X\setminus Y \neq\emptyset$ and $Y\setminus X \neq \emptyset$.
If $f_a(X) = 1$ then $f_a(X\setminus Y) =1$. Similarly, if $f_a(Y) =1$, then $f_a(Y\setminus X) = 1$. 
Hence, $f_a$ satisfies \eqref{eq:weaklySupermodular}.
The function $f_b$ corresponds to pairwise connectivity requirements and, as shown in~\cite{jain_2001_factor}, is therefore weakly supermodular. 
The function $f_c$ is easily seen to be a so-called \emph{proper} function, which means that $f_c(V)=0$, $f_c$ is symmetric, and $f_c(S_1\cup S_2) \leq \max\{f_c(S_1), f_c(S_2)\}$ for any pair of disjoint sets $S_1, S_2 \subseteq V$. 
Finally, it is well-known that any proper function is weakly supermodular (see~\cite{goemans_1994_improved}).

We say that a set $S\subsetneq V$ with $S\neq\emptyset$ is of type~\ref{item:apxFConn},~\ref{item:apxFPart}, or~\ref{item:apxFTEven}, if it satisfies~\ref{item:apxFConn},~\ref{item:apxFPart}, or~\ref{item:apxFTEven}, respectively.
Because each of the functions $f_a$, $f_b$, and $f_c$ is weakly supermodular, the inequality~\eqref{eq:weaklySupermodular} holds whenever the sets $X$ and $Y$ are of the same type, or if $X$ or $Y$ is none of the three types. Hence, it remains to consider sets $X$ and $Y$ of two different types among the types~\ref{item:apxFConn},~\ref{item:apxFPart}, and~\ref{item:apxFTEven}.
Let $S_a, S_b, S_c\subseteq V$ be sets of type~\ref{item:apxFConn},~\ref{item:apxFPart}, and~\ref{item:apxFTEven}, respectively. Thus, we need to show that~\eqref{eq:weaklySupermodular} holds for the three cases where $(X,Y)$ is either $(S_a,S_b)$, $(S_a,S_c)$, or $(S_b,S_c)$. Moreover, let $C\in \mathcal{C}$ be a set such that $S_b\cap C\neq \emptyset$ and $C\setminus S_b\neq \emptyset$, which exists because $S_b$ is of type~\ref{item:apxFPart}.

We start by considering the case $(X,Y)=(S_a,S_b)$. As discussed, we assume that $S_a\not\subseteq S_b$ and $S_b\not\subseteq S_a$; for otherwise,~\eqref{eq:weaklySupermodular} holds trivially. Notice that in this case we have
\begin{equation*}
2 = f(S_a) + f(S_b) \leq f(S_a\setminus S_b) + f(S_b \setminus S_a) = 2\enspace,
\end{equation*}
because $(S_a\setminus S_b)\cap I \subseteq S_a\cap I = \emptyset$, as well as $(S_b\setminus S_a)\cap I = S_b\cap I$ and $C\subseteq I$. Hence, $S_a\setminus S_b$ is of type~\ref{item:apxFConn} and $S_b\setminus S_a$ is of type~\ref{item:apxFPart}.

Consider now the case $(X,Y)=(S_a,S_c)$. Here, we have
\begin{equation*}
2 = f(S_a) + f(S_c) \leq f(S_a\setminus S_c) + f(S_c\setminus S_a) = 2\enspace,
\end{equation*}
because $(S_a\setminus S_c)\cap I \subseteq S_a\cap I = \emptyset$, implying that $S_a\setminus S_c$ is of type~\ref{item:apxFConn}, and $|(S_c\setminus S_a)\cap T| = |S_c \cap T|$ due to $S_a\cap T \subseteq S_a\cap I = \emptyset$, which implies that $S_c\setminus S_a$ is of type~\ref{item:apxFTEven}.

It remains to consider the case $(X,Y)=(S_b,S_c)$. We first observe that
\begin{align}
\max\left\{f(S_b\setminus S_c),\, f(S_b\cup S_c) 
\right\} &\geq 1\enspace,\text{ and}\label{eq:bcIneq1}\\
\max\left\{
f(S_b\cap S_c),\, f(S_c\setminus S_b)
\right\} &\geq 1\enspace,\label{eq:bcIneq2}
\end{align}
due to the following. Inequality~\eqref{eq:bcIneq1} holds because $S_b\cup S_c$ can be partitioned into $S_c$ and $S_b\setminus S_c$. Because $|S_c\cap T|$ is odd, either $S_b \cup S_c$ or $S_b\setminus S_c$ must also have an odd intersection with $T$ and is thus of type~\ref{item:apxFTEven}. Inequality~\eqref{eq:bcIneq2} follows from an analogous reasoning using the partition of $S_c$ into $S_b\cap S_c$ and $S_c\setminus S_b$.
Moreover, we have
\begin{align}
\max\left\{
f(S_b\setminus S_c),\, f(S_b\cap S_c)
\right\} &\geq 1\enspace, \text{ and}\label{eq:bcIneq3}\\
\max\left\{
f(S_b\cup S_c),\, f(S_c\setminus S_b)
\right\} &\geq 1\enspace, \label{eq:bcIneq4}
\end{align}
because $S_b$ is of type~\ref{item:apxFPart}, i.e., $S_b\cap C\neq\emptyset$ and $C\setminus S_b\neq \emptyset$. Indeed, even without any assumptions on $S_c\subseteq V$, we have that either $S_b\setminus S_c$ or $S_b\cap S_c$ is also of type~\ref{item:apxFPart}. The same holds for either $S_b\cup S_c$ or $S_c\setminus S_b$. Among the four expressions $f(S_b\cup S_c)$, $f(S_b\cap S_c)$, $f(S_b\setminus S_c)$, and $f(S_c\setminus S_b)$, consider any one of minimum value and sum up the two inequalities among~\eqref{eq:bcIneq1},~\eqref{eq:bcIneq2},~\eqref{eq:bcIneq3}, and~\eqref{eq:bcIneq4} containing that expression. This gives the desired result. For example, if $f(S_b\setminus S_c)$ achieves minimum value among the four, then~\eqref{eq:bcIneq1} implies $f(S_b\cup S_c) = 1$ and~\eqref{eq:bcIneq3} implies $f(S_b\cap S_c)=1$. Hence,
\begin{equation*}
2 = f(S_b) + f(S_c) \leq f(S_b\cup S_c) + f(S_b\cap S_c) = 2\enspace,
\end{equation*}
as desired. This completes the proof that $f$ is weakly supermodular.

To apply Jain's method, it remains to show that we can separate over $P$, and we will in fact give a strongly polynomial algorithm.
Given $y\in [0,1]^E$, we will either show that all constraints $y(\delta(S))\geq f(S)$ for $S\subseteq V$ are fulfilled or return one of these constraints that is violated. 
Notice that, because $y\geq 0$, a constraint $y(\delta(S))\geq f(S)$ can only be violated if $f(S)=1$, i.e., $S$ is either of type~\ref{item:apxFConn},~\ref{item:apxFPart},~\ref{item:apxFTEven}.
Hence, we can check these constraints for each type separately.

Whether there is a violated constraints of type~\ref{item:apxFConn} reduces to finding a minimizer of
\begin{equation*}
\min\left\{ y(\delta(S)) : S\subseteq V \text{ with } S\cap I=\emptyset \right\}\enspace.
\end{equation*}
This can be solved through a global minimum cut algorithm applied to the graph $G/I$ with edge weights $y$. Indeed, this either leads to a cut $S$ with $S\cap I=\emptyset$ as desired or one where $I\subseteq S$, in which we can replace $S$ by $V\setminus S$.

To check whether there is a violated constraint of type~\ref{item:apxFPart} reduces to
\begin{equation*}
\min\left\{
y(\delta(S)) : \exists C\in \mathcal{C} \text{ with } S\cap C\neq \emptyset \text{ and } C\setminus S \neq \emptyset
\right\}\enspace.
\end{equation*}
This can be solved by performing the following for all $C\in \mathcal{C}$ with $|C|\geq 2$. Number the vertices in $C$ arbitrarily $C=\{c_1,\ldots, c_k\}$, and solve a minimum $c_i$-$c_{i+1}$ cut problem in $G$ with edge weights $y$ for each $i\in \{1,\ldots, k-1\}$. If any of these $s$-$t$ cut problems leads to a cut of value strictly smaller than $1$, then the minimizing cut corresponds to a violated constraint. Otherwise, there is no violated constraints of type $y(\delta(S))\geq f(S)$ for any set $S$ of type~\ref{item:apxFPart}.

Finally, checking whether there is a violated constraint of type~\ref{item:apxFTEven} reduces to
\begin{equation*}
\min \left\{
y(\delta(S)) : S\subseteq V, |S\cap T| \text{ is odd}
\right\}\enspace.
\end{equation*}
This is a minimum weight $T$-cut problem, for which strongly polynomial algorithms are well known (see, e.g.,~\cite{schrijver_2003_combinatorial}).

In summary, the separation problem over $P$ can be solved in strongly polynomial time, and we can therefore apply Jain's technique as claimed.

It remains to show that the overall algorithm can be transformed into a strongly polynomial one.
This is a consequence of the framework of Frank and Tardos~\cite{frank_1987_application}
(see also \cite{groetschel_1993_geometric,eisenbrand_2010_integer, jain_2001_factor} for similar applications).
More precisely, the only step that is not strongly polynomial in Jain's iterative rounding method is solving linear programs on faces of $P$ with objective function given by $\ell$.
Notice that the coefficients in the constraints describing $P$ are all $0$ or $1$.
Hence, they have small encoding length. For such cases, Frank and Tardos~\cite{frank_1987_application} show how $\ell$ can be replaced (in strongly polynomial time) by another objective $\hat{\ell}$ of encoding length polynomial in the dimension $|E|$ of the problem such that the set of optimal solutions over any polytope in $|E|$ dimensions with constraints of small encoding length is the same for the two objectives $\ell$ and $\hat{\ell}$. 
Hence, one can find an optimal linear programming solution with respect to $\hat{\ell}$ instead of $\ell$, whenever a linear program has to be solved in Jain's procedure.
\end{proof}

\section{Conclusions and open problems}\label{sec:conclusions}

We showed that given a polynomial-time $\alpha$-approximation algorithm for TSP 
we can obtain a polynomial-time $(\alpha + \epsilon)$-approximation algorithm for Path TSP.
Feige and Singh~\cite{feige_2007_improved} proved a similar kind of result for the asymmetric traveling salesman problem (ATSP):
given a polynomial-time $\alpha$-approximation algorithm for ATSP, there is a polynomial-time $(2\alpha + \epsilon)$-approximation algorithm for its path version.
A natural question is whether our techniques can be used to improve on their result and avoid losing a factor of two in the approximation ratio.

For the Asymmetric Path TSP, the relatively simple dynamic program (sketched in Section~\ref{sec:high_level}) still works and 
could be used to reduce to the case where the distance $d(s,t)$ from $s$ to $t$ is not much more than $\sfrac{1}{2}\cdot \ell(\OPT)$.
(We get $\sfrac{1}{2}$ instead of $\sfrac{1}{3}$ because a cut can contain two forward edges and one backward edge, and backward edges can 
 belong to many cuts.)
To make further progress, we might again try to guess edges also in cuts in which $\OPT$ contains a larger, but constant number of edges.
However, even if the distance $d(s,t)$ is very small, the distance $d(t,s)$ from $t$ to $s$ could be large.
In this case we do not know how to reduce to ATSP or guess edges of significant length via dynamic programming.
Another obstacle is the following: Our approach for reducing Path TSP to TSP required a constant-factor approximation algorithm for $\Phi$-TSP.
Thus, for the asymmetric case one would probably need a suitable constant-factor approximation algorithm
for a directed version of $\Phi$-TSP, and we do not know how to obtain this.

A special case of $\Phi$-TSP that is more general than Path TSP is the $T$-tour problem.
Here $I_{\Phi} =T_{\Phi}$ is the given set $T$ and $\Cscr_{\Phi} = \{T\}$.
None of the recent improvements for Path TSP seems to extend to general $T$-tours beyond constant $|T|$,
so Seb\H{o}'s $\sfrac{8}{5}$-approximation~\cite{sebo_2013_eight-fifth} remains the best that we know.
Another question is how well $\Phi$-TSP can be approximated in general.
We showed an approximation ratio of four, but a better ratio might be possible.

\bibliographystyle{plain}

\end{document}